\newif\ifpublish
\newcommand{\cmark}{\ding{51}}%
\newcommand{\xmark}{\ding{55}}%
    \newcommand{\alberto}[1]{}
    \newcommand{\dm}[1]{}
    \newcommand{\dmsuggest}[1]{}
    \newcommand{\lef}[1]{}
    \newcommand{\george}[1]{}
    \newcommand{\alberto}[1]{\todo[inline, color=yellow!40]{\textbf{Alberto:} #1}}
    \newcommand{\dm}[1]{\todo[inline, color=green!40]{\textbf{Dahlia:} #1}}
    \newcommand{\dmsuggest}[1]{\todo[inline, color=magenta!40]{\textbf{DM suggestion:} #1}}
    \newcommand{\lef}[1]{\todo[inline, color=orange!40]{\textbf{Lefteris:} #1}}
    \newcommand{\george}[1]{\todo[inline, color=blue!40]{\textbf{George:} #1}}
\newcommand{\para}[1]{\smallskip\noindent\textbf{#1.}}
\newtheorem{assumption}{Assumption}
\newcommand{\algfontsize}{\scriptsize}
\newcommand{\sysname}{Pilotfish\xspace}
\newcommand{\sys}{\sysname}
\newcommand{\primary}{Primary\xspace}
\newcommand{\seqworker}{SequencingWorker\xspace}
\newcommand{\execworker}{ExecutionWorker\xspace}
\newcommand{\seqworkers}{SequencingWorkers\xspace}
\newcommand{\execworkers}{ExecutionWorkers\xspace}
\newcommand{\ew}[2]{\ensuremath{\textit{EW}_{#1,#2}}}
\newcommand{\peers}{\textbf{peers}\xspace}
\newcommand{\readto}{\textbf{read-to}\xspace}
\newcommand{\readfrom}{\textbf{read-from}\xspace}
\newcommand{\stabletxid}{\textbf{stable-txid}\xspace}
\newcommand{\self}{\text{\textbf{Self}}\xspace}
\newcommand{\none}{\ensuremath{\perp}}
\newcommand{\delete}{\text{\textbf{Delete}}\xspace}
\newcommand{\R}{\ensuremath{\mathcal{R}}}
\newcommand{\W}{\ensuremath{\mathcal{W}}}
\newcommand{\Read}{\ensuremath{\textsf{R}}}
\newcommand{\Write}{\ensuremath{\textsf{W}}}
\newcommand{\object}{\ensuremath{o}\xspace}
\newcommand{\objectid}{\ensuremath{oid}\xspace}
\newcommand{\objectversion}{ \textsf{Version}}
\newcommand{\tx}{\textsf{Tx}\xspace}
\newcommand{\txidx}{\textsf{TxIdx}\xspace}
\newcommand{\batch}{\textsf{Batch}\xspace}
\newcommand{\batchid}{\textsf{BatchId}\xspace}
\newcommand{\batchidx}{\textsf{BatchIdx}\xspace}
\newcommand{\txaugmented}{\textsf{Tx+}\xspace}
\newcommand{\executedIdx}{\textbf{j}\xspace}
\newcommand{\executed}{\textbf{E}\xspace}
\newcommand{\loadedIdx}{\textbf{i}\xspace}
\newcommand{\loaded}{\textbf{B}\xspace}
\newcommand{\objectsreceived}{\textbf{R}\xspace}
\newcommand{\checkpoints}{\textbf{C}\xspace}
\newcommand{\sentmessages}{\textbf{S}\xspace}
\newcommand{\checkpointsIdx}{\textbf{i}\xspace}
\newcommand{\checkpointsnotification}{\textbf{R}\xspace}
\newcommand{\propose}{\textsf{ProposeMessage}\xspace}
\newcommand{\ready}{\textsf{ReadyMessage}\xspace}
\newcommand{\result}{\textsf{ResultMessage}\xspace}
\newcommand{\updatepropose}{\textsf{UpdateProposeExec}\xspace}
\newcommand{\checkpointed}{\textsf{CheckpointedMessage}\xspace}
\newcommand{\recover}{\textsf{Recover}\xspace}
\newcommand{\recoverok}{\textsf{RecoverOk}\xspace}
\newcommand{\notifysync}{\textsf{NotifySync}\xspace}
\newcommand{\sync}{\textsf{Sync}\xspace}
\newcommand{\synccomplete}{\textsf{SyncComplete}\xspace}
\newcommand{\newreader}{\textsf{NewReader}\xspace}
\newcommand{\newreaderok}{\textsf{NewReaderOk}\xspace}
\newcommand{\recoverabort}{\textsf{Abort}\xspace}
\newcommand{\augmentedtx}{\textsf{AugTx}\xspace}
\newcommand{\batchdb}{\textsc{Batches}\xspace}
\newcommand{\objectsdb}{\textsc{Objects}\xspace}
\newcommand{\pendingdb}{\textsc{Pending}\xspace}
\newcommand{\txdb}{\textsc{ReceivedObj}\xspace}
\newcommand{\missingdb}{\textsc{Missing}\xspace}
\newcommand{\readset}[1]{\R(#1)\xspace}
\newcommand{\writeset}[1]{\W(#1)\xspace}
\newcommand{\exec}[1]{\emph{exec}(#1)\xspace}
\newcommand{\len}[1]{\emph{len}(#1)\xspace}
\newcommand{\one}{\ding{202}\xspace}
\newcommand{\two}{\ding{203}\xspace}
\newcommand{\three}{\ding{204}\xspace}
\newcommand{\four}{\ding{205}\xspace}
\newcommand{\five}{\ding{206}\xspace}
\newcommand{\set}[1]{\ensuremath{\{#1\}}} 
\let\llncssubparagraph\subparagraph
\let\subparagraph\paragraph
\titlespacing*{\section}{0pt}{*1}{*1}
\titlespacing*{\subsection}{0pt}{*1}{*1}
\titlespacing*{\subsubsection}{0pt}{*1}{*1}
\let\subparagraph\llncssubparagraph
\begin{document}
\title{
    \sysname: Distributed Execution for\\Scalable Blockchains
}

\author{
    Quentin Kniep\inst{1}\and
    Lefteris Kokoris-Kogias\inst{2,3}\and
    Alberto Sonnino\inst{3,4}\and
    Igor~Zablotchi\inst{3}\and
    Nuda Zhang\inst{5}    
}
\authorrunning{Q. Kniep, L. Kokoris-Kogias, A. Sonnino, I. Zablotchi, N. Zhang}

\institute{ETH Zurich \email{qkniep@ethz.ch}\and
IST Austria\and
Mysten Labs \email{\{lefteris,alberto,igor\}@mystenlabs.com}\and
University College London (UCL)\and
University of Michigan \email{nudzhang@umich.edu}
}



\maketitle
\begin{abstract}

Scalability is a crucial requirement for modern large-scale systems, enabling elasticity and ensuring responsiveness under varying load. While cloud systems have achieved scalable architectures, blockchain systems remain constrained by the need to over-provision validator machines to handle peak load. This leads to resource inefficiency, poor cost scaling, and limits on performance. To address these challenges, we introduce \sysname, the first scale-out transaction execution engine for blockchains. \sysname enables validators to scale horizontally by distributing transaction execution across multiple worker machines, allowing elasticity without compromising consistency or determinism. It integrates seamlessly with the lazy blockchain architecture, completing the missing piece of execution elasticity. To achieve this, \sysname tackles several key challenges: ensuring scalable and strongly consistent distributed transactions, handling partial crash recovery with lightweight replication, and maintaining concurrency with a novel versioned-queue scheduling algorithm.
Our evaluation shows that \sysname scales linearly up to at least eight workers per validator for compute-bound workloads, while maintaining low latency. By solving scalable execution, \sysname brings blockchains closer to achieving end-to-end elasticity, unlocking new possibilities for efficient and adaptable blockchain systems.

\end{abstract}

\section{Introduction}

A crucial property required by modern large-scale computing is \textit{scalability}~\cite{scalability}, which refers to a system's ability to dynamically adapt its performance as load changes, ensuring that the system remains responsive despite varying load. Scalability is fundamental because it is an essential requirement for elasticity~\cite{elasticity}, and thus in turn for a good user experience (e.g., responsiveness) at a sustainable cost. Without elasticity, systems either risk being overwhelmed during peak loads, leading to poor performance and user dissatisfaction, or they incur excessive costs during low-load periods by maintaining unnecessary resources.

Over the past decades, significant effort has been devoted to developing scalable software architectures for cloud-based systems~\cite{elasticitySOTA}. 
However, the situation is starkly different for blockchain systems. Among core blockchain tasks, transaction execution is particularly challenging with respect to scalability. The current dominant approach to transaction execution in blockchain involves ensuring that validator machines are sufficiently powerful to handle peak loads~\cite{solanareqs,suireqs,aptosreqs}. This approach is scalable up to a point, but has limitations: (1)~it leads to resource inefficiency, as validators remain over-provisioned during low-load periods; (2)~it has a resource ceiling, as even the most powerful single machine will eventually be insufficient if the load is high enough; (3)~it has poor cost scaling, as high-end machines are expensive and limited to a few vendors.

In response to these challenges, we introduce \sysname, the first scale-out transaction execution engine for blockchain. The core idea of \sysname is to run each validator on multiple mutually trusting machines or workers, as opposed to running a single machine per validator. Each worker is only responsible for a subset of the validator’s state, and only executes a subset of transactions. This approach opens the way toward elasticity, as it allows scaling each validator out and in as the load increases and decreases. 

\sysname is designed to integrate seamlessly with the lazy blockchain architecture, which is increasingly used by modern blockchains~\cite{celestia, lazyledger, prism,stefoexecuting, narwhal, bullshark, snap-and-chat, ebb-and-flow, mysticeti, dumbo-ng}: as of the time of writing, lazy blockchains account for over \$$20$ billion in market capitalization~\cite{coinmarketcap}.%
\ifpublish%
\else%
\footnote{Due to space constraints, references beyond \cite{coinmarketcap} are deferred to the full version of our paper~\cite{fullversion}.} 
\fi 
Lazy blockchains separate the problems of transaction dissemination, ordering, and execution. They provide a scalable solution to two of the three core blockchain tasks: dissemination (ensuring that client transactions are available at a quorum of validators) and ordering (establishing a reliable total order over transactions, also known as consensus). However, as mentioned above, state-of-the-art lazy blockchains do not solve the \textit{execution scalability} problem: their execution is still designed to run on a single machine. 

\sysname must address several challenges to achieve this. First (i), it must solve the distributed transaction problem, since the validator state is sharded across multiple worker machines, and transactions may span multiple shards. This is especially challenging since blockchains need to guarantee strong consistency (serializability) and determinism, without compromising on latency or throughput.
Most existing approaches to distributed transactions cannot directly be applied to our setting: (1)
the two-phase commit approach~\cite{two-phase} guarantees strong consistency but is not scalable; (2) the relaxed consistency approach~\cite{cassandra, pnuts} is scalable but sacrifices strong consistency, which is crucial for blockchain; (3) the restricted transaction approach~\cite{elastras, sinfonia} is both scalable and strongly consistent, but sacrifices transaction generality. The most promising existing solution for our needs is that of deterministic databases~\cite{calvin}, which balance scalability, consistency, and transaction generality. We borrow techniques from distributed databases and leverage the fact that in lazy blockchains, consensus precedes—and is decoupled from—execution, so by execution time, validators have agreed on a permanent ordering of transactions. 


Secondly (ii), \sysname needs to tolerate workers crashing and recovering. 
To address this, \sysname maintains sufficient state among workers as \emph{checkpoints} to allow recovering machines to catch up with the rest. A straightfoward solution would be to resort to strong (and expensive) consensus-based replication techniques among workers internal to the validator~\cite{calvin,lamport2001paxos,ongaro2015raft}. However \sysname avoids such overhead by observing that consistency and availability of the commit sequence are already provided by the blockchain protocol. Thus, \sysname optimistically does lightweight, best-effort replication between workers, and relies on recovery from other validators only if optimistic replication fails.

Finally (iii), \sysname aims to support a simpler programming model where transactions may only partially specify their input read and write set (e.g., as required for Move~\cite{sui-move}). This, however, creates an additional challenge for \sysname, as objects that might be accessed dynamically at execution time can be located in different workers. This means that objects cannot be overwritten until all previous transactions have finished, effectively reverting to sequential execution and enforcing write-after-write dependencies. This limitation would reduce the parallelizability of the workload. 
\sysname circumvents this issue by leveraging its enforced determinism, allowing in-memory execution to be lost and safely recovered in the event of crashes. \sysname relies on a novel versioned-queue scheduling algorithm that allows transactions with write-after-write conflicts to execute concurrently. We couple this with our crash recovery mechanism, which only persists consistent states. As a result, upon a crash, \sysname simply re-executes a few transactions, but thanks to the deterministic nature of the blockchain this does not pose any inconsistency risks.

We evaluate \sysname by studying its latency and throughput, while varying the number of workers per validator, the computational intensity, and the degree of contention of the workload.
We find that \sysname scales linearly to at least $8$ workers per validator when the workload is compute-bound, while keeping latency under $50$ ms.

\para{Discussion}
While this work focuses on a scalable protocol for distributed blockchain transactions, achieving full elasticity poses additional challenges, particularly in dynamically scaling up and down workers and repartitioning objects. These aspects, while critical to practical implementations, are well-explored in existing literature on elastic systems: dynamic workload partitioning\cite{accordion, dynamo}, load-aware worker scaling\cite{aws-autoscaling, accordion}, and online object migration~\cite{accordion, pnuts}.


\section{System Model}
\label{sec:model}

\sysname implements a blockchain \textit{validator}, composed internally of a black-box \textit{Primary} machine, as well as a set of worker machines, simply called \textit{workers}. 
The Primary is responsible for communicating with (the Primaries of) other validators in order to agree on an ordered sequence of transactions. The workers collectively execute the ordered sequence of transactions and update the validator's state accordingly. 

\para{Objects and Transactions} \sysname validators replicate the state of the blockchain represented as a set of \emph{objects}~\cite{objectmodel}. Transactions can read and write (mutate, create, and delete) objects, and reference every object by its unique identifier $\objectid$.
A transaction is an authenticated command that references a set of objects (by their unique identifier $\objectid$), and an entry function into a smart contract call identifying the execution code. The transaction divides the objects it references into two disjoint sets, (i) the read set $\R$ referencing input objects that the transaction may only read, and (ii) the write set $\W$ referencing objects that the transaction may mutate.
In most cases, the identifier $\objectid$ of each object of the read and write sets can be computed using only the information provided by the transaction, without the need to execute it or access any object's data. In these cases, \sysname has complete knowledge of the read and write sets of the transaction. However, \sysname also supports dynamic accesses (\Cref{sec:dynamic}) where the read and write set of a transaction is discovered only upon attempting to execute the transaction, adopting the execution model of Sui~\cite{sui-lutris}.

\para{Network Model}
We assume that the Primary and workers communicate by sending messages over the network through point-to-point connections. 
We assume that the network is fully connected and reliable: each message sent by a correct process (i.e., non-faulty machine) to a correct process is eventually delivered. Furthermore, we assume authenticated channels: the receiver of a message is aware of the sender's identity.

\para{Synchrony Model} We consider the standard partially synchronous environment~\cite{dwork1988consensus}.
Specifically, there exists an unknown Global Stabilization Time (GST) and a positive known duration $\delta$ such that message delays are bounded by $\delta$ after GST: a message sent at time $\tau$ is received by time $\max(\tau, \text{GST}) + \delta$.
It has been shown that in partial synchrony, crash failures can eventually be perfectly detected~\cite{chandra1996, crash-recovery-fd}, thus we assume an eventually perfect failure detector.

\para{Threat Model} \label{sec:threat-model}
We assume that each validator is controlled by a single entity, or by a set of mutually trusting entities. This implies that the Primary and workers trust each other, and we therefore only consider crash failures for components internal to the validator (a validator as a whole may still exhibit Byzantine behavior in its interaction with other validators, but tolerating such failures is handled by the blockchain protocol, which is outside the scope of this work). For this reason, we do not require any cryptography assumptions, other than the authenticated channels.\footnote{
Our network, synchrony, trust and cryptography assumptions only apply \textit{internally} to the validator. By contrast, the outer blockchain protocol, which governs how validators interact with each other, may make entirely different assumptions on synchrony and types of failures and thus may require stronger cryptography primitives.} 
For pedagogical reasons, for the first part of the paper we assume that workers cannot fail. Later in \Cref{sec:cft-short}, we expand each logical worker to have a set of $n_e=2f_e+1$ replicas such that as long as for each worker there are $f_e+1$ replicas available the system remains live and safe.\footnote{Here, $f_e$ refers to the number of replicas that may crash per logical worker, \textit{internally to the validator}; in particular, $f_e$ may be different from the number of validators in the blockchain that may be Byzantine (usually denoted by $f$).} 
In case this threshold is breached the validator can still synchronize with the rest of the validators of the lazy blockchain through a standard recovery procedure~\cite{narwhal} that is out of scope.

\para{Core Properties} \label{sec:properties}
\ifpublish
\Cref{sec:proofs}
\else 
The full version of our paper~\cite{fullversion}
\fi 
proves that \sysname guarantees serializability, determinism, and liveness. Intuitively, serializability means that \sysname execution produces the same result as a sequential execution. Determinism means that every correct validator receiving the same sequence of transactions performs the same state transitions. Liveness means that all correct validators receiving a sequence of transactions eventually execute it.

\begin{definition}[\sysname Serializability] \label{def:serializability}
    A correct validator executing the sequence of transactions $[\tx_1, \dots, \tx_n]$ holds the same state as if the transactions were executed sequentially, in the given order.
\end{definition}

\begin{definition}[\sysname Determinism] \label{def:determinism}
    No two correct validators that executed the same sequence of transactions $[\tx_1, \dots, \tx_n]$ have different states.
\end{definition}

\begin{definition}[\sysname Liveness] \label{def:liveness}
    Correct validators receiving the sequence of transactions $[\tx_1, \dots, \tx_n]$ eventually execute all transactions $\tx_1, \dots, \tx_n$.
\end{definition}

\section{Existing Designs \& \sysname Overview} 
\label{sec:overview}

\subsection{Previous Designs}

Previous designs for scaling execution in lazy blockchains fall into two categories. The first is parallel execution~\cite{solana-vm,gelashvili2023blockstm,sui-lutris}, where each validator uses a high-end server to handle increased load. This approach lacks elasticity: the cost of running a powerful validator remains high regardless of actual load, leading to inefficiency during low usage and performance ceilings due to finite server resources.

The second category employs inter-validator sharding~\cite{omniledger,chainspace,avarikioti2023divide, byzcuit, byshard, stefoexecuting, cerberus, shaper, DangDLCLO19}, in which the blockchain state is split into shards, with a subset of the validators handling each shard in parallel. However, inter-validator sharding has limitations related to security and performance. Firstly, sharding requires a sampling process from the full validator set to subsets of validators per shard, such that each shard has more than 2/3 honest members. These systems are thus less robust to adversarial attacks. 
For example Omniledger~\cite{omniledger} assumes a 25\% Byzantine adversary in order to provide sufficient 34\% security in all the sub-sampled shards. 
In the same vein, the adversary's adaptivity should be limited to once an epoch, as otherwise the adversary could target all its power in a single shard and compromise it. 
Finally, sharding is also challenging from a performance perspective, as transactions that span multiple shards require expensive and slow Byzantine-resilient atomic commit protocols~\cite{byzcuit}.



\subsection{Intravalidator Sharding with \sysname}
Through \sysname, we instead propose \textit{intravalidator sharding}, as illustrated in \Cref{fig:components}. 
Each validator consists of multiple \emph{\seqworkers} that collect transaction data based on the commit sequence from the \primary, similar to transaction dissemination workers in lazy blockchains like Tusk~\cite{narwhal}, Bullshark~\cite{bullshark}, and Shoal~\cite{shoal}.
\sysname innovates by distributing transaction execution on several \emph{\execworkers}. Each \execworker stores a subset of the state, executes a subset of the transactions, and contributes its memory and storage to the system.

In \sysname, the \primary only manages metadata (agreement on a sequence of batch digests) allowing it to scale to large volumes of batches and transactions~\cite{narwhal}. Actual batch storage is distributed among a potentially large number of \seqworkers. The key insight is that transaction execution is also distributed among numerous \execworkers, enabling horizontal scaling. As workers are added, the capacity to store state and process transactions increases.

\begin{figure}[t]
    \centering
    \includegraphics[width=0.7\textwidth]{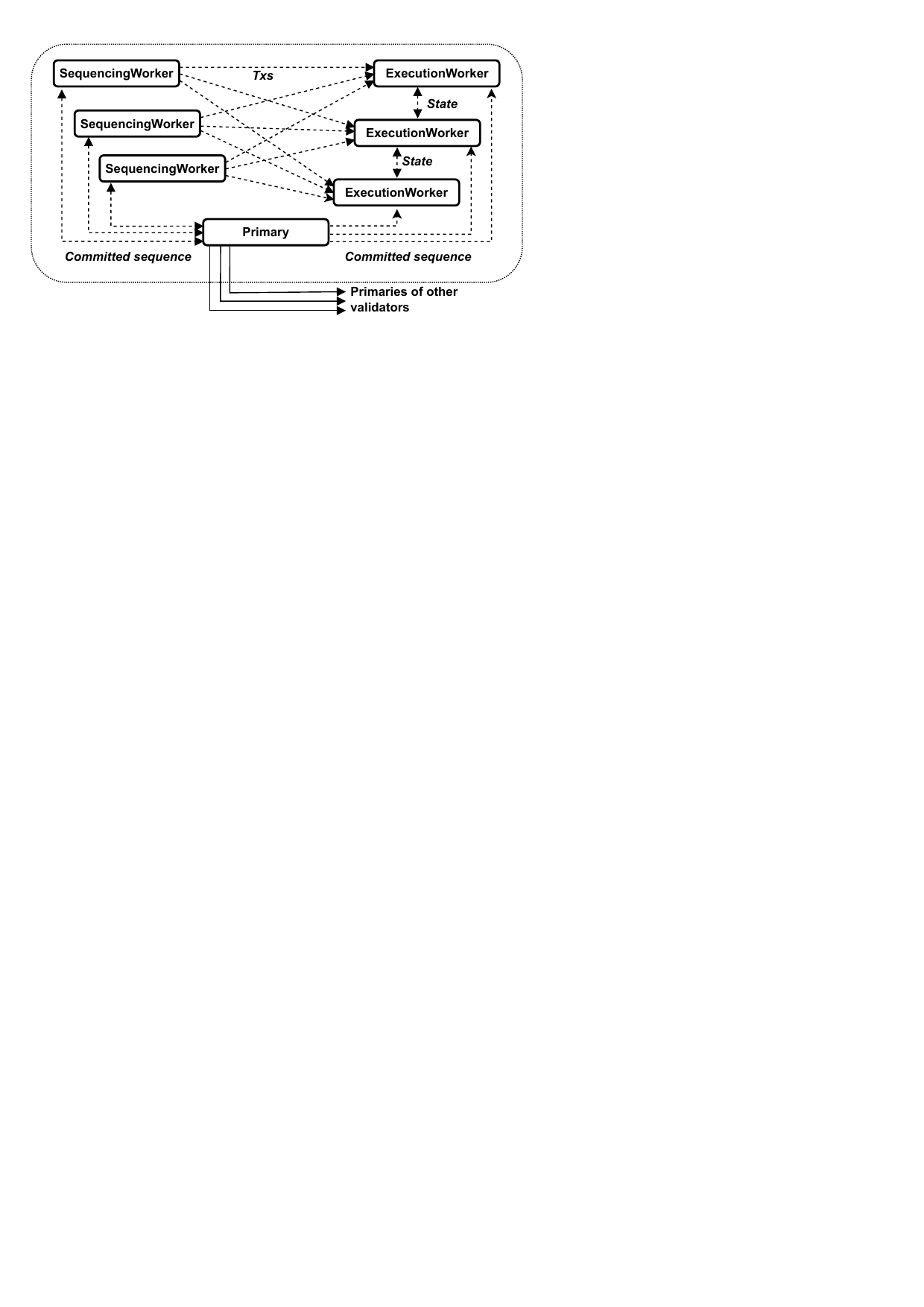}
    \caption{
        \sysname validator's components. Each validator is composed of several \seqworkers to fetch and persist the client's transaction, one \primary to run Byzantine agreement on metadata, and several \execworkers to execute transactions. Each component may run on dedicated machines or be collocated with other components. Dotted arrows indicate internal messages exchanged between the components of the validator (localhost or LAN) and solid arrows indicate messages exchanged with the outside world (WAN).
    }
    \label{fig:components}
\end{figure}

\para{Sharding strategy}
\sysname uses its \seqworkers and its \execworkers to operate two levels of sharding.
(i) \sysname shards transaction data among its \seqworkers. Transactions batches (and thus clients' transactions) are assigned to \seqworkers deterministically based on their digest. \seqworker can be seen as architecturally equivalent to the worker machines used by lazy blockchains to decouple dissemination (performed by workers) from ordering (performed by the \primary).
All transactions of a batch are persisted by the same \seqworker. Each \seqworker maintains a key-value store
$\batchdb[\batchid] \rightarrow \batch$ 
mapping the batch digests $\batchid$ to each batch handled by the \seqworker.
(ii) Additionally, \sysname shards its state among its \execworkers. Each \execworker is responsible for a disjoint subset of the objects in the system (composing the state); objects are assigned to \execworkers based on their collision-resistant identifier $\objectid$. Every object in the system is handled by exactly one (logical) \execworker.

\section{The \sysname System} \label{sec:design}

\Cref{fig:overview} shows the transaction life cycle in \sysname, from sequencing to execution.
The \primary sends the committed sequence to all  \seqworkers and \execworkers~(\one). Below, we outline the core \sysname protocol at steps \two, \three, \four, and \five of \Cref{fig:overview}. 
\ifpublish
Appendix~\ref{app:algorithms} presents the full algorithms.
\else
Due to space limitations, our full algorithms are deferred to the full version of our paper~\cite{fullversion}.
\fi

\begin{figure*}[t]
    \centering
    \includegraphics[width=.8\textwidth]{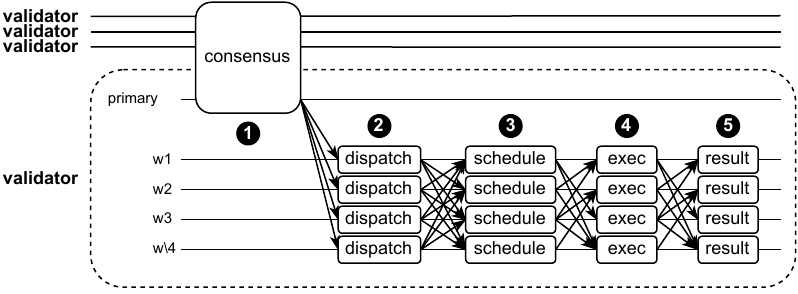}
    \caption{
        \sysname overview. Every validator runs with 5 machines: one machine running the \primary and 4 machines running workers. Each worker machine collocates 1 \seqworker and 1 \execworker. The \primary runs a Byzantine agreement protocol to sequence batch digests (\one). \seqworkers receive the committed sequence and load the data of the corresponding transactions from their storage (\two). Each \execworkers receiving these transactions assigns a lock to each object referenced by the transaction to schedule their execution (\three). A deterministically-selected \execworker eventually receives the object's data referenced by the execution and executes it (\four). Finally, the \execworker signals all \seqworkers to update their state with the results of the transaction's execution (\five).
    }
    \label{fig:overview}
    \vspace{1pt}
\end{figure*}

\execworkers maintain the following key-value stores:
\begin{itemize}
    \item $\objectsdb[\objectid] \rightarrow \object$ making all the objects handled by the \execworker accessible by their unique identifier.
    \item $\pendingdb[\objectid] \rightarrow [(op, [\tx])]$ mapping each object to a list of pending transactions $[\tx]$ referencing $\objectid$ in their read or write set and that are awaiting execution. The operation $op$ indicates whether the transaction may only Read ($\Read$) the object or whether it may also write ($\Write$) it. This map is used as a `locking' mechanism to track dependencies and determine which transactions can be executed in parallel. Entries relating to a transaction are removed from this map after its execution.
    \item $\missingdb[\objectid] \rightarrow [\tx]$ mapping objects that are missing from $\objectsdb$ to the transactions that reference them. It is used to track transactions that cannot (yet) be executed because they reference objects that are not yet available. It is cleaned after execution.
\end{itemize}

\para{Step~\two: Dispatch transactions}
At a high level, each \seqworker $i$ observes the commit sequence and loads from storage all the batches referenced by the committed sequence that they hold in their $\batchdb_i$ store (and ignores the others). The \seqworker then parses each transaction of the batch (in the order specified by the batch) to determine which objects it contains. At the end of this process, \seqworker $i$ composes one $\propose$ for each \execworker $j$ of the validator:
$\propose_{i,j} \gets (\batchid, \batchidx, \\T)$.
The message contains the batch digest $\batchid$, an index $\batchidx$ uniquely identifying the batch in the global committed sequence and a list of transactions $T$ referencing at least one object handled by worker $j$. If no transactions affect worker $j$, the worker still receives an empty message so it can proceed. 

\para{Step~\three: Schedule execution}
Each \execworker $j$ awaits one $\propose$ from each \seqworker. It then parses every transaction $\tx$ included (in order) and extracts objects in $\tx$'s read set $\R_j$ and write set $\W_j$ managed by \execworker $j$ (and ignores the other objects that it does not handle).

\begin{figure}[t]
    \centering
    \includegraphics[width=0.75\textwidth]{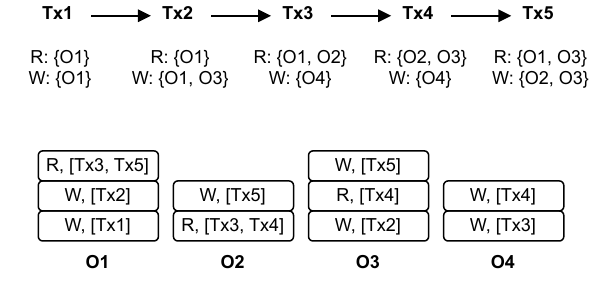}
    \caption{
        Example snapshot of the $\pendingdb$ queues of an \execworker. \sysname schedules the execution of the sequence $[\tx_1, \tx_2, \tx_3, \tx_4, \tx_5]$. The \execworker stores $\tx_1$ as $(\Write, [\tx_1])$ in the queue of $\objectid_1$ as it only mutates $\objectid_1$. $\tx_2$ then mutates $\objectid_1$ and writes $\objectid_3$; it is thus store in the queue of $\objectid_1$ (implicitly taking $\tx_1$ as dependency) and $\objectid_3$. $\tx_3$ schedules a read for both $\objectid_1$ and $\objectid_2$ and a write for $\objectid_4$. $\tx_4$ reads $\objectid_2$ (it can thus read $\objectid_2$ in parallel with $\tx_3$, registering $(\Read, [\tx_3, \tx_4])$ in the queue of $\objectid_2$) and $\objectid_3$, and writes $\objectid_4$. Finally $\tx_5$ reads $\objectid_1$ (it can thus read $\objectid_1$ in parallel with $\tx_3$), writes $\objectid_2$ and mutates $\objectid_3$.
    }
    \label{fig:pending}
    \vspace{1pt}
\end{figure}

\Cref{fig:pending} illustrates an example snapshot of the $\pendingdb_j$ store of a validator.
\execworkers append every object of the write set $\W_j$
to their local $\pendingdb_j$ indicating that $\tx$ may mutate $\objectid$:
$\pendingdb_j[\objectid] \gets \pendingdb_j[\objectid] \cup (\Write, \tx)$.
The position of $\tx$ in the $\pendingdb_j$ indicates that $\tx$ can only write $\objectid$ after all transactions appended before in $\pendingdb_j[\objectid]$ are executed, essentially indicating a write-after-write (or write-after-read) dependency.

\execworkers additionally register reads performed by $\tx$ on an object id by looking at the latest entry in $\pendingdb_j[\objectid]$.
If the entry is a write then they append a new entry:
$\pendingdb_j[\objectid] \gets \pendingdb_j[\objectid] \cup (\Read, \tx)$,  indicating a read-after-write dependency.
However, if the entry is a read then the transaction $\tx$ may be executed in parallel with any other transaction $\tx'$ also reading $\objectid$.
\execworkers thus modify the latest entry of the storage to reflect this possibility by setting $\tx$ and $\tx'$ at the same height in the $\pendingdb_j$ store:
$\pendingdb_j[\objectid][-1] \gets (\Read, [\tx', \tx])$.

A transaction $\tx$ is ready to be executed when it reaches the head of the pending lists of all the objects it references.
At this point, the \execworker loads from its $\objectsdb_j$ store all the objects data it handles:
$O_j \gets \{ \objectsdb[\objectid] \text{ s.t. }\\ \objectid \in \Call{HandledObjects}{\tx} \}$.
It then composes a $\ready$ for the dedicated \execworker that was selected to execute $\tx$: 
$\ready_{j} \gets (\tx, O_j)$. 
The message contains the transaction $\tx$ to execute, and a list of object data ($O_j$) referenced by the part of the read and write set of $\tx$ handled by \execworker $j$.

If an object referenced by $\tx$ is absent from the \execworker's local $\objectsdb_j$ store, the \execworker waits until it all transactions sequenced before \tx are executed 
and then sends $\none$ instead of the object's data. This signals that $\tx$ is malformed and references non-existent objects or objects that should have been created but the origin transaction failed.

\para{Step~\four: Execute transactions}
Upon receiving a $\ready$ message, an \execworker waits for one $\ready$ from all other \execworkers handling at least one object referenced by $\tx$.
At this point, the set of $\ready$ provides the \execworker with the objects' data behind all objects referenced by $\tx$ (or $\perp$ if missing). If all object data are available,
\tx is executed; otherwise, it is aborted.
Executing a transaction produces a set of objects to mutate or create $O$ and a set of object ids to delete $I$:
$(O, I) \gets \exec{\tx, O'}$.
The \execworker then prepares a $\result$ for all \execworkers. For \execworkers whose objects are not affected by $\tx$ this serves as a heartbeat message whereas for those whose objects are mutated, created or deleted by the transaction execution it informs them to update their object store $\objectsdb$ accordingly.
If \tx aborts, the worker sends a $\result$ with empty $O, I$.

\para{Step~\five: Handle results}
When an \execworker receives a $\result$, it:
(i) persists locally the fact that the transaction has been executed by advancing a watermark keeping track of all executed transactions;
(ii) updates each object into its local $\objectsdb$ store including deletions;
and (iii) removes all occurrences of the transaction from its $\pendingdb$ store.
It then tries to trigger the execution of the next transactions in the queues.
%

\section{Crash Fault Tolerance} \label{sec:cft-short}

\Cref{sec:design} presents the design of \sysname assuming all data structures are in-memory. However, critical validator components inevitably fail over time. To handle this, \sysname adopts a simple replication architecture, dedicating multiple machines to each \execworker. This internal replication allows the validator to continue operating despite crash faults. \sysname does not replicate the \primary, which handles only lightweight operations (and holds the signing key), nor does it replicate \seqworkers, which perform stateless work and can be rebooted from the latest persisted sequence number. We briefly detail our replication protocol here and defer more details to 
\ifpublish 
Appendices~\ref{sec:cft} and \ref{sec:detailed-recovery}.
\else 
our full paper~\cite{fullversion}.
\fi

\subsection{Internal Replication}

\Cref{fig:replication-main} illustrates the replication strategy of \sysname. Each \execworker is replaced by $n_e=2f_e+1$ \execworkers. \sysname tolerates up to $f_e$ simultaneous crash faults in a set of $n_e$ replicated \execworkers. These replicas form a grid: each column represents replicas of a single shard; each row is a cluster containing exactly one replica from each shard. Within each cluster, workers exchange reads and maintain a consistent view of the object store.

The naïve way to achieve such reliability would be to run a black-box replication engine like Paxos~\cite{lamport2001paxos} which is also the proposal of the state-of-the-art~\cite{calvin}. \sysname however greatly simplifies this process by leveraging (i) the \primary as a coordinator between the workers' replicas, (ii) external validators holding the blockchains state and the commit sequence, and (iii) the fact that execution is deterministic (given the commit sequence).

\begin{figure}[t]
    \centering
    \includegraphics[width=0.5\columnwidth]{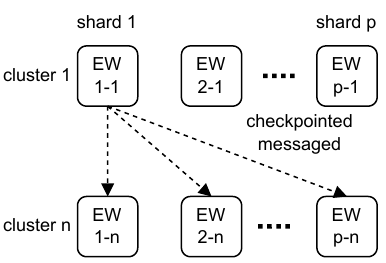}
    \caption{
        Replication scheme for \execworkers. The object store is partitioned into shards, and each shard is replicated $n_e$-fold. Each row represents a cluster, and \execworkers within a cluster coordinate to process transactions. During normal operation, the only communication between clusters is the sending of checkpoints.
    }
    \label{fig:replication-main}
\end{figure}

\subsection{Normal Operation}

Within each cluster, replicated \execworkers run the same core protocol as the unreplicated case. Inter-cluster communication is minimal, except for checkpoint updates.
To enable recovery, each worker keeps: (1) a buffer of outgoing \ready instances (the reads it has served), which can be replayed if messages are lost, and (2) a set of checkpoints, each representing a consistent, on-disk snapshot of the local object store. Checkpoints are the only persistent state.

\para{Garbage collection}
Once a checkpoint is deemed stable---i.e., a quorum of $f_e + 1$ replicas in every shard confirm they have persisted a checkpoint after a certain transaction index---old checkpoints and buffered messages prior to that index are garbage-collected.

\para{Bounding memory use}
Even with garbage collection, differences in execution speeds of workers can prevent checkpoints from being safely garbage-collected, and thus lead to unbounded memory use. To prevent this, \sysname enforces a maximum of $c$ checkpoints per worker. If a worker reaches this limit (e.g., $c=2$), it must pause execution until it can safely discard an older checkpoint. This prevents unlimited checkpoint buildup and ensures that clusters can proceed without one shard outpacing the others indefinitely. Typically, $c=2$ strikes a balance between performance and resource usage, letting faster clusters keep going as long as they are within one checkpoint boundary of slower ones.

\subsection{Failure Recovery}
\sysname uses two mechanisms to recover from failures:
(1) \textit{reconfiguration}, a rapid process that does not reduce the system’s throughput, but requires roughly synchronized clusters; and (2) \textit{checkpoint synchronization}, a slower procedure that coordinates multiple clusters if reconfiguration fails.
If both approaches fail, the system can still recover from other blockchain validators.

\para{Recovery through reconfiguration}
When an \execworker crashes, workers which rely on it for reads may be unable to proceed. They detect the crash and establish a new connection with a replacement replica. Typically, this requires just two round trips: one to identify a new node that can provide reads, and another to complete the handshake. Other clusters keep executing, so throughput remains unaffected as long as the failure threshold $f_e$ is not exceeded. 
\ifpublish
\Cref{sec:detailed-recovery} 
\else
The full version of our paper~\cite{fullversion}
\fi
contains the full algorithm.

\para{Recovery through checkpoint synchronization}
If a worker is too far behind for reconfiguration alone, it triggers a synchronization process for itself and its peers, which fetch the latest checkpoint from an up-to-date replica.
Once all peers reach the same state, the worker re-establishes missing members in its cluster (via reconfiguration).
This cascades recovery across clusters that depend on the slow worker, ensuring no cluster loses liveness if another shard ``fast-forwards'' its state.

\para{Disaster recovery}
If an entire cluster is lost beyond the threat model of~\Cref{sec:model}, the system can recover by booting a new cluster with the same peers set. This new cluster retrieves the system state from other validators, which store stable checkpoints. Though it requires wide-area network communications and is slower, this worst-case path ensures \sysname remains operable even with minimal replication (e.g., $f_e=1$).

\section{Dynamic Reads and Writes} \label{sec:dynamic}
In most deterministic execution engines~\cite{solana-vm, fuelvm, calvin, faleiro2015bohm}, transactions must specify the exact data they read and write.
This constraint limits developers and encourages the over-prediction of read/write sets to ensure successful execution.
In distributed execution, the problem is exacerbated by the need to transmit the data between \execworkers. This means that we might need to
transmit large read/write sets between computers in order to access a single item (e.g., transfer a full array to dynamically access one cell).

\sysname supports dynamic reads/writes but confines them to parent-child object hierarchies.
A child object is an object that is owned by another object, the parent.
An example parent-child relationship is that between a dynamically allocated array and its individual cells.

In \sysname, a child object can only be accessed if the root object (the top-level object in a hierarchy of potentially numerous parents) is included in the transaction and the transaction has permission to access the root.
This setup avoids overpredictions by allowing transactions to handle unexpected data accesses with minimal algorithmic changes.

One of the required modifications is to retain the reads in the queues until the transaction execution is completed.
However, this leads to a loss of parallelism since we are unable to write a new version of an object until all transactions reading the previous version have finished. We resolve this false sharing situation without bloating memory usage in two ways.
First, we treat every version of an object as a new object; this means that the queues in \Cref{fig:pending} are per $(\objectid,\objectversion)$ instead of per $\objectid$.
Therefore, each queue consists of a single write as the initial transaction, followed by potentially several reads.
This resolves the false sharing as future versions of an object initialize new queues and can proceed independently of whether the previous version is still locked because of a dynamic read operation.
Unfortunately, this leads to objects potentially being written out of order, which could pollute our state and make consistent recovery from crashes impossible. For this reason, our second modification is buffering writes so that they are written to disk in order by leveraging the crash-recover algorithm in \Cref{sec:cft-short}.
\ifpublish
Appendix~\ref{sec:child-details} 
\else 
Our full paper~\cite{fullversion}
\fi
provides further details on how we handle child objects, complete algorithms, and formal proofs.


\para{Algorithm modifications}
\sysname handles the state of child objects like any other object: they are assigned to \execworkers that maintain their pending queues.
The \execworkers schedule the execution of root objects as usual after processing a $\propose$ 
\ifpublish
(\Cref{alg:process-propose})
\fi 
by updating the queues of all the objects that the transaction directly references. This means that they update the queues of (potentially) root objects as well as the queues of (potentially currently undefined) child objects. The security of this process is ensured by following the same procedure as for object creation. Hence, the \execworker will either create these objects or garbage-collect them.
Finally, when the transaction is ready for execution, either a previous transaction would have transferred ownership of child objects to the parent or the transaction would abort at execution.

On receiving a $\ready$, the \execworker starts execution. If it detects a new child object, it pauses and sends a $\updatepropose$ carrying an \emph{augmented transaction} $\txaugmented$, which includes the child objects ID in its read/write sets. This $\updatepropose$ message is sent to shards handling one of the (newly discovered) child objects. This is safe because the parent is already locked, implicitly locking the child. If the transaction is not done, subsequent parent writes go to distinct queues, enabling on-demand multi-version concurrency.

Upon receiving $\updatepropose$ with $\txaugmented$ , the \execworker replaces $\tx$ in its queues with $\txaugmented$ and adds $\txaugmented$ to the queues of any newly discovered child objects. When $\txaugmented$ reaches the front of every involved queue, it re-attempts execution. Eventually, the protocol identifies every child object that the transaction dynamically accesses, and $\txaugmented$ contains their explicit ids. At this point, the transaction execution can terminate successfully. 

\section{Implementation} \label{sec:implementation}
We implement a networked multi-core \sysname execution engine in Rust on top of the Sui blockchain~\cite{sui}. As a result, our implementation supports Sui-Move~\cite{sui-move}. We made this choice because Sui-Move is a simple and expressive language that is easy to reason about, provides a well-documented transaction format explicitly exposing the input read and write set, and supports dynamic reads and writes.
Our implementation uses \texttt{tokio}~\cite{tokio} for asynchronous networking across the \sysname workers, utilizing low-level TCP sockets for communication without relying on any RPC frameworks.
While all network communications in our implementation are asynchronous, the core logic of the execution worker runs synchronously in a dedicated thread. This approach facilitates rigorous testing, mitigates race conditions, and allows for targeted profiling of this critical code path.
In addition to regular unit tests, we created a command-line utility (called \emph{orchestrator}) designed to deploy real-world clusters of \sysname with workers distributed across multiple machines. The orchestrator has been instrumental in pinpointing and addressing efficiency bottlenecks.
We will open-source our \sysname implementation along with its orchestration utilities.%
\footnote{\url{https://github.com/mystenlabs/sui/tree/sharded-execution}}.
\section{Evaluation} \label{sec:evaluation}
We evaluate the performance of \sysname through experiments on Amazon Web Services (AWS) to show that
given a sufficiently parallelizable compute-bound load, the throughput of \sysname linearly increases with the number of \execworkers without visibly impacting latency. In order to investigate the spectrum of \sysname, we (a) run with transactions of increasing computational load and (b) create a contented workload that is not ideal for \sysname as it (i) increases the amount of communication among \execworkers and (ii) might increase the queuing delays in order to unblock later transactions.
We show the performance improvements of \sysname over the baseline execution engine of Sui~\cite{sui}.

\subsection{Experimental Setup}
We deploy \sysname on AWS, using \texttt{m5d.8xlarge} within a single datacenter (us-west-1). Each machine provides 10 Gbps of bandwidth, 32 virtual CPUs (16 physical cores) on a 2.5GHz, Intel Xeon Platinum 8175, 128GB memory, and runs Linux Ubuntu server 22.04. We select these machines because they provide decent performance, and are in the price range of `commodity servers'.

In all graphs, each data point represents median latency/throughput over a 5-minute run.
We instantiate one benchmark client collocated with each \seqworker submitting transactions at a fixed rate for a duration of 5 minutes. We experimentally increase the load of transactions sent to the systems, and record the throughput and latency of executed transactions. As a result, all plots illustrate the `steady state' latency of all systems under low load, as well as the maximal throughput they can serve, after which latency grows quickly. We vary the types of transactions throughout the benchmark to experiment with different contention patterns.

When referring to \emph{latency}, we mean the time elapsed from when the client submits the transaction until the transaction is executed. By \emph{throughput}, we mean the number of executed transactions over the entire duration of the run.

\subsection{Simple Transfer Workload}\label{sec:transfers}

\begin{figure}[tb]
    \centering
    \begin{subfigure}{0.49\textwidth}
        \includegraphics[width=\columnwidth]{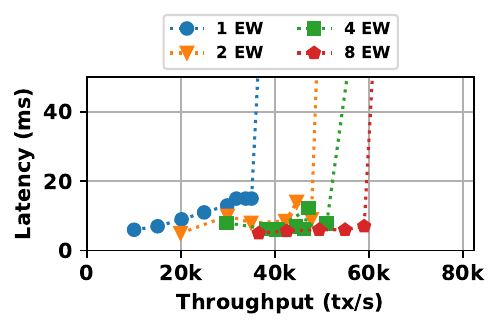}
        \caption{}
        \label{fig:lat-transfers}
    \end{subfigure}
    \hfill
    \begin{subfigure}{0.49\textwidth}
        \includegraphics[width=\columnwidth]{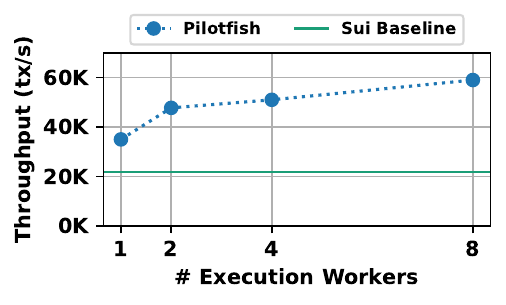}
        \caption{}
        \label{fig:scale-transfers}
    \end{subfigure}
    \caption{\sysname latency vs throughput (a) and scalability (b) with simple transfers.}
\end{figure}



\begin{figure*}[t]
    \centering

    \begin{subfigure}{0.32\textwidth}
        \includegraphics[width=\linewidth]{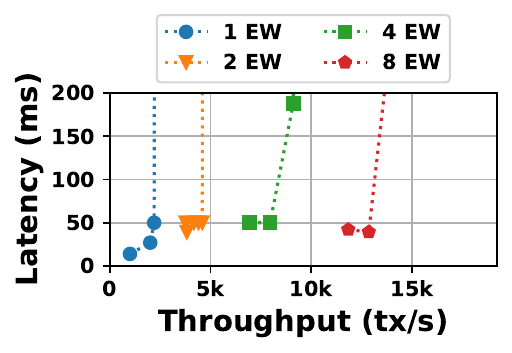}
        \caption{\texttt{Fib-2500}}
    \end{subfigure}
    \hfill
    \begin{subfigure}{0.32\textwidth}
        \includegraphics[width=\linewidth]{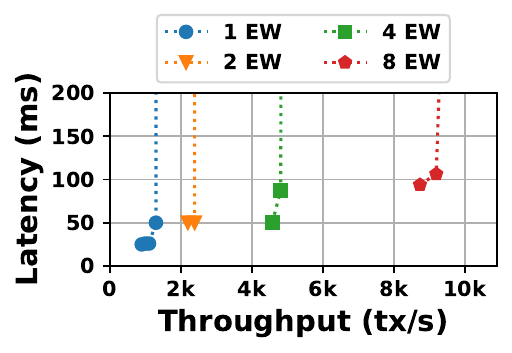}
        \caption{\texttt{Fib-5000}}
    \end{subfigure}
    \hfill
    \begin{subfigure}{0.32\textwidth}
        \includegraphics[width=\linewidth]{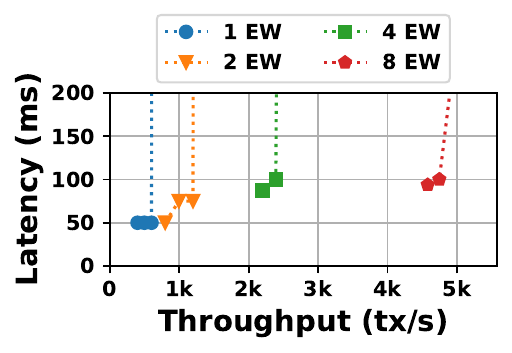}
        \caption{\texttt{Fib-10000}}
    \end{subfigure}
    \caption{\sysname latency vs. throughput for the heavy computation workloads.}
    \label{fig:lat-heavy}
\end{figure*}

\begin{figure}
    \centering
    \begin{subfigure}{0.49\textwidth}
        \includegraphics[width=\columnwidth]{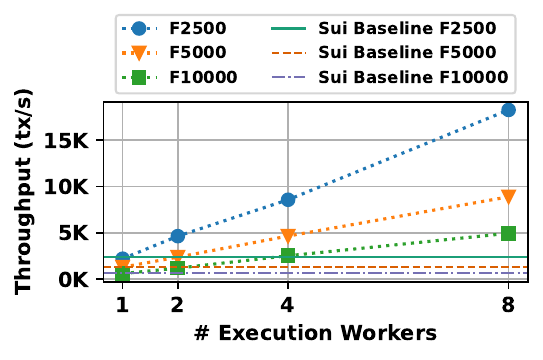}
        \caption{}
        \label{fig:scale-heavy}
    \end{subfigure}
    \hfill
    \begin{subfigure}{0.49\textwidth}
        \includegraphics[width=\columnwidth]{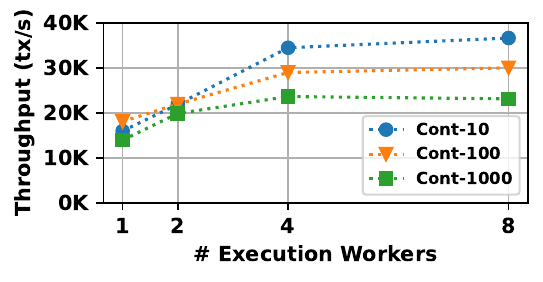}
        \caption{}
        \label{fig:scale-contended}
    \end{subfigure}
    \caption{(a) \sysname scalability with computationally heavy transactions. \texttt{F\{X\}} means that each transaction computes the $X$-th Fibonacci number. The horizontal lines show the single-machine throughput of the baseline on the same workloads. (b) \sysname scalability with condended transaction. Each transaction increments a counter. \texttt{Cont-X} means that for each counter we submit $X$ increment transactions.}
\end{figure}



\begin{figure*}[t]
    \centering
    \begin{subfigure}{0.32\textwidth}
        \includegraphics[width=\linewidth]{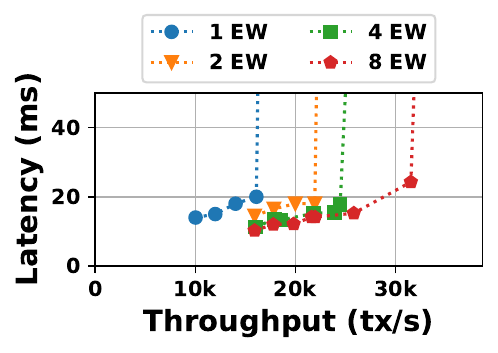}
        \caption{10 transactions/counter}
    \end{subfigure}
    \hfill
    \begin{subfigure}{0.32\textwidth}
        \includegraphics[width=\linewidth]{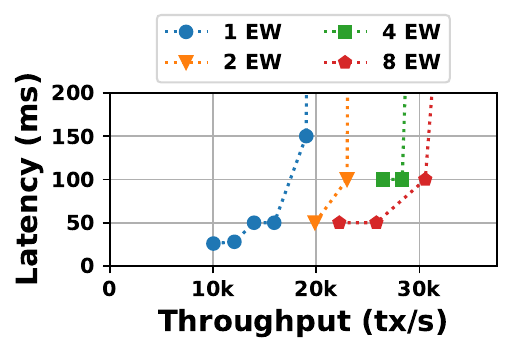}
        \caption{100 transactions/counter}
    \end{subfigure}
    \hfill
    \begin{subfigure}{0.32\textwidth}
        \includegraphics[width=\linewidth]{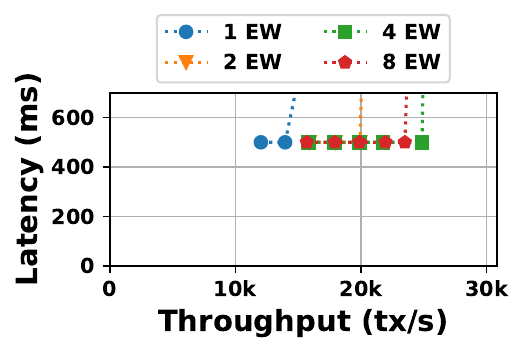}
        \caption{1000 transactions/counter}
    \end{subfigure}
    \caption{\sysname latency vs throughput for the contended workloads. Please note the different $y$ axis ranges between the three cases.}
    \label{fig:lat-contention}
    \vspace{1pt}
\end{figure*}

In this workload, each transaction is a simple transfer of coins between objects. No two transactions conflict; each transaction operates on a different set of objects from the other transactions. Thus, this workload is completely parallelizable. \Cref{fig:lat-transfers} shows latency vs throughput of \sysname on this workload with $1$, $2$, $4$ and $8$ \execworkers, and \Cref{fig:scale-transfers} shows how \sysname's maximum throughput scales when varying the number of \execworkers. 

\Cref{fig:scale-transfers} includes as baseline the throughput of the Sui execution engine.\footnote{We obtain the baseline by running Sui's single node benchmark with the \texttt{with-tx-manager} option.} 
Since the Sui transaction manager currently relies on stable storage, whereas \sysname is in-memory, this baseline is a lower bound on the expected performance of our system, when using a single \execworker.

We observe that in all cases, \sysname maintains a $20$ms latency envelope for this workload. Note that latency exhibits a linear increase as the workload grows for a single \execworker, primarily because of transaction queuing. More specifically, we see that a single machine does not have enough cores to fully exploit the parallelism of the workload, so some transactions must wait to get scheduled. This effect no longer exists for higher numbers of \execworkers, showing that more hardware has a beneficial effect on service time.

\sysname scales up to around $60k$ tx/s. In contrast, the Sui baseline can only process around $20k$ tx/s as it cannot leverage the additional hardware. \sysname thus exhibits a $3\times$ throughput improvement over the baseline.

\sysname's scalability is not perfectly linear in this workload; in particular, it becomes less steep after $2$ \execworkers. This is because the simple transfers workload is computationally light, and so the system is not compute-bound. Thus adding more resources no longer improves performance proportionally. \Cref{sec:heavy-workload} illustrates the advantages of increasing the number of \execworkers further when the workload is compute-bound.


\subsection{Computationally-Heavy Workload} \label{sec:heavy-workload}

We study the scenario when the workload remains compute-bound even at higher numbers of \execworkers. In this workload, transactions are computationally heavy. To achieve this, each transaction merges two coins and then iteratively computes the $X$th Fibonacci number, where $X$ is a configurable parameter. We study the behavior of \sysname for $X \in \{2500, 5000, 10000\}$. This workload is also perfectly parallel: transactions operate on disjoint sets of coins and thus do not conflict. \Cref{fig:lat-heavy} and \Cref{fig:scale-heavy} show the results: latency vs throughput and throughput scalability of \sysname, respectively. \Cref{fig:scale-heavy} includes the behavior of Sui on the same workloads, as a baseline.

As expected the performance of \sysname is on par with the Sui baseline for all three computation intensities when running on a single \execworker. However, when computing resources are the bottleneck, \sysname scales linearly as more resources are added to the system. As a result, \sysname can process 20k, 10k, and 5k tx/s when setting $X=2500$, $X=5000$, and $X=10000$, respectively, while maintaining the latency at around 50 ms.
In contrast, the throughput of the baseline execution engine of Sui remains set to a maximum of 2,5k, 1k, and 500 tx/s (with respectively $X=2500$, $X=5000$, and $X=10000$) as it is unable to take advantage of the additional hardware.
As a result, \sysname can process about 10x more transactions than the Sui baseline.


\subsection{Contended Workload}

We study the behavior of \sysname when the workload is no longer perfectly parallelizable. To achieve this, we introduce contention by making transactions operate on non-disjoint sets of objects. More concretely, in this workload each transaction increments a counter; for each counter, we generate a configurable number $Y$ of transactions that increment it. Thus, on average, each transaction needs to wait behind $Y/2$ other transactions in its counter's queue, before being able to execute. In our experiments, $Y \in \{10, 100, 1000\}$. The results are shown in \Cref{fig:lat-contention} and \Cref{fig:scale-contended}.
\sysname reaches a throughput of 35k, 30k, and 22k tx/s for $Y=10$, $Y=100$, and $Y=1000$ when operating with 4 \execworkers.
For this workload, for technical reasons,\footnote{In Sui, each transaction expects object references for all input objects. Each object reference is computed based on the last transaction to modify the object. Therefore, it is difficult to pre-generate more than one valid transaction for the same object, before the experiment starts, because correct object references cannot be predicted.} we could not include a Sui baseline.

As expected, we observe that as we increase contention, latency increases due to  queueing (up to $500$ms for $Y=1000$) and throughput decreases. Nonetheless, \sysname is able to scale to $4$ \execworkers. Similarly to the simple transfer workload (\Cref{sec:transfers}), this workload is not compute-bound, so adding compute beyond $4$ \execworkers no longer improves performance proportionally.

\section{Related Work} \label{sec:related}

\setlength{\tabcolsep}{0.5em}
\begin{table}[t]
\caption{Comparison against existing deterministic approaches}
\centering
\begin{tabular}{ccccc}
\toprule
 & \textbf{Distributed} & \textbf{\makecell{Crash \\ Tolerance}} & \textbf{\makecell{Dynamic \\ RW Set}} & \textbf{\makecell{No CC \\ Aborts}} \\
\midrule\midrule
\textsc{Bohm}~\cite{faleiro2015bohm} & \xmark & \xmark & \xmark & \cmark \\
PWV~\cite{faleiro2017pwv} & \xmark & \cmark & \xmark & \cmark\\
QueCC~\cite{quecc}  & \xmark  & \xmark  & \xmark & \cmark  \\
SLOG~\cite{slog}   & \cmark   & \cmark   &  \xmark  & \xmark  \\
Q-Store~\cite{qstore} & \cmark & \cmark & \xmark & \cmark \\
Calvin~\cite{thomson2014calvin} & \cmark & \cmark & \xmark & \cmark\\
Aria~\cite{lu2020aria}   & \cmark & \cmark & \cmark & \xmark   \\
Lotus~\cite{lotus} & \cmark & \cmark & \cmark & \xmark \\
\midrule
\sysname (this work) & \cmark & \cmark & \cmark & \cmark \\
\bottomrule
\end{tabular}
\label{tab:related}
\end{table}

\para{Parallel blockchain executors}
The main proposals in this area are those of Solana~\cite{solana-parallel}, Aptos~\cite{gelashvili2023blockstm}, and Sui~\cite{sui-lutris}. Solana~\cite{solana-parallel} requires every transaction to fully specify its read and write sets, so it cannot support dynamic accesses in the same way as \sysname. Aptos uses the Block-STM~\cite{gelashvili2023blockstm} system for parallel transaction execution. Block-STM is designed with a focus on single-machine, multi-threaded performance, and it is unclear how or if its design can be extended to the scale-out, distributed deployment that \sysname targets. For instance, Block-STM executes transactions speculatively, and retries transactions which fail validation. This approach works well in a shared memory environment, where retries are relatively inexpensive, but it is not clear if it can be applied in a distributed environment, where retries are much more costly due to higher communication latency. Futhermore, BlockSTM focuses on a per-block execution model which requires large blocks to optimize throughput, at the expense of latency. By contrast, \sysname uses a streaming execution model that allows for low latency regardless of throughput. Finally, Sui~\cite{sui-lutris} implicitly handles synchronization and scheduling through the tokio runtime~\cite{tokio}: a tokio task is spawned for each Sui transaction; this task waits for the transaction's dependencies to be satisfied (i.e. the required object versions to be available), and then executes the transaction in parallel with other tasks. It is unclear how to directly extend this approach to multiple machines, as required by \sysname.

\para{Deterministic databases}
\sysname is similar to deterministic database systems~\cite{thomson2010deterministic} that employ an order-then-execute approach. 
\Cref{tab:related} summarizes the main differences between \sysname and existing deterministic approaches. As \Cref{tab:related} shows, \sysname is the first distributed, crash fault tolerant deterministic execution engine that tolerates partially unspecified read/write sets and eliminates concurrency-control-related aborts. The closest works to \sysname are Calvin~\cite{calvin}, Aria~\cite{lu2020aria} and Lotus~\cite{lotus}. Calvin~\cite{thomson2014calvin} proposes the use of consensus to address crashes, which in our setting is overkill since the blockchain already provides sufficient determinism to recover without strong coordination. Aria and Lotus differ from \sysname by not establishing a total order on transactions before execution, which can lead to some transactions aborting due to conflicts; such transactions have to be retried later, increasing latency.

\ifpublish
\para{zkVMs and layers 2}
Recently, layer-2 solutions~\cite{poon2016bitcoin} and especially the use of zkVMs~\cite{ben2018scalable} have been suggested as a way to speed up the execution of blockchains. ZkVMs reduce the problem of execution to that of one prover replica and many verifier replicas that do not re-execute but simply check the proofs. These solutions are not comparable to \sysname as they are actually different ways to do deterministic execution and, as a result, compatible with \sysname, which could in principle deploy a zkVM instead of the MoveVM. We opted to use these execution engines in our design because zkVMs do not support parallelism due to the way in which proofs are generated. If, in the future, parallel zkVMs become possible, then \sysname could be adapted to help them scale out.
\fi

\ifpublish
    \section*{Acknowledgments}
    This work is funded by Mysten Labs.
    We thank George Danezis and Dahlia Malkhi for their feedback on the early manuscript.
    We also extend our thanks Mark Logan, Xun Li, and Mingwei Tian, who provided insights on the bottlenecks of the Sui blockchain execution engine. Special thanks to Xun for the implementation of the Sui workload generator which was very useful for our experiments.
    Special thanks to Marios Kogias for suggesting the idea of a queue-based locking mechanism to parallelize execution within a machine.
\fi

\ifpublish
\bibliographystyle{splncs04}
\else
\bibliographystyle{unsrt}
\fi
\bibliography{main}

\begin{thebibliography}{10}
\providecommand{\url}[1]{\texttt{#1}}
\providecommand{\urlprefix}{URL }
\providecommand{\doi}[1]{https://doi.org/#1}

\bibitem{crash-recovery-fd}
Aguilera, M.K., Chen, W., Toueg, S.: Failure detection and consensus in the crash-recovery model. Distributed Comput.  \textbf{13}(2),  99--125 (2000)

\bibitem{sinfonia}
Aguilera, M.K., Merchant, A., Shah, M.A., Veitch, A.C., Karamanolis, C.T.: Sinfonia: {A} new paradigm for building scalable distributed systems. {ACM} Trans. Comput. Syst.  \textbf{27}(3),  5:1--5:48 (2009). \doi{10.1145/1629087.1629088}, \url{https://doi.org/10.1145/1629087.1629088}

\bibitem{lazyledger}
Al-Bassam, M.: Lazyledger: A distributed data availability ledger with client-side smart contracts. arXiv preprint arXiv:1905.09274  (2019)

\bibitem{chainspace}
Al-Bassam, M., Sonnino, A., Bano, S., Hrycyszyn, D., Danezis, G.: Chainspace: A sharded smart contracts platform. arXiv preprint arXiv:1708.03778  (2017)

\bibitem{elasticitySOTA}
Al{-}Dhuraibi, Y., Paraiso, F., Djarallah, N., Merle, P.: Elasticity in cloud computing: State of the art and research challenges. {IEEE} Trans. Serv. Comput.  \textbf{11}(2),  430--447 (2018)

\bibitem{shaper}
Amiri, M.J., Agrawal, D., Abbadi, A.E.: Sharper: Sharding permissioned blockchains over network clusters. In: {SIGMOD} '21: International Conference on Management of Data, Virtual Event, China, June 20-25, 2021. pp. 76--88. {ACM} (2021), \url{https://doi.org/10.1145/3448016.3452807}

\bibitem{aptos}
Aptos: Aptos. \url{https://aptoslabs.com} (2024)

\bibitem{aptosreqs}
{Aptos Node Requirements}. \url{https://aptos.dev/en/network/nodes/validator-node/node-requirements}, accessed: 2024-08-02

\bibitem{avarikioti2023divide}
Avarikioti, Z., Desjardins, A., Kokoris-Kogias, L., Wattenhofer, R.: Divide \& scale: Formalization and roadmap to robust sharding. In: International Colloquium on Structural Information and Communication Complexity. pp. 199--245. Springer (2023)

\bibitem{aws-autoscaling}
{Automatically manage Amazon ECS capacity with cluster auto scaling}. \url{https://docs.aws.amazon.com/AmazonECS/latest/developerguide/cluster-auto-scaling.html}, accessed: 2024-08-02

\bibitem{mysticeti}
Babel, K., Chursin, A., Danezis, G., Kokoris-Kogias, L., Sonnino, A.: Mysticeti: Low-latency dag consensus with fast commit path. arXiv preprint arXiv:2310.14821  (2023)

\bibitem{prism}
Bagaria, V., Kannan, S., Tse, D., Fanti, G., Viswanath, P.: Prism: Deconstructing the blockchain to approach physical limits. In: Proceedings of the 2019 ACM SIGSAC Conference on Computer and Communications Security. pp. 585--602 (2019)

\bibitem{ben2018scalable}
Ben-Sasson, E., Bentov, I., Horesh, Y., Riabzev, M.: Scalable, transparent, and post-quantum secure computational integrity. Cryptology ePrint Archive  (2018)

\bibitem{sui-lutris}
Blackshear, S., Chursin, A., Danezis, G., Kichidis, A., Kokoris-Kogias, L., Li, X., Logan, M., Menon, A., Nowacki, T., Sonnino, A., et~al.: Sui lutris: A blockchain combining broadcast and consensus. arXiv preprint arXiv:2310.18042  (2023)

\bibitem{scalability}
Bondi, A.B.: Characteristics of scalability and their impact on performance. In: Second International Workshop on Software and Performance, {WOSP} 2000, Ottawa, Canada, September 17-20, 2000. pp. 195--203. {ACM} (2000). \doi{10.1145/350391.350432}, \url{https://doi.org/10.1145/350391.350432}

\bibitem{celestia}
Celestia: {The first modular blockchain network}. \url{https://celestia.org} (2022)

\bibitem{chandra96failure}
Chandra, T.D., Hadzilacos, V., Toueg, S.: The weakest failure detector for solving consensus. J. ACM  \textbf{43}(4),  685–722 (jul 1996). \doi{10.1145/234533.234549}, \url{https://doi.org/10.1145/234533.234549}

\bibitem{chandra1996}
Chandra, T.D., Toueg, S.: Unreliable failure detectors for reliable distributed systems. J. ACM  \textbf{43}(2),  225–267 (mar 1996). \doi{10.1145/226643.226647}, \url{https://doi.org/10.1145/226643.226647}

\bibitem{coinmarketcap}
{CoinMarketCap}. \url{http://www.coinmarketcap.com}, accessed: 2024-08-02

\bibitem{pnuts}
Cooper, B.F., Ramakrishnan, R., Srivastava, U., Silberstein, A., Bohannon, P., Jacobsen, H., Puz, N., Weaver, D., Yerneni, R.: {PNUTS:} yahoo!'s hosted data serving platform. Proc. {VLDB} Endow.  \textbf{1}(2),  1277--1288 (2008). \doi{10.14778/1454159.1454167}, \url{http://www.vldb.org/pvldb/vol1/1454167.pdf}

\bibitem{narwhal}
Danezis, G., Kokoris-Kogias, L., Sonnino, A., Spiegelman, A.: Narwhal and tusk: a dag-based mempool and efficient bft consensus. In: Proceedings of the Seventeenth European Conference on Computer Systems. pp. 34--50 (2022)

\bibitem{DangDLCLO19}
Dang, H., Dinh, T.T.A., Loghin, D., Chang, E., Lin, Q., Ooi, B.C.: Towards scaling blockchain systems via sharding. In: Proceedings of the 2019 International Conference on Management of Data, {SIGMOD} Conference 2019, Amsterdam, The Netherlands, June 30 - July 5, 2019. pp. 123--140. {ACM} (2019), \url{https://doi.org/10.1145/3299869.3319889}

\bibitem{elastras}
Das, S., Agrawal, D., Abbadi, A.E.: {ElasTraS}: An elastic, scalable, and self-managing transactional database for the cloud. {ACM} Trans. Database Syst.  \textbf{38}(1), ~5 (2013). \doi{10.1145/2445583.2445588}, \url{https://doi.org/10.1145/2445583.2445588}

\bibitem{dynamo}
DeCandia, G., Hastorun, D., Jampani, M., Kakulapati, G., Lakshman, A., Pilchin, A., Sivasubramanian, S., Vosshall, P., Vogels, W.: Dynamo: {Amazon}'s highly available key-value store. In: Proceedings of the 21st {ACM} Symposium on Operating Systems Principles 2007, {SOSP} 2007, Stevenson, Washington, USA, October 14-17, 2007. pp. 205--220. {ACM} (2007). \doi{10.1145/1294261.1294281}, \url{https://doi.org/10.1145/1294261.1294281}

\bibitem{move-vm}
Docs: {Move VM}. \url{https://docs.dfinance.co/move\_vm} (2023)

\bibitem{dwork1988consensus}
Dwork, C., Lynch, N., Stockmeyer, L.: Consensus in the presence of partial synchrony. Journal of the ACM (JACM)  \textbf{35}(2),  288--323 (1988)

\bibitem{evm}
{Ethereum Foundation}: {Ethereum Virtual Machine (EVM)}. \url{https://ethereum.org/en/developers/docs/evm/} (2023)

\bibitem{faleiro2015bohm}
Faleiro, J.M., Abadi, D.J.: Rethinking serializable multiversion concurrency control. Proc. VLDB Endow.  \textbf{8}(11),  1190–1201 (jul 2015). \doi{10.14778/2809974.2809981}, \url{https://doi.org/10.14778/2809974.2809981}

\bibitem{faleiro2017pwv}
Faleiro, J.M., Abadi, D.J., Hellerstein, J.M.: High performance transactions via early write visibility. Proc. VLDB Endow.  \textbf{10}(5),  613–624 (jan 2017). \doi{10.14778/3055540.3055553}, \url{https://doi.org/10.14778/3055540.3055553}

\bibitem{fuelvm}
Fuel: {The World's Fastest Modular Execution Layer}. \url{https://www.fuel.network} (2024)

\bibitem{dumbo-ng}
Gao, Y., Lu, Y., Lu, Z., Tang, Q., Xu, J., Zhang, Z.: Dumbo-ng: Fast asynchronous bft consensus with throughput-oblivious latency. In: Proceedings of the 2022 ACM SIGSAC Conference on Computer and Communications Security (2022)

\bibitem{gelashvili2023blockstm}
Gelashvili, R., Spiegelman, A., Xiang, Z., Danezis, G., Li, Z., Malkhi, D., Xia, Y., Zhou, R.: Block-stm: Scaling blockchain execution by turning ordering curse to a performance blessing. In: PPoPP '23. p. 232–244. Association for Computing Machinery, New York, NY, USA (2023). \doi{10.1145/3572848.3577524}, \url{https://doi.org/10.1145/3572848.3577524}

\bibitem{bullshark}
Giridharan, N., Kokoris-Kogias, L., Sonnino, A., Spiegelman, A.: Bullshark: Dag bft protocols made practical. arXiv preprint arXiv:2201.05677  (2022)

\bibitem{cerberus}
Hellings, J., Hughes, D.P., Primero, J., Sadoghi, M.: {Cerberus: Minimalistic Multi-shard Byzantine-resilient Transaction Processing}. J. Syst. Res.  \textbf{3}(1) (2023), \url{https://doi.org/10.5070/sr33161314}

\bibitem{byshard}
Hellings, J., Sadoghi, M.: {ByShard}: sharding in a {Byzantine} environment. {VLDB} J.  \textbf{32}(6),  1343--1367 (2023), \url{https://doi.org/10.1007/s00778-023-00794-0}

\bibitem{elasticity}
Herbst, N.R., Kounev, S., Reussner, R.H.: Elasticity in cloud computing: What it is, and what it is not. In: Kephart, J.O., Pu, C., Zhu, X. (eds.) 10th International Conference on Autonomic Computing, ICAC'13, San Jose, CA, USA, June 26-28, 2013. pp. 23--27. {USENIX} Association (2013)

\bibitem{howard2016flexible}
Howard, H., Malkhi, D., Spiegelman, A.: Flexible paxos: Quorum intersection revisited (2016)

\bibitem{omniledger}
Kokoris-Kogias, E., Jovanovic, P., Gasser, L., Gailly, N., Syta, E., Ford, B.: Omniledger: A secure, scale-out, decentralized ledger via sharding. In: 2018 IEEE symposium on security and privacy (SP). pp. 583--598. IEEE (2018)

\bibitem{cassandra}
Lakshman, A., Malik, P.: Cassandra: structured storage system on a {P2P} network. In: Tirthapura, S., Alvisi, L. (eds.) Proceedings of the 28th Annual {ACM} Symposium on Principles of Distributed Computing, {PODC} 2009, Calgary, Alberta, Canada, August 10-12, 2009. p.~5. {ACM} (2009). \doi{10.1145/1582716.1582722}, \url{https://doi.org/10.1145/1582716.1582722}

\bibitem{lamport2001paxos}
Lamport, L.: Paxos made simple. ACM SIGACT News (Distributed Computing Column) 32, 4 (Whole Number 121, December 2001) pp. 51--58 (2001)

\bibitem{lu2020aria}
Lu, Y., Yu, X., Cao, L., Madden, S.: Aria: A fast and practical deterministic oltp database. Proc. VLDB Endow.  \textbf{13}(12),  2047–2060 (jul 2020). \doi{10.14778/3407790.3407808}, \url{https://doi.org/10.14778/3407790.3407808}

\bibitem{two-phase}
Mohan, C., Lindsay, B.G., Obermarck, R.: Transaction management in the r* distributed database management system. {ACM} Trans. Database Syst.  \textbf{11}(4),  378--396 (1986). \doi{10.1145/7239.7266}, \url{https://doi.org/10.1145/7239.7266}

\bibitem{sui}
{Mysten Labs}: {Build without boundaries}. {https://sui.io} (2022)

\bibitem{sui-move}
{Mysten Labs}: {Move Concepts}. \url{https://docs.sui.io/concepts/sui-move-concepts} (2023)

\bibitem{ebb-and-flow}
Neu, J., Tas, E.N., Tse, D.: Ebb-and-flow protocols: A resolution of the availability-finality dilemma. In: 2021 IEEE Symposium on Security and Privacy (SP). pp. 446--465. IEEE (2021)

\bibitem{snap-and-chat}
Neu, J., Tas, E.N., Tse, D.: The availability-accountability dilemma and its resolution via accountability gadgets. In: International Conference on Financial Cryptography and Data Security. pp. 541--559. Springer (2022)

\bibitem{ongaro2015raft}
Ongaro, D., Ousterhout, J.: The raft consensus algorithm. Lecture Notes CS  \textbf{190}, ~2022 (2015)

\bibitem{poon2016bitcoin}
Poon, J., Dryja, T.: The bitcoin lightning network: Scalable off-chain instant payments. \url{https://lightning.network/lightning-network-paper.pdf} (2016)

\bibitem{solana-vm}
Protocol, S.: {What Is SVM - The Solana Virtual Machine}. \url{https://squads.so/blog/solana-svm-sealevel-virtual-machine} (2024)

\bibitem{qstore}
Qadah, T., Gupta, S., Sadoghi, M.: Q-store: Distributed, multi-partition transactions via queue-oriented execution and communication. In: Proceedings of the 23rd International Conference on Extending Database Technology, {EDBT} 2020, Copenhagen, Denmark, March 30 - April 02, 2020. pp. 73--84. OpenProceedings.org (2020), \url{https://doi.org/10.5441/002/edbt.2020.08}

\bibitem{quecc}
Qadah, T.M., Sadoghi, M.: {QueCC}: {A} queue-oriented, control-free concurrency architecture. In: Proceedings of the 19th International Middleware Conference, Middleware 2018, Rennes, France, December 10-14, 2018. pp. 13--25. {ACM} (2018), \url{https://doi.org/10.1145/3274808.3274810}

\bibitem{slog}
Ren, K., Li, D., Abadi, D.J.: {SLOG:} serializable, low-latency, geo-replicated transactions. Proc. {VLDB} Endow.  \textbf{12}(11),  1747--1761 (2019). \doi{10.14778/3342263.3342647}, \url{http://www.vldb.org/pvldb/vol12/p1747-ren.pdf}

\bibitem{accordion}
Serafini, M., Mansour, E., Aboulnaga, A., Salem, K., Rafiq, T., Minhas, U.F.: Accordion: Elastic scalability for database systems supporting distributed transactions. Proc. {VLDB} Endow.  \textbf{7}(12),  1035--1046 (2014), \url{http://www.vldb.org/pvldb/vol7/p1035-serafini.pdf}

\bibitem{solana-parallel}
{Solana Foundation}: Sealevel---parallel processing thousands of smart contracts. \url{https://solana.com/news/sealevel---parallel-processing-thousands-of-smart-contracts} (2019)

\bibitem{solanareqs}
{Solana Validator Requirements}. \url{https://docs.solanalabs.com/operations/requirements}, accessed: 2024-08-02

\bibitem{byzcuit}
Sonnino, A., Bano, S., Al-Bassam, M., Danezis, G.: Replay attacks and defenses against cross-shard consensus in sharded distributed ledgers. In: 2020 IEEE European Symposium on Security and Privacy (EuroS\&P). pp. 294--308. IEEE (2020)

\bibitem{shoal}
Spiegelman, A., Aurn, B., Gelashvili, R., Li, Z.: Shoal: Improving dag-bft latency and robustness. arXiv preprint arXiv:2306.03058  (2023)

\bibitem{stefoexecuting}
Stefo, C., Xiang, Z., Kokoris-Kogias, L.: Executing and proving over dirty ledgers. In: International Conference on Financial Cryptography and Data Security. pp. 3--20. Springer (2023)

\bibitem{objectmodel}
{Sui}: {Object Model}. \url{https://docs.sui.io/concepts/object-model}, accessed: 2024-08-02

\bibitem{suireqs}
{Sui Validator Node Configuration}. \url{https://docs.sui.io/guides/operator/validator-config}, accessed: 2024-08-02

\bibitem{thomson2010deterministic}
Thomson, A., Abadi, D.J.: The case for determinism in database systems. Proc. VLDB Endow.  \textbf{3}(1–2),  70–80 (sep 2010). \doi{10.14778/1920841.1920855}, \url{https://doi.org/10.14778/1920841.1920855}

\bibitem{calvin}
Thomson, A., Diamond, T., Weng, S.C., Ren, K., Shao, P., Abadi, D.J.: Calvin: Fast distributed transactions for partitioned database systems. In: Proceedings of the 2012 ACM SIGMOD International Conference on Management of Data. p. 1–12. SIGMOD '12, Association for Computing Machinery, New York, NY, USA (2012). \doi{10.1145/2213836.2213838}, \url{https://doi.org/10.1145/2213836.2213838}

\bibitem{thomson2014calvin}
Thomson, A., Diamond, T., Weng, S.C., Ren, K., Shao, P., Abadi, D.J.: Fast distributed transactions and strongly consistent replication for oltp database systems. ACM Trans. Database Syst.  \textbf{39}(2) (may 2014). \doi{10.1145/2556685}, \url{https://doi.org/10.1145/2556685}

\bibitem{tokio}
{Tokio is an asynchronous runtime for the Rust programming language.} \url{https://tokio.rs/}, accessed: 2024-10-04

\bibitem{lotus}
Zhou, X., Yu, X., Graefe, G., Stonebraker, M.: Lotus: Scalable multi-partition transactions on single-threaded partitioned databases. Proc. {VLDB} Endow.  \textbf{15}(11),  2939--2952 (2022), \url{https://www.vldb.org/pvldb/vol15/p2939-zhou.pdf}

\end{thebibliography}

\ifpublish
\appendix
\bigskip
\section*{\hfill APPENDIX\hfill}
\section{Algorithms} \label{app:algorithms}
This section complements \Cref{sec:design} by providing detailed algorithms for the core components of \sysname.

\subsection{Detailed Algorithms}
The function $\Call{Handle}{\objectid}$ of \Cref{alg:process-batch} returns the \execworker that handles the specified object identifier $\objectid$. The function $\Call{Index}{\tx}$ in \Cref{alg:core} returns the index of the transaction \tx in the global committed sequence. The function $\Call{ID}{o}$ in \Cref{alg:process-result} returns the object id $\objectid$ of the object $o$.

\begin{algorithm}[h!]
    \caption{Process committed sequence (step \two of \Cref{fig:overview})}
    \label{alg:process-batch}
    \algfontsize
    \begin{algorithmic}[1]
        \Statex // Called by \seqworkers upon receiving the committed sequence.
        \Procedure{ProcessSequencedBatch}{$\batchid, \batchidx$}
        \State // Ignore batches for other workers.
        \If{$\Call{Handler}{\batchid} \neq \self$} \Return \EndIf
        \State
        \State // Make one propose message for each \execworker.
        \State $\batch \gets \batchdb[\batchid]$
        \For{$w \in  \text{\execworkers}$}
        \State $T \gets [ \tx \in \batch \text{ s.t. } \exists \objectid \in \tx \text{ s.t. } \Call{Handler}{\objectid}=w ]$ \label{alg:line:propose-creation-1}
        \State $\propose \gets (\batchidx, \batchid, T)$ \label{alg:line:propose-creation-2}
        \State $\Call{Send}{w, \propose}$
        \EndFor
        \EndProcedure
    \end{algorithmic}
\end{algorithm}

\begin{algorithm}[t]
    \caption{Process $\propose$ (step \three of \Cref{fig:overview})}
    \label{alg:process-propose}
    \algfontsize
    \begin{algorithmic}[1]
        \State $\loadedIdx \gets 0$ \Comment{All batch indices below this watermark are received}
        \State $\loaded \gets [\;]$ \Comment{Received batch indices}
        \Statex
        \Statex // Called by \execworkers upon receiving a $\propose$.
        \Procedure{ProcessPropose}{$\propose$}
        \State // Ensure we received one message per \seqworker
        \State $(\batchidx, \batchid, T) \gets \propose$
        \State $\loaded[\batchidx] \gets \loaded[\batchidx] \cup (\batchid, T)$
        \While{$\len{\loaded[\loadedIdx]} = |\text{\seqworkers}|$} \label{alg:line:wait-for-batches}
        \State $(\_, T) \gets \loaded[\loadedIdx]$
        \State $\loadedIdx \gets \loadedIdx+1$ \label{alg:line:update-loaded-idx}
        \State
        \State // Add the objects to their pending queues
        \For{$\tx \in T$} \label{alg:line:process-proposed-txs}
        \For{$\objectid \in \Call{HandledObjects}{\tx}$} \Comment{Defined in \Cref{alg:core}}
        \If{$\objectid \in \writeset{\tx}$}
        \State $\pendingdb[\objectid] \gets \pendingdb[\objectid] \cup (W, [\tx])$ \label{alg:line:schedule-write}
        \Else \Comment{$\objectid \in \readset{\tx}$}
        \State $(op, T') \gets \pendingdb[\objectid][-1]$
        \If{$op = W$}
        $\pendingdb[\objectid] \gets \pendingdb[\objectid] \cup (R, [\tx])$  \label{alg:line:schedule-read}
        \Else
        $\; \pendingdb[\objectid][-1] \gets (R, T' \cup \tx)$ \label{alg:line:schedule-parallel-read}
        \EndIf
        \EndIf
        \State
        \State // Try to execute the transaction
        \State $\Call{TryTriggerExecution}{\tx}$ \Comment{Defined in \Cref{alg:core}} \label{alg:line:trigger-execution-after-scheduling}
        \EndFor
        \EndFor
        \EndWhile
        \EndProcedure
    \end{algorithmic}
\end{algorithm}

\begin{algorithm}[t]
    \caption{Core functions}
    \label{alg:core}
    \algfontsize
    \begin{algorithmic}[1]
        \State $\executedIdx \gets 0$ \Comment{All $\tx$ indices below this watermark are executed} \label{alg:line:executed-idx}
        \State $\executed \gets \emptyset$ \Comment{Executed transaction indices}

        \Statex
        \Function{TryTriggerExecution}{$\tx$} \label{alg:line:try-trigger-execution}
        \State // Check if all dependencies are already executed
        \If{$\Call{HasDependencies}{\tx}$} \Return \EndIf \label{alg:line:check-dependencies}
        \State
        \State // Check if all objects are present
        \State $M \gets \Call{MissingObjects}{\tx}$ \label{alg:line:check-missing-objects}
        \If{$M \neq \emptyset$}
        \For{$\objectid \in M$} $\missingdb[\objectid] \gets \missingdb[\objectid] \cup \tx$ \EndFor \label{alg:line:track-missing}
        \State \Return
        \EndIf
        \State
        \State // Send object data to a deterministically-selected \execworker
        \State $worker \gets \Call{Handler}{\tx}$ \Comment{Worker handling the most objects of $\tx$}
        \State $O \gets \{ \objectsdb[\objectid] \text{ s.t. } \objectid \in \Call{HandledObjects}{\tx} \}$ \Comment{May contain $\perp$} \label{alg:line:load-objects}
        \State $\ready \gets(\tx, O)$
        \State $\Call{Send}{worker, \ready}$ \label{alg:line:send-ready}
        \State
        \State // Remove read-locks from the pending queues
        \For{$\objectid \in \readset{\tx}$}
        \State $T' \gets \Call{AdvanceLock}{\tx, \objectid}$ \label{alg:line:advance-read-lock}
        \For{$\tx' \in T'$} $\Call{TryTriggerExecution}{\tx'}$ \EndFor \label{alg:line:trigger-execution-freeing-read-locks}
        \EndFor
        \EndFunction

        \Statex
        \Function{HasDependencies}{$\tx$}
        \State $I \gets \Call{HandledObjects}{\tx}$
        \State \Return $\exists \objectid \in I \text{ s.t. } \tx \notin \pendingdb[\objectid][0]$ \label{alg:line:check-next-in-queue}
        \EndFunction

        \Statex
        \Function{MissingObjects}{$\tx$}
        \State $I \gets \Call{HandledObjects}{\tx}$
        \State \Return $ \{ \objectid \text{ s.t. } \objectid \in I \text{ and } \objectsdb[\objectid] = \none \text{ and } \executedIdx < \Call{Index}{\tx}-1 \}$ \label{alg:line:missing-objects}
        \EndFunction

        \Statex
        \Function{HandledObjects}{$\tx$}
        \State \Return $\{ \objectid \text{ s.t. } \objectid \in \tx \text{ and } \Call{Handler}{\objectid}=\self\}$
        \EndFunction

        \Statex
        \Function{AdvanceLock}{$\tx, \objectid$} \label{alg:line:advance-lock}
        \State // Cleanup the pending queue
        \State $(op, T) \gets \pendingdb[\objectid][0]$
        \State $(op, T') \gets (op, l \setminus \tx)$ \label{alg:line:update-pending-queue}
        \State $\pendingdb[\objectid][0] \gets (op, T')$
        \State \Return $T'$
        \EndFunction

        \Statex
        \Function{TryAdvanceExecWatermark}{$\tx$}
        \State $\executed \gets \executed \cup \Call{Index}{\tx}$ \label{alg:line:update-watermark-buffer}
        \While{$(\executedIdx + 1) \in \executed$} $\executedIdx \gets \executedIdx + 1$ \EndWhile \label{alg:line:advance-executed-idx}
        \EndFunction
    \end{algorithmic}
\end{algorithm}

\begin{algorithm}[t]
    \caption{Process $\ready$ (step \four of \Cref{fig:overview})}
    \label{alg:process-ready}
    \algfontsize
    \begin{algorithmic}[1]
        \State $\objectsreceived \gets \{\}$ \Comment{Maps \tx to the object data it references (or $\perp$ if unavailable)}
        \Statex
        \Statex // Called by the \execworkers upon receiving a $\ready$.
        \Procedure{ProcessReady}{$\ready$}
        \State $(\tx, O) \gets \ready$
        \State $\objectsreceived[\tx] \gets \objectsreceived[\tx] \cup O$
        \If{$\len{\objectsreceived[\tx]} \neq \len{\readset{\tx}}+\len{\writeset{\tx}}$} \Return \EndIf \label{alg:line:wait-for-objects}
        \State
        \State $\result \gets (\tx, \emptyset, \emptyset)$ \label{alg:line:empty-result}
        \If{$!\Call{AbortExec}{\tx}$} \label{alg:line:can-execute}
        \State $(O, I) \gets \exec{\tx, \txdb[\tx]}$ \Comment{$O$ to mutate and $I$ to delete} \label{alg:line:execute}
        \For{$w \in  \text{\execworkers}$}
        \State $O_w \gets \{ o \in O \text{ s.t. } \Call{Handler}{o}=w \}$
        \State $I_w \gets \{ \objectid \in I \text{ s.t. } \Call{Handler}{\objectid}=w \}$
        \State $\result \gets (\tx, O_w, I_w)$ \label{alg:line:send-result}
        \EndFor
        \EndIf
        \State $\Call{Send}{w, \result}$ \label{alg:line:send-result-2}
        \EndProcedure

        \Statex
        \Statex // Check whether the execution should proceed.
        \Function{AbortExec}{$\tx$}
        \State \Return $\exists o \in \objectsreceived[\tx] \text{ s.t. } o = \perp$ \label{alg:line:abort-condition}
        \EndFunction
    \end{algorithmic}
\end{algorithm}

\begin{algorithm}[t]
    \caption{Process $\result$ (step \five of \Cref{fig:overview})}
    \label{alg:process-result}
    \algfontsize
    \begin{algorithmic}[1]
        \Statex
        \Statex // Called by the \execworkers upon receiving a $\result$.
        \Procedure{ProcessResult}{$\result$}
        \State $(\tx, O, I) \gets \result$
        \State $\Call{TryAdvanceExecWatermark}{\tx}$ \Comment{Defined in \Cref{alg:core}} \label{alg:line:advance-watermark}
        \State $\Call{UpdateStores}{\tx, O, I}$ \label{alg:line:update-stores}
        \State
        \State // Try execute transactions with missing objects
        \For{$o \in O$}
        \State $\objectid \gets \Call{Id}{o}$
        \For{$\tx \gets \missingdb[\objectid]$} $\Call{TryTriggerExecution}{\tx}$ \EndFor
        \State \delete $\missingdb[\objectid]$ \Comment{Prevent duplicate execution} \label{alg:line:cleanup-missing}
        \EndFor
        \State
        \State // Try executing the next transaction in the queues
        \For{$\objectid \in \tx$}
        \State $T' \gets \Call{AdvanceLock}{\tx, \objectid}$ \label{alg:line:advance-lock-call}
        \For{$\tx' \in T'$} $\Call{TryTriggerExecution}{\tx'}$ \EndFor \label{alg:line:trigger-execution-after-result}
        \EndFor
        \EndProcedure

        \Statex
        \Function{UpdateStores}{$\tx, O, I$}
        \For{$o \in O$} $\objectsdb[\Call{Id}{o}] \gets \object$ \EndFor
        \For{$\objectid \in I$} \delete $\objectsdb[\objectid]$ \EndFor
        \EndFunction
    \end{algorithmic}
\end{algorithm}

\subsection{Running in Constant Memory}
The algorithms described above leverage several temporary in-memory structures that need to be safely cleaned up to make the protocol memory-bound.
The maps \pendingdb and \missingdb are respectively cleaned up as part of normal protocol operations at \Cref{alg:line:update-pending-queue} (empty queues are deleted) and \Cref{alg:line:cleanup-missing} of \Cref{alg:process-result}.
All indices $i' < \loadedIdx$ of the list $\loaded$ (\Cref{alg:process-propose}) can be cleaned after \Cref{alg:line:update-loaded-idx} of \Cref{alg:process-propose} as they are no longer needed.
Similarly, any transactions $\tx$ with index $\Call{Index}{\tx} < \executedIdx$ can be removed from the set $\executed$ (\Cref{alg:core}) after \Cref{alg:line:advance-executed-idx} of \Cref{alg:core}.
Finally, any transaction $\tx$ can be removed from the map $\objectsreceived$ (\Cref{alg:process-ready}) after \Cref{alg:line:wait-for-objects} of \Cref{alg:process-ready}.
\section{Security Proofs} \label{sec:proofs}
We show that \sysname satisfies the properties of \Cref{sec:properties}.

\subsection{Serializability} \label{sec:serializability-proof}
We show that \sysname satisfies the serializability property (\Cref{def:serializability} of \Cref{sec:properties}). Intuitively, this property states that \sysname executes transactions in a way that is equivalent to the sequential execution of the transactions as it comes from consensus (\Cref{def:sequential-schedule}). The argument leverages the following arguments: (i) \sysname builds the pending queues $\pendingdb$ by respecting the transactions dependencies dictated by the consensus protocol (i.e., the sequential schedule), (ii) \sysname accesses objects in the same order as the sequential schedule, and (iii) \sysname executes transactions in the same order as the sequential schedule.

\begin{definition}[Sequential Schedule] \label{def:sequential-schedule}
    A sequential schedule is a sequence of transactions $[\tx_1,\dots,\tx_n]$ where each transaction $\tx_i$ is executed after $\tx_{i-1}$.
\end{definition}

\begin{definition}[Conflicting Transactions] \label{def:conflicting-transaction}
    Two transactions $\tx_i$ and $\tx_j$ conflict on some object \objectid if both $\tx_i$ and $\tx_j$ reference \objectid in their read or write set and at least one of $\tx_i$ or $\tx_j$ references \objectid in its write set.
\end{definition}

\para{Pending queues building}
We start by arguing point (i), stating that \Cref{alg:process-propose} builds the pending queues $\pendingdb$ by respecting the transaction dependencies dictated by the consensus protocol (i.e., the sequential schedule).

\begin{lemma}[Sequential Batch Processing] \label{lm:process-propose-order}
    \sys processes the batch with index $\batch_j$ after processing the batch with index $\batch_i$ if $j>i$.
\end{lemma}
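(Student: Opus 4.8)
The plan is to reduce the claim to a monotonicity argument on the watermark \loadedIdx maintained in \Cref{alg:process-propose}. First I would make precise what ``processing a batch'' means: the batch with index $\batchidx = k$ is \emph{processed} precisely when the body of the while loop at \Cref{alg:line:wait-for-batches} executes with $\loadedIdx = k$, i.e., when its transactions are drained from $\loaded[k]$ and appended to the pending queues $\pendingdb$. The crucial point is to separate this event from the mere \emph{arrival} of a \propose message. Since message delivery is asynchronous, \textsc{ProcessPropose} may be invoked in an arbitrary order, so a received batch is first buffered into $\loaded$ at position $\batchidx$, and only the while loop---guarded by the condition $\len{\loaded[\loadedIdx]} = |\text{\seqworkers}|$---performs the actual processing. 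This buffering therefore decouples the (adversarially ordered) arrival of messages from the in-order processing dictated by the watermark.

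Next I would establish that \loadedIdx is monotonically increasing and advances by unit steps. It is initialized to $0$, and the only statement that modifies it is the increment $\loadedIdx \gets \loadedIdx + 1$ at \Cref{alg:line:update-loaded-idx}. Consequently, over the course of the execution \loadedIdx takes exactly the values $0, 1, 2, \dots$ in that order, and each value is attained at most once, since the loop never revisits a smaller index. Each iteration of the while loop thus processes the unique batch sitting at the current watermark position and then advances the watermark by one.

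From here the lemma follows directly. The batch with index $i$ is processed in the unique iteration where $\loadedIdx = i$, and the batch with index $j$ in the unique iteration where $\loadedIdx = j$. Because \loadedIdx only ever increases through consecutive integers, and $j > i$, the watermark must attain value $i$ strictly before it attains value $j$; hence the batch with index $i$ is processed before the batch with index $j$.

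The step I expect to be the main obstacle is the first one: cleanly arguing that the buffering via $\loaded$ together with the guard on the while loop fully decouples the arrival order of \propose messages from the in-order processing enforced by the watermark. Once this decoupling is stated carefully, the monotonicity of \loadedIdx makes the ordering immediate, and no further reasoning about concurrency or message interleaving is required.
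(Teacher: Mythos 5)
Your proposal is correct and takes essentially the same approach as the paper: both arguments rest on the observation that the guard at \Cref{alg:line:wait-for-batches} together with the watermark \loadedIdx (incremented only at \Cref{alg:line:update-loaded-idx}) forces batches to be drained from \loaded in index order, regardless of the arrival order of \propose messages. The only difference is presentational---the paper phrases this as a proof by contradiction, while you give a direct monotonicity argument on the watermark---which does not change the substance.
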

\begin{proof}
    Let's assume by contradiction that \Cref{alg:process-propose} processes the $\propose$ referencing transactions of the batch with index $\batch_j$ before processing the $\propose$ referencing transactions of the batch with index $\batch_i$ while $j>i$.
    This means that \Cref{alg:process-propose} processes  $\batch_j$ at \Cref{alg:line:process-proposed-txs} before processing $\batch_i$ at \Cref{alg:line:process-proposed-txs}. However, the check of \Cref{alg:process-propose} at \Cref{alg:line:wait-for-batches} ensures that $\batch_j$ can only be processed after all the batches with indices $k \in [0,\dots,j[$. Since $j>i$, it follows that $i \in [0,\dots,j[$, and thus $\batch_j$ can only be processed after $\batch_i$. Hence a contradiction.
\end{proof}

\begin{lemma}[Transactions Order in Queues] \label{lm:queue-order}
    Let's assume two transactions $\tx_j, \tx_i$ such that $j>i$ conflict on the same $\objectid$; $\tx_j$ is placed in the queue $\pendingdb[\objectid]$ after $\tx_i$.
\end{lemma}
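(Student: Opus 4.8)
The plan is to combine two facts: first, that the batches themselves are processed in increasing index order (already established in Lemma~\ref{lm:process-propose-order}), and second, that within a single invocation of \Cref{alg:process-propose} the transactions of a batch are handled in their sequence order. Together these guarantee that the \emph{append} operations into $\pendingdb[\objectid]$ happen in the global committed order, which is exactly what we need. I would therefore first argue the intra-batch ordering, then stitch it together with the inter-batch ordering, and finally verify that every code path by which $\tx_i$ and $\tx_j$ can touch $\pendingdb[\objectid]$ preserves relative position.

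First I would set up the case analysis on where $\tx_i$ and $\tx_j$ fall. If they belong to different batches, say $\batch_a \ni \tx_i$ and $\batch_b \ni \tx_j$, then since the indices respect the batch order we have $a \le b$; by \Cref{lm:process-propose-order} the \execworker reaches \Cref{alg:line:process-proposed-txs} for $\batch_a$ strictly before it does for $\batch_b$ (or they are equal and we are in the intra-batch case). If they belong to the same batch, I would appeal to the fact that the inner loop at \Cref{alg:line:process-proposed-txs} iterates over the transaction list $T$ in the order fixed by the batch, which is the committed order; hence $\tx_i$ is visited before $\tx_j$. In both cases the key conclusion is that the scheduling code that appends $\tx_i$ to $\pendingdb[\objectid]$ executes before the code that appends $\tx_j$.

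Next I would verify that ``appended earlier'' actually yields ``earlier position in the queue.'' Here I would inspect the three relevant lines of \Cref{alg:process-propose}: \Cref{alg:line:schedule-write} (a write is appended as a fresh entry to the tail), \Cref{alg:line:schedule-read} (a read is appended as a fresh tail entry when the last entry is a write), and \Cref{alg:line:schedule-parallel-read} (a read is merged into the current tail entry when that entry is already a read). Since $\tx_i$ and $\tx_j$ \emph{conflict} on $\objectid$ (\Cref{def:conflicting-transaction}), at least one of them writes $\objectid$; I would argue that this rules out the merge case for that conflicting pair—two transactions only share a queue slot via \Cref{alg:line:schedule-parallel-read} when both are reads—so the later transaction is placed strictly behind the earlier one at the tail, never at the same height. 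Because entries are only ever appended at the tail and removed from the head (via $\Call{AdvanceLock}{}$), relative order of surviving entries is invariant, so $\tx_j$ remains behind $\tx_i$.

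The main obstacle I anticipate is the parallel-read merge at \Cref{alg:line:schedule-parallel-read}: I must rule out the possibility that $\tx_j$ is folded into the same queue entry as $\tx_i$ (which would make ``after'' ill-defined). The conflict hypothesis is precisely the lever that closes this gap, but it needs a careful argument that a write by either transaction forces a distinct tail entry rather than a merge, and that no intervening transaction could have changed the head/tail structure in a way that reorders the two. I would state this as the crux and handle it by tracking the queue as an append-only-at-tail, remove-only-at-head structure, which makes the positional invariant routine once the merge case is excluded.
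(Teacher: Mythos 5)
Your proof takes essentially the same route as the paper's: the same case split between same-batch and different-batch placement, the same appeal to \Cref{lm:process-propose-order} for the inter-batch case, and the same intra-batch argument that the loop at \Cref{alg:line:process-proposed-txs} visits transactions in committed order. You are in fact slightly more careful than the paper, whose proof only mentions \Cref{alg:line:schedule-write} and \Cref{alg:line:schedule-read} and silently skips the parallel-read merge at \Cref{alg:line:schedule-parallel-read}; your observation that the conflict hypothesis forces at least one of $\tx_i, \tx_j$ to write $\objectid$, so the pair can never be folded into a single queue entry, closes that small gap explicitly.
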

\begin{proof}
    We first observe that if two transactions $\tx_j$ and $\tx_i$ are conflicting on object $\objectid$ then they are placed in the same queue $\pendingdb[\objectid]$. Indeed, both $\tx_i$ and $\tx_j$ are embedded in a $\propose$ by \Cref{alg:process-batch}. They are then placed in the queue $\pendingdb[\objectid]$ by \Cref{alg:process-propose} at \Cref{alg:line:schedule-write} (if they reference $\objectid$ in their write set) or \Cref{alg:line:schedule-read} (if they reference $\objectid$ in their read set).

    We are thus left to prove that $\tx_j$ is placed in $\pendingdb[\objectid]$ after $\tx_i$. Since $j>i$ we distinguish two cases: (i) both $\tx_j$ and $\tx_i$ are part of the same batch with index $\batchidx$ and (ii) $\tx_j$ and $\tx_i$ are part of different batches with indices $\batch_j$ and $\batch_i$ respectively.
    In the first case (i), $\tx_j$ and $\tx_i$ are referenced in the same $\propose$ by \Cref{alg:process-batch} at \Cref{alg:line:propose-creation-1} and \Cref{alg:line:propose-creation-2} but respecting the order $j > i$. As a result, $\tx_j$ is processed after $\tx_i$ by the loop \Cref{alg:line:process-proposed-txs}, and placed in the queue $\pendingdb[\objectid]$ (at \Cref{alg:line:schedule-write} or \Cref{alg:line:schedule-read}) after $\tx_i$.
    In the second case (ii), $\tx_j$ and $\tx_i$ are referenced in different $\propose$ by \Cref{alg:process-batch} at \Cref{alg:line:propose-creation-2} but \Cref{lm:process-propose-order} ensures that the $\propose$ referencing transactions of $\batch_j$ is processed after the $\propose$ referencing transactions of $\batch_i$. As a result, $\tx_j$ is placed in the queue $\pendingdb[\objectid]$ after $\tx_i$ (at \Cref{alg:line:schedule-write} or \Cref{alg:line:schedule-read}).
\end{proof}

\para{Sequential objects access}
We now argue point (ii), namely that \Cref{alg:core} accesses objects in the same order as the sequential schedule.

\begin{lemma}[Unlock after Access] \label{lm:remove-after-access}
    If a transaction $T$ is placed in a queue $\pendingdb[\objectid]$, it can only be removed from that queue after accessing $\objectsdb[\objectid]$.
\end{lemma}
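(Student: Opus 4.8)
The plan is to observe that a transaction leaves a pending queue only through one function, and that every invocation of that function is preceded by an access to the corresponding object store. First I would scan \Cref{alg:process-propose,alg:core,alg:process-result} for every statement that mutates $\pendingdb$. Insertions happen only in \Cref{alg:process-propose} (\Cref{alg:line:schedule-write,alg:line:schedule-read,alg:line:schedule-parallel-read}), and the only statement that removes a transaction from a queue is the update of $\pendingdb[\objectid][0]$ inside $\Call{AdvanceLock}{\tx, \objectid}$ (\Cref{alg:core}, \Cref{alg:line:update-pending-queue}), which drops $\tx$ from the set stored at the head entry. Hence it suffices to inspect the two call sites of $\Call{AdvanceLock}{\tx, \objectid}$ and, throughout, to reason about the single \execworker that owns $\objectid$, so that $\pendingdb[\objectid]$ and $\objectsdb[\objectid]$ denote the same local stores.

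I would then split on how $\tx$ references $\objectid$. If $\objectid \in \readset{\tx}$, the effective removal happens inside $\Call{TryTriggerExecution}{\tx}$ at \Cref{alg:line:advance-read-lock}. Following the control flow of that function, this line is reached only after \Cref{alg:line:load-objects}, where the worker reads $\objectsdb[\objectid]$ into $O$ (here $\objectid \in \Call{HandledObjects}{\tx}$ holds precisely because this worker handles $\objectid$); the early returns for unsatisfied dependencies or missing objects guarantee that reaching the $\Call{AdvanceLock}{\tx, \objectid}$ call implies the load already executed, so the read access precedes the removal. If $\objectid \in \writeset{\tx}$, the removal happens inside $\Call{ProcessResult}{\result}$ at \Cref{alg:line:advance-lock-call}; there $\Call{UpdateStores}{\tx, O, I}$ at \Cref{alg:line:update-stores}, which writes $\objectsdb[\objectid]$, runs earlier in the same procedure, and moreover $\Call{ProcessResult}{\result}$ can only fire after $\tx$ was dispatched for execution through \Cref{alg:line:send-ready} of $\Call{TryTriggerExecution}{\tx}$, itself preceded by the load at \Cref{alg:line:load-objects}. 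In both cases the access to $\objectsdb[\objectid]$ strictly precedes the removal of $\tx$ from $\pendingdb[\objectid]$.

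The main obstacle is bookkeeping rather than conceptual depth. I would need to confirm that these really are the only removal sites, and to handle the asymmetry that reads are unlocked eagerly (before execution, in $\Call{TryTriggerExecution}{\tx}$) whereas writes are unlocked lazily (after the result returns, in $\Call{ProcessResult}{\result}$). I would also note that the loop over $\objectid \in \tx$ in $\Call{ProcessResult}{\result}$ re-invokes $\Call{AdvanceLock}{\tx, \objectid}$ on read objects that were already unlocked; these are harmless no-ops and do not contradict the statement, since the access had already occurred when the earlier, effective removal took place.
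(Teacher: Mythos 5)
Your proposal is correct and follows essentially the same route as the paper's proof: identify $\Call{AdvanceLock}{\tx, \objectid}$ as the sole removal mechanism, and show that each of its two call sites (\Cref{alg:line:advance-read-lock} of \Cref{alg:core} for reads, \Cref{alg:line:advance-lock-call} of \Cref{alg:process-result} for writes, the latter reachable only via the $\ready$/$\result$ chain originating at \Cref{alg:line:send-ready}) is preceded by the access at \Cref{alg:line:load-objects}. Your extra remark that $\Call{UpdateStores}{}$ also touches $\objectsdb[\objectid]$ is inessential (and would not hold when $\tx$ aborts with empty $O, I$), but your load-bearing argument is exactly the paper's and is sound.
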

\begin{proof}
    We argue this lemma by construction of the algorithms of \sysname. Transaction $T$ accesses $\objectsdb[\objectid]$ only at \Cref{alg:line:load-objects} (\Cref{alg:core}) and can only be removed from $\pendingdb[\objectid]$ following a call to $\Call{AdvanceLock}{T, \objectid}$. This call can occur only in two places.
    It can first occur (i) at \Cref{alg:line:advance-read-lock} of \Cref{alg:core} which happens after the access to $\objectsdb[\objectid]$ (\Cref{alg:line:load-objects} of the same algorithm).
    It can then occur (ii) at \Cref{alg:line:advance-lock-call} of \Cref{alg:process-result} which can only be triggered upon receiving a $\result$ referencing $T$, which in turn can only be created after creating a $\ready$ embedding $T$. However, creating the latter message only occurs at \Cref{alg:line:send-ready} of \Cref{alg:core}, thus after accessing $\objectsdb[\objectid]$ (\Cref{alg:line:load-objects} of that same algorithm).
\end{proof}

\begin{lemma}[Sequential Object Access] \label{lm:access-order}
    If a transaction $\tx_j$ is placed in a queue $\pendingdb[\objectid]$ after a transaction $\tx_i$, then $\tx_j$ accesses $\objectid$ after $\tx_i$.
\end{lemma}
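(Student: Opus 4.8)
The plan is to combine the FIFO discipline of the pending queues with \Cref{lm:remove-after-access}. The single structural fact that drives everything is that a transaction touches an object's data only while sitting at the \emph{head} of that object's queue, and that the head is the only place from which transactions are ever removed. Granting this, if $\tx_i$ occupies a strictly earlier slot of $\pendingdb[\objectid]$ than $\tx_j$, then $\tx_j$ cannot reach the head---and therefore cannot access $\objectid$---until $\tx_i$ has been drained from the head, which by \Cref{lm:remove-after-access} happens only after $\tx_i$ has itself accessed $\objectid$.

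First I would pin down exactly when an access occurs. The only access to $\objectsdb[\objectid]$ is at \Cref{alg:line:load-objects} of \Cref{alg:core}, inside \textsc{TryTriggerExecution}, and it is reached only once the early return of the \textsc{HasDependencies} check at \Cref{alg:line:check-dependencies} is bypassed. By \Cref{alg:line:check-next-in-queue}, bypassing this check for $\tx$ requires $\tx \in \pendingdb[\objectid][0]$ for every handled object, i.e.\ $\tx$ is at the head of $\pendingdb[\objectid]$. Hence \emph{$\tx$ accesses $\objectid$ only while at the head of $\pendingdb[\objectid]$}. Next I would establish the FIFO invariant: new entries are only ever appended to the tail (\Cref{alg:line:schedule-write} and \Cref{alg:line:schedule-read}), the sole mutation of an existing entry is \textsc{AdvanceLock} acting on position $0$, and once the head entry's transaction list becomes empty it is deleted (cf.\ the constant-memory discussion around \Cref{alg:line:update-pending-queue}), promoting its successor to the head. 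Thus entries are consumed strictly in insertion order, and no transaction in a later entry can belong to $\pendingdb[\objectid][0]$ until every earlier entry has been fully drained and removed.

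Finally I would assemble the argument. By hypothesis $\tx_j$ is placed after $\tx_i$; since conflicting transactions are ordered by \Cref{lm:queue-order} while two pure reads that merge into a common entry are not placed ``after'' one another (\Cref{alg:line:schedule-parallel-read} puts them at the same height), being placed after forces $\tx_i$ into a strictly earlier entry than $\tx_j$. By the FIFO invariant, $\tx_j$ can reach the head of $\pendingdb[\objectid]$ only after $\tx_i$'s entry is drained, i.e.\ after $\tx_i$ is removed from the queue; and by \Cref{lm:remove-after-access}, $\tx_i$ is removed only after it accesses $\objectsdb[\objectid]$. Combining this with the first step yields that $\tx_j$ accesses $\objectid$ strictly after $\tx_i$ does.

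The main obstacle I anticipate is the careful treatment of the parallel-read case and of what ``removed from the queue'' means for a multi-transaction read entry: \textsc{AdvanceLock} removes a single read from the shared head entry without immediately deleting that entry, so I must argue that the head is promoted only once its list empties, and that concurrent reads sharing one entry are exactly the situation excluded by the ``placed after'' hypothesis. Making the FIFO/head-only-removal invariant precise is the delicate part; once it is in place, the conclusion follows directly from \Cref{lm:remove-after-access}.
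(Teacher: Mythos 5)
Your proof is correct and takes essentially the same route as the paper's: both rest on the fact that an access at \Cref{alg:line:load-objects} requires passing the head-of-queue check at \Cref{alg:line:check-next-in-queue}, combined with \Cref{lm:remove-after-access} to conclude that $\tx_i$ must be removed---hence must already have accessed $\objectid$---before $\tx_j$ can reach the head. The paper phrases this as a contradiction (if $\tx_j$ accessed first, $\tx_i$ would still occupy an earlier slot, so $\tx_j$ could not be at position $0$), whereas you argue directly and make explicit the FIFO/head-only-removal invariant and the parallel-read case that the paper leaves implicit; the substance is the same.
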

\begin{proof}
    Let's assume that $\tx_j$ and $\tx_i$ are respectively placed at positions $j'$ and $i'$ of the queue $\pendingdb[\objectid]$, with $j'>i'$. Let's assume by contradiction that $\tx_j$ accesses $\objectid$ before $\tx_i$.
    Access to $\objectid$ is only performed by \Cref{alg:core} at \Cref{alg:line:load-objects} after successfully passing the `dependencies' check at \Cref{alg:line:check-dependencies}. \Cref{lm:remove-after-access} thus ensures that $\tx_i$ is still in $\pendingdb[\objectid]$ when the call to  $\Call{HashDependecnies}{\tx_j}$ at \Cref{alg:line:check-dependencies} returns \textbf{False}.
    This is however a direct contradiction of the check at \Cref{alg:line:check-next-in-queue} which ensures that $\Call{HasDependecnies}{\tx_j}$ returns \textbf{False} only if $\tx_j$ is in the $\pendingdb[\objectid]$ at position $j'=0$. However, since $\tx_i$ is still in $\pendingdb[\objectid]$, it follows that $0 \leq i' < j'$, thus a contradiction.
\end{proof}

\para{Sequential transaction execution}
We finally argue point (iii), namely that \Cref{alg:process-ready} executes transactions in the same order as the sequential schedule.

\begin{lemma}[Execution after Object Access] \label{lm:access-then-execute}
    If a transaction $T$ references $\objectid$ in its read or write set, it can only be executed after accessing $\objectsdb[\objectid]$.
\end{lemma}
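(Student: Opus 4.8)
The plan is to trace the causal chain between the single line at which $T$'s execution occurs and the single line at which the store entry $\objectsdb[\objectid]$ is read, and to argue that the former is reachable only after the latter. Execution of $T$ happens exactly at \Cref{alg:line:execute} of \Cref{alg:process-ready}, where $\exec{T, \cdot}$ is invoked; the access to $\objectsdb[\objectid]$ happens exactly at \Cref{alg:line:load-objects} of \Cref{alg:core}, where the handling \execworker reads the object data of every object in $\Call{HandledObjects}{T}$. Since $\objectid$ is referenced by $T$ and, by the sharding strategy, is handled by exactly one \execworker $w'$, the relevant access is the read of $\objectsdb[\objectid]$ performed by $w'$ at \Cref{alg:line:load-objects}.

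First I would observe that reaching \Cref{alg:line:execute} requires passing the guard at \Cref{alg:line:wait-for-objects} of \Cref{alg:process-ready}, which blocks until $\objectsreceived[T]$ holds one entry for every object in $\readset{T} \cup \writeset{T}$; in particular it must contain the datum for $\objectid$. That datum can enter $\objectsreceived[T]$ only through a $\ready$ message carrying $O$, and the only place a $\ready$ for $T$ containing $\objectid$'s data is created and dispatched is \Cref{alg:line:send-ready} of \Cref{alg:core}, emitted by $w'$ immediately after it has populated $O$ at \Cref{alg:line:load-objects}. Hence the access of $\objectsdb[\objectid]$ by $w'$ strictly precedes the send of that $\ready$. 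By the network model (a message is received only after it is sent), the executor receives the $\ready$ after $w'$ sent it, hence after the access; and execution at \Cref{alg:line:execute} follows the receipt. Chaining these precedences yields that $T$'s execution occurs after $\objectsdb[\objectid]$ was accessed.

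The step I expect to require the most care is pinning down that the datum for $\objectid$ specifically---rather than merely some set of objects of the right cardinality---must be present in $\objectsreceived[T]$ before the guard at \Cref{alg:line:wait-for-objects} passes. This rests on the invariant that every object of $T$ is handled by exactly one \execworker and that each such worker contributes the data of all and only the objects it handles in its $\ready$; consequently the cardinality check $\len{\objectsreceived[T]} = \len{\readset{T}} + \len{\writeset{T}}$ can be met only once the contribution of $w'$ (the unique handler of $\objectid$) has arrived, so $\objectid$'s datum is necessarily among the received objects. Finally, the abort case is benign: if some referenced object is missing, $\Call{AbortExec}{T}$ returns \textbf{True} at \Cref{alg:line:abort-condition} and \Cref{alg:line:execute} is skipped, so $T$ is not executed and the lemma holds vacuously; otherwise every object---including $\objectid$---was read at \Cref{alg:line:load-objects} by its handler and the argument above applies.
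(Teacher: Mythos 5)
Your proof is correct and follows essentially the same route as the paper's: execution happens only at \Cref{alg:line:execute} of \Cref{alg:process-ready}, which is reachable only upon receiving a $\ready$, and the $\ready$ is created only at \Cref{alg:line:send-ready} of \Cref{alg:core}, after the access to $\objectsdb[\objectid]$ at \Cref{alg:line:load-objects}. You additionally make explicit two points the paper leaves implicit---that the cardinality guard at \Cref{alg:line:wait-for-objects} can only be satisfied once the unique handler of $\objectid$ has contributed its $\ready$ (since handlers' object sets partition the read/write set), and that the abort branch is vacuous---which strengthens rather than changes the argument.
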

\begin{proof}
    We argue this lemma by construction of \Cref{alg:process-ready}. Transactions are executed only at \Cref{alg:line:execute} of \Cref{alg:process-ready} and this algorithm is only triggered upon receiving a $\ready$. However, creating the latter message only occurs at \Cref{alg:line:send-ready} of \Cref{alg:core}, thus after accessing $\objectsdb[\objectid]$ (\Cref{alg:line:load-objects} of that same algorithm).
\end{proof}

\begin{theorem}[Serializability] \label{th:serializability}
    If a correct validator executes the sequence of transactions $[\tx_1,\dots,\tx_n]$, it holds the same object state $S$ as if the transactions were executed sequentially.
\end{theorem}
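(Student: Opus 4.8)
The plan is to prove serializability by establishing that \sys's execution is \emph{conflict-equivalent} to the sequential schedule $[\tx_1,\dots,\tx_n]$ (\Cref{def:sequential-schedule}), and then invoking the standard fact that conflict-equivalent schedules leave the object store in identical final states. Since the sequential schedule yields the sequential state $S$ by definition, conflict-equivalence gives the theorem directly. The final state is fully determined by, for each object $\objectid$, the ordered sequence of writes applied to it together with the input values each writing transaction observes; it therefore suffices to show that, for every object, \sys performs the conflicting operations in exactly the relative order of the sequential schedule.

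First I would fix an arbitrary object $\objectid$ and take any two transactions $\tx_i,\tx_j$ with $i<j$ that conflict on $\objectid$ (\Cref{def:conflicting-transaction}). By \Cref{lm:queue-order}, $\tx_j$ is placed in $\pendingdb[\objectid]$ strictly after $\tx_i$, and by \Cref{lm:access-order} this queue order forces $\tx_j$ to access $\objectsdb[\objectid]$ only after $\tx_i$ has done so. The remaining ingredient is that the later transaction must observe the \emph{writes} of the earlier one: here I would appeal to the order of operations inside \Cref{alg:process-result}, where $\Call{UpdateStores}{\tx,O,I}$ (which applies $\tx_i$'s writes to $\objectsdb$) runs before the call to $\Call{AdvanceLock}{\tx,\objectid}$ that removes $\tx_i$ from the queue and unblocks the next transaction. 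Combined with \Cref{lm:remove-after-access} (a transaction is not removed from a queue before accessing the object) and \Cref{lm:access-then-execute} (execution follows object access), this ensures that when $\tx_j$ eventually loads $\objectsdb[\objectid]$ it sees the value left by $\tx_i$, precisely matching what $\tx_j$ would read in the sequential schedule.

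Having pinned down every conflicting pair, I would then note that non-conflicting transactions either touch disjoint objects or only read a shared object, so their operations commute and their relative scheduling is irrelevant to the final state. Hence the only binding constraints are the per-object orderings established above, and every one of them points from the lower index to the higher index. Interpreting these as edges of the precedence (conflict) graph, every edge is index-increasing, so the graph is acyclic and admits $[\tx_1,\dots,\tx_n]$ as a topological order. By the classical characterization of conflict serializability, \sys's schedule is conflict-equivalent to the sequential schedule and thus leaves the object store in the identical state $S$.

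I expect the main obstacle to be the middle paragraph: cleanly tracking, through the asynchronous message flow spanning \Cref{alg:core}, \Cref{alg:process-ready}, and \Cref{alg:process-result}, that the lock on $\objectid$ is released only after both the access and the corresponding write-back have taken effect, so that read-after-write and write-after-write dependencies are honored on the actual object data and not merely on queue positions. Formalizing the ``observes the correct version'' claim --- that the value $\tx_j$ loads is exactly the one the sequential schedule would hand it --- is where the ordering lemmas must be stitched together most carefully, since the read and the write of a single transaction are performed in different algorithms and triggered by different messages.
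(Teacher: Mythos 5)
Your proof is correct and takes essentially the same route as the paper's: the same chain of lemmas (\Cref{lm:queue-order}, \Cref{lm:access-order}, \Cref{lm:access-then-execute}, with \Cref{lm:remove-after-access} in support) orders every conflicting pair by sequence index, and the conclusion is drawn from the same conflict graph, which the paper handles by a contradiction argument where you argue acyclicity and conflict-equivalence directly. Your middle step---that \textsc{UpdateStores} precedes \textsc{AdvanceLock} in \Cref{alg:process-result}, so a later conflicting transaction loads \objectsdb only after the earlier one's write-back has landed---is a sound, explicit treatment of a data-visibility point the paper leaves implicit in its notion of ``access,'' not a genuinely different argument.
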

\begin{proof}
    Consider some execution $E$ and let $G = (V,E)$ be $E$'s conflict graph. Each transaction is a vertex in $V$, and there is a directed edge $\tx_i \rightarrow \tx_j$ if (1) $\tx_i$ and $\tx_j$ have a conflict on some object $\objectid$ and (2) $\tx_j$ accesses $\objectid$ after $\tx_i$ accesses $\objectid$. It is sufficient to show that there are no schedules where $\tx_j$ is executed before $\tx_i$ to prove serializability.
    Let's assume by contradiction that there is a schedule where $\tx_j$ is executed before $\tx_i$, where $j > i$. Since $\tx_j$ and $\tx_i$ conflict on object $\objectid$, \Cref{lm:queue-order} ensures that $\tx_j$ is placed in the queue $\pendingdb[\objectid]$ after $\tx_i$. \Cref{lm:access-order} then guarantees that $\tx_j$'s access to $\objectid$ occurs after $\tx_i$'s access on $\objectid$. However \Cref{lm:access-then-execute} ensures that $\tx_j$'s execution can only happen after accessing $\objectid$. It is then impossible to execute $\tx_j$ before $\tx_i$, hence a contradiction. Since $\objectid$ was chosen arbitrarily, the same reasoning applies to all objects on which $\tx_i$ and $\tx_j$ conflict.
\end{proof}

\subsection{Determinism} \label{sec:determinism-proof}
We show that \sysname satisfies the determinism property (\Cref{def:determinism} of \Cref{sec:properties}). Intuitively, this property ensures that all correct validators have the same object state after executing the same sequence of transactions. The proof follows from the following arguments: (i) all correct \sysname validators build the same dependency graph given the same input sequence of transaction, (ii) individual transaction execution is a deterministic process (\Cref{as:deterministic-execution}), (iii) transactions explicitly reference their entire read and write set (\Cref{as:explicit-read-write-set}), and (iv) all validators executing the same transactions obtain the same state.

\begin{assumption}[Deterministic Individual Execution] \label{as:deterministic-execution}
    Given an input transaction $\tx$ and objects $O$, all calls to $\exec{\tx, O}$ (\Cref{alg:line:execute} of \Cref{alg:process-ready}) return the same output.
\end{assumption}

\begin{assumption}[Explicit Read and Write Set] \label{as:explicit-read-write-set}
    Each transaction $\tx$ explicitly references all the objects of its read and write set. That is, the complete read and write set of $\tx$ can be determined by locally inspecting $\tx$ without the need for external context.
\end{assumption}

\Cref{as:deterministic-execution} is fulfilled by most blockchain execution environments such as the EVM~\cite{evm}, the SVM~\cite{solana-vm}, and both the MoveVM~\cite{move-vm} (used by the Aptos blockchain~\cite{aptos}) and the Sui MoveVM~\cite{sui-move} (used by the Sui blockchain~\cite{sui-lutris}). All execution engines except the Sui MoveVM also fulfill \Cref{as:explicit-read-write-set} (\Cref{sec:dynamic} and \Cref{sec:child-details} remove this assumption to make \sysname compatible with Sui).

We rely on the following lemmas to prove determinism in \Cref{th:determinism}.

\begin{lemma} \label{lm:same-queues}
    Given the same sequence of transactions $[\tx_1,\dots,\tx_n]$, all correct validators build the same  execution schedule (that is, they build same queues $\pendingdb$).
\end{lemma}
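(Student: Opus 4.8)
The plan is to prove the statement by induction on the prefix of the committed sequence that has been processed, showing that after each transaction is scheduled the queues $\pendingdb$ coincide at every correct validator. First I would pin down the granularity of the claim: since the object-to-worker assignment $\Call{Handler}{\objectid}$ depends only on the collision-resistant identifier $\objectid$, the same object is handled by the corresponding logical $\execworker$ at every validator, and each queue $\pendingdb[\objectid]$ is maintained solely by that worker. It therefore suffices to show that, for every $\objectid$, the queue $\pendingdb[\objectid]$ is built identically across validators.

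Next I would establish a single, validator-independent order in which transactions reach the queue-building logic of \Cref{alg:process-propose}. By \Cref{lm:process-propose-order}, every correct validator processes batches in increasing order of $\batchidx$, regardless of the order in which the $\propose$ messages actually arrive, because the watermark loop at \Cref{alg:line:wait-for-batches} canonicalizes the processing order. Within a batch, the digest $\batchid$ binds the batch content, so every validator iterates over the identical ordered list of transactions at \Cref{alg:line:process-proposed-txs}. Hence all validators feed transactions to the scheduler in exactly the sequential-schedule order.

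The core of the argument is then the induction itself. The base case is immediate: all queues start empty. For the inductive step I would assume all validators hold identical queues after scheduling the first $m$ transactions and consider the next transaction $\tx$. By \Cref{as:explicit-read-write-set}, $\readset{\tx}$ and $\writeset{\tx}$ are fully determined by inspecting $\tx$, hence identical everywhere. For each handled $\objectid$, the branch taken in \Cref{alg:process-propose} is a deterministic function of $\writeset{\tx}$, $\readset{\tx}$, and the current trailing entry $\pendingdb[\objectid][-1]$: a write appends $(\Write,[\tx])$ (\Cref{alg:line:schedule-write}); a read either appends $(\Read,[\tx])$ (\Cref{alg:line:schedule-read}) or is merged into the trailing read group (\Cref{alg:line:schedule-parallel-read}). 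Since all inputs to these branches are identical by the induction hypothesis, the resulting queues remain identical, and \Cref{lm:queue-order} confirms that the relative order inside each queue matches the global order, ruling out any spurious reordering.

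The main obstacle I anticipate is not the inductive bookkeeping but justifying that asynchrony does not break determinism: different validators, and different workers within a validator, may receive $\propose$ messages in arbitrary orders, so I must argue that the processing order is dictated by the watermark mechanism rather than by arrival order. This is precisely what \Cref{lm:process-propose-order} supplies, so the argument reduces to verifying that every queue mutation in \Cref{alg:process-propose} reads only from the transaction's explicit read/write set and the already-determined queue state, with no dependence on message timing or on objects handled by other workers.
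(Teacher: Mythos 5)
Your proposal is correct and follows essentially the same route as the paper's proof: both rest on \Cref{lm:process-propose-order} to canonicalize the batch-processing order regardless of message arrival, and on \Cref{lm:queue-order} together with the determinism of the branching in \Cref{alg:process-propose} to conclude that the queues coincide. Your explicit induction on the scheduled prefix (and your explicit appeal to \Cref{as:explicit-read-write-set} for the read/write sets being locally determined) merely spells out what the paper compresses into the closing remark that the queue-building process is deterministic.
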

\begin{proof}
    We argue this property by construction of \Cref{alg:process-batch} and \Cref{alg:process-propose}.
    Since all validators receive the same input sequence $[\tx_1,\dots,\tx_n]$ and \Cref{alg:process-batch} respects the order of transactions (\Cref{alg:line:propose-creation-1}), all correct validators create the same sequence of $\propose$. \Cref{lm:process-propose-order} then ensures that \Cref{alg:process-propose} processes each $\propose$ respecting the transaction order. Finally, \Cref{lm:queue-order} ensures that all correct validators place the transactions in the same order in the queues. Since this process is deterministic, all correct validators build the same queues $\pendingdb$.
\end{proof}

\begin{lemma} \label{lm:same-state}
    No two correct validators creating the same $\result$ (\Cref{alg:line:send-result} of \Cref{alg:process-ready}) obtain a different object state $\objectsdb$.
\end{lemma}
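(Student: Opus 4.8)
The plan is to show that every write to $\objectsdb$ is funneled through a single deterministic routine whose behavior depends only on the content of the $\result$ being processed, and then to lift this to the full state by induction. First I would inspect all of \sysname's algorithms to confirm that the only place where $\objectsdb$ is modified is the call to $\Call{UpdateStores}{\tx, O, I}$ at \Cref{alg:line:update-stores} of \Cref{alg:process-result}. Reading that routine, processing a $\result = (\tx, O, I)$ overwrites $\objectsdb[\Call{Id}{o}] \gets o$ for every $o \in O$ and deletes $\objectsdb[\objectid]$ for every $\objectid \in I$, leaving all other keys untouched. This establishes that the state transition induced by a single $\result$ is a fixed function of the pair $(O, I)$ carried inside the message.

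Next I would observe that this per-message update is unconditional: since $\Call{UpdateStores}{}$ performs plain overwrites and deletions, the post-image of every touched key is fully determined by $(O, I)$ irrespective of its pre-image, and untouched keys are preserved verbatim. Hence two correct validators that create---and therefore process---the same $\result$ apply identical modifications to their respective $\objectsdb$ maps. This is the local core of the lemma.

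To conclude that the two validators hold the same $\objectsdb$ as a whole, I would argue by induction over the prefix of $\result$ messages each has processed, taking the common genesis state as the base case. For the inductive step I rely on \Cref{lm:same-queues}: because all correct validators build identical queues $\pendingdb$, they drain them---and thus emit and consume the corresponding $\result$ messages---in a compatible order. Assuming by induction that both validators held the same $\objectsdb$ before a given $\result$, the previous paragraph yields the same $\objectsdb$ afterwards.

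The step I expect to be the main obstacle is the ordering argument, since the lemma is stated locally but the conclusion concerns the entire map: I must rule out two validators interleaving conflicting updates differently. I would discharge this by appealing to \Cref{th:serializability} and \Cref{lm:queue-order}, which together pin down a single total order on conflicting accesses to any given $\objectid$ shared by all validators, while non-conflicting updates commute because they touch disjoint keys and may therefore be applied in any order without affecting the final state. A minor edge case worth noting is the aborted transaction, whose $\result$ carries empty $O$ and $I$ (\Cref{alg:line:empty-result}); there $\Call{UpdateStores}{}$ is a no-op, which is trivially consistent across validators under the lemma's hypothesis that the same $\result$ is produced.
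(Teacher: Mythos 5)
Your proof is correct, but it takes a genuinely different route from the paper's. The paper's proof is a short, local argument: it observes that once a \result is created it is handled by \Cref{alg:process-result}, whose only state mutation (\textsc{UpdateStores} at \Cref{alg:line:update-stores}) is a deterministic function of the message's contents, and it discharges the ordering concern you correctly identify not by a protocol argument but by an explicit extra assumption, stated in a footnote, that the communication channels between \execworkers preserve message order (satisfied by TCP in the implementation); agreement across validators on \emph{which} \result messages exist is left to the surrounding induction in \Cref{th:determinism}, where the lemma is invoked. You instead derive the ordering from the protocol itself: identical queues (\Cref{lm:same-queues}), a shared total order on conflicting accesses (\Cref{lm:queue-order}, \Cref{th:serializability}), and commutativity of updates touching disjoint keys. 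This buys independence from the FIFO-channel assumption and makes the lemma self-contained, at the cost of re-proving part of the induction of \Cref{th:determinism} inside the lemma (no circularity arises, since neither \Cref{lm:same-queues} nor \Cref{th:serializability} depends on this lemma). One step worth sharpening: \Cref{lm:queue-order} orders queue insertions and object \emph{accesses}, whereas \objectsdb is mutated later, at \result-processing time; the missing link is that a later conflicting transaction's \ready is dispatched only after the earlier transaction's \result has been processed at the worker owning the object, since its write lock is advanced only at \Cref{alg:line:advance-lock-call} of \Cref{alg:process-result}, so conflicting updates to the same key are causally serialized in queue order rather than merely accessed in that order.
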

\begin{proof}
    We argue this property by construction of \Cref{alg:process-result} and by assuming that the communication channel between all \execworkers of each validator preserves the order of messages.\footnote{Our implementation (\Cref{sec:implementation}) satisfies this assumption by implementing all communication through TCP.}
    Once a validator creates a $\result$ (\Cref{alg:line:send-result} of \Cref{alg:process-ready}), it is processed by \Cref{alg:process-result}. This algorithm first calls $\Call{TryAdvanceExecWatermark}{\cdot}$ (\Cref{alg:line:advance-watermark}) which does not alter the object state $\objectsdb$ nor the data carried by the $\result$, and then deterministically updates the object state $\objectsdb$ (\Cref{alg:line:update-stores}) based exclusively on the content of the $\result$.
    As a result, all correct validators obtain the same object state
\end{proof}

\begin{theorem}[\sysname Determinism] \label{th:determinism}
    No two correct validators that executed the same sequence of transactions $[\tx_1, \dots, \tx_n]$ have different stores $\objectsdb$.
\end{theorem}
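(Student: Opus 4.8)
The plan is to prove determinism by induction on the prefix of the committed sequence, maintaining the invariant that once all transactions of index at most $k$ have been applied, every correct validator holds an identical object store $\objectsdb$ (and, by \Cref{lm:same-queues}, identical pending queues $\pendingdb$). The base case is the common genesis store that all correct validators replicate, and the final claim follows by letting $k$ reach $n$. The whole argument is essentially the composition of the four facts advertised before the lemmas: identical schedules (\Cref{lm:same-queues}), deterministic per-transaction execution (\Cref{as:deterministic-execution}), explicit read/write sets (\Cref{as:explicit-read-write-set}), and `same result implies same store' (\Cref{lm:same-state}).

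For the inductive step I would isolate the execution of the $(k{+}1)$-th transaction $\tx_{k+1}$ and show that every correct validator emits the same \result for it. First, \Cref{as:explicit-read-write-set} guarantees that all validators agree on the object ids in $\readset{\tx_{k+1}}$ and $\writeset{\tx_{k+1}}$ merely by inspecting $\tx_{k+1}$, with no external context. Second, I would argue that the object data handed to $\exec{\tx_{k+1}, \cdot}$ is identical across validators: by \Cref{lm:access-order} and \Cref{lm:access-then-execute} the version of each referenced object seen at execution time is exactly the one left by the last sequentially-earlier writer, and the induction hypothesis together with the shared queue order of \Cref{lm:same-queues} makes that predecessor, and the store it produced, the same everywhere. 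Feeding identical inputs into $\exec{\tx_{k+1}, \cdot}$ and invoking \Cref{as:deterministic-execution} yields the same $(O,I)$, hence the same \result; \Cref{lm:same-state} then upgrades `same \result' to `same \objectsdb', closing the induction.

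The main obstacle is precisely this second point: pinning down that the data fed to $\exec{\tx_{k+1}, \cdot}$ agrees across validators, including the corner case of missing objects, where a transaction aborts and emits the empty result $(\tx,\emptyset,\emptyset)$ leaving the store untouched. One must rule out that two validators disagree on whether some referenced object is present or \none at execution time; this leans on the shared schedule plus the dispatch rule that a transaction is executed only after every sequentially-earlier transaction touching its objects has been applied, so that `missingness' is itself a deterministic function of the input sequence. As a cross-check I would note a more economical route that sidesteps the induction entirely: \Cref{th:serializability} already states that each validator's final store equals the store of the sequential execution in the committed order, and \Cref{as:deterministic-execution} with \Cref{as:explicit-read-write-set} make that sequential execution a well-defined deterministic function of $[\tx_1,\dots,\tx_n]$; both validators therefore equal the unique sequential store, hence each other.
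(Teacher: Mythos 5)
Your proposal is correct and takes essentially the same approach as the paper's own proof: induction over the committed sequence, using \Cref{lm:same-queues} for identical schedules, \Cref{as:explicit-read-write-set} and \Cref{as:deterministic-execution} to get identical inputs and outputs of $\exec{\tx, O}$, and \Cref{lm:same-state} to turn identical \result messages into identical stores $\objectsdb$. Your extra attention to the missing-object/abort corner case and your cross-check via \Cref{th:serializability} (which indeed yields a more economical derivation, since the sequential execution is a deterministic function of the input sequence) go slightly beyond the paper's write-up but are fully consistent with it.
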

\begin{proof}
    We argue this property by induction. Assuming the sequence of conflicting transactions $[\tx_1,\dots,\tx_{n-1}]$ for which this property holds, we consider transaction $\tx_n$.
    \Cref{lm:same-queues} ensures that all correct validators build the same execution schedule and thus all correct validators execute conflicting transactions in the same order.
    After scheduling (\Cref{alg:process-propose}), all correct validators create a $\ready$ referencing $\tx_n$ and the set of objects $O$ (\Cref{alg:line:send-ready} of \Cref{alg:core}). Since all validators have the same conflict schedule and the application of the inductive argument ensures that all settled dependencies of $\tx_n$ led to the same state $\objectsdb$ across validators, all correct validators load the same set of objects $O$ and thus create the same $\ready$.
    As a result, all correct validators run \Cref{alg:process-ready} with the same input and thus execute the same sequence of transactions. By construction of \Cref{alg:process-ready} and \Cref{as:explicit-read-write-set}, they all call calls to $\exec{\tx, O}$ (\Cref{alg:line:execute} of \Cref{alg:process-ready}) with the same inputs $\tx$ and $O$. Given that \Cref{as:deterministic-execution} ensures that all calls to $\exec{\tx, O}$ are deterministic, all correct validators thus create the same $\result$ (\Cref{alg:line:send-result}). Finally, \Cref{lm:same-state} ensures that all validators creating the same $\result$ obtain the same object state $\objectsdb$.
    The inductive base is argued by construction: all correct validators start with the same object state $\objectsdb$, and thus create the same $\ready$ (\Cref{alg:line:send-ready} of \Cref{alg:core}) and $\result$ (\Cref{alg:line:send-result} of \Cref{alg:process-ready}) upon executing the first transaction $\tx_1$, which leads to the same state update across correct validators.
\end{proof}

\subsection{Liveness}
We show that \sysname satisfies the liveness property (\Cref{def:liveness} of \Cref{sec:model}). Intuitively, this property guarantees that valid transactions (\Cref{def:valid-transaction}) are eventually executed. The proof argues that (i) all transactions are eventually processed (\Cref{def:processed-transaction}), and (ii) among those transactions, valid ones are not aborted.

\begin{definition}[Valid Transaction] \label{def:valid-transaction}
    A transaction $T$ with index $idx = \Call{Index}{T}$ is valid if all objects referenced by its read and write set are created by a transaction $T'$ with index $idx' < idx$.
\end{definition}

\begin{definition}[Processed Transaction] \label{def:processed-transaction}
    A transaction $T$ is said \emph{processed} when it is either executed or aborted and the object state $\objectsdb$ is updated accordingly.
\end{definition}

\para{Eventual transaction processing}
We start by arguing point (i), that is all transactions are eventually processed. This argument relies on several preliminary lemmas leading \Cref{lm:eventual-processing}.

\begin{lemma} \label{lm:no-deadlock}
    The \sysname scheduling process is deadlock-free (no circular dependencies).
\end{lemma}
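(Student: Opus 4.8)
The claim is that the Pilotfish scheduling process is deadlock-free, i.e., the dependency structure induced by the pending queues $\pendingdb$ contains no circular dependencies. The plan is to argue that all dependencies introduced by the scheduling algorithm are consistent with the fixed total order on transactions coming from consensus, and that a total order cannot contain a cycle.

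\medskip

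\textbf{Key steps.} First I would make precise what a ``dependency'' means in this context: a transaction $\tx_j$ depends on $\tx_i$ exactly when $\tx_i$ sits ahead of $\tx_j$ in some queue $\pendingdb[\objectid]$ that both occupy, so that $\tx_j$ cannot reach the head of that queue (and hence cannot be triggered by $\Call{TryTriggerExecution}{\cdot}$, which requires $\tx_j$ to be at position $0$ via $\Call{HasDependencies}{\cdot}$ at \Cref{alg:line:check-next-in-queue}) until $\tx_i$ has been removed. Second, I would invoke \Cref{lm:queue-order}: whenever two transactions $\tx_i$ and $\tx_j$ with $i<j$ are placed in the same queue $\pendingdb[\objectid]$, the one with the smaller consensus index is placed strictly earlier. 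Thus every intra-queue dependency edge points from a lower-indexed transaction to a higher-indexed one. Third, I would assemble these edges into the global dependency relation and observe that it is a subset of the strict total order $<$ on consensus indices: every edge $\tx_i \to \tx_j$ satisfies $i < j$. Finally, I would conclude that a relation contained in a strict total order is acyclic, since a cycle $\tx_{i_1} \to \tx_{i_2} \to \cdots \to \tx_{i_k} \to \tx_{i_1}$ would force $i_1 < i_2 < \cdots < i_k < i_1$, which is impossible.

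\medskip

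\textbf{Main obstacle.} The subtle point is not the ordering argument itself but ensuring the dependency relation I analyze actually captures \emph{every} way a transaction can be blocked, so that ruling out cycles in this relation genuinely rules out deadlock. In particular I must check that a transaction is blocked \emph{only} by being behind other transactions in queues, and not by some orthogonal waiting condition. Here I would note that the other waiting mechanisms in \Cref{alg:core} and \Cref{alg:process-ready}---waiting for a missing object via $\missingdb$ (\Cref{alg:line:track-missing}) or waiting for one $\ready$ per relevant \execworker (\Cref{alg:line:wait-for-objects} of \Cref{alg:process-ready})---do not create circular waits, because they resolve upon the execution of strictly earlier transactions (for missing objects, the producing transaction has a smaller index by \Cref{def:valid-transaction}) or upon messages whose sending is itself gated by queue-head progress. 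I would therefore argue that all of these waits ultimately reduce to the queue-position ordering already shown to respect $<$, so the combined wait-for relation remains a subrelation of the total order and is acyclic. This reduction is the part requiring the most care, whereas the acyclicity of a suborder of a total order is immediate.
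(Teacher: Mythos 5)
Your proposal is correct, and it rests on the same two pillars as the paper's own proof: \Cref{lm:queue-order} (queue insertion respects the consensus order) plus the observation that a dependency relation contained in the strict total order of consensus indices cannot contain a cycle. The difference lies in which graph you analyze. The paper reuses the conflict graph from its serializability argument, whose edges are defined by the order of \emph{actual object accesses}, and therefore needs \Cref{lm:access-order} as an intermediate step to translate queue positions into access order; it then shows this graph has no edge $\tx_j \rightarrow \tx_i$ with $j > i$. You instead work directly with the \emph{wait-for} relation induced by queue positions ($\tx_j$ waits on $\tx_i$ when $\tx_i$ sits ahead of it in some $\pendingdb[\objectid]$ and blocks it at \Cref{alg:line:check-next-in-queue}), which is the more natural object for a deadlock claim and lets you bypass \Cref{lm:access-order} entirely. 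You also go beyond the paper on a point it leaves implicit: you check that the other blocking mechanisms---waiting on $\missingdb$ and waiting to collect one $\ready$ per involved \execworker---cannot introduce circular waits, because the former resolves once strictly earlier transactions execute (cf.\ \Cref{def:valid-transaction} and the watermark condition in $\Call{MissingObjects}{\cdot}$) and the latter is gated by queue-head progress at the sending workers. That completeness check is precisely the objection a careful reader could raise against the paper's proof (acyclicity of the access-order conflict graph does not by itself rule out a circular wait among transactions that never get to access anything), so including it strengthens your argument; in a full write-up this reduction should be carried out explicitly rather than sketched, since it is where the real content of the lemma lies.
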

\begin{proof}
    Consider some execution $E$ and let $G = (V,E)$ be $E$'s conflict graph. Each transaction is a vertex in $V$, and there is a directed edge $\tx_i \rightarrow \tx_j$ if (1) $\tx_i$ and $\tx_j$ have a conflict on some object \objectid and (2) $\tx_j$ accesses $\objectid$ after $\tx_i$ accesses $\objectid$. It is sufficient to show that $G$ contains no cycles to prove liveness. That is, it is sufficient to show that $G$ contains no edges $\tx_j \rightarrow \tx_i$, where $j > i$.
    Let's assume by contradiction that $G$ has an edge $\tx_j \rightarrow \tx_i$, where $j > i$. Then, by rule (1) of the construction of $G$, $\tx_j$ and $\tx_i$ must conflict on some object $\objectid$.
    \Cref{lm:queue-order} ensures that $\tx_j$ is placed in the queue in $\pendingdb[\objectid]$ after $\tx_i$ . \Cref{lm:access-order} then guarantees that $\tx_j$'s access to $\objectid$ occurs after $Ti$'s access on $\objectid$. It is then impossible for $G$ to contain an edge $\tx_j \rightarrow \tx_i$ as this violates rule (2) of the construction of $G$, hence a contradiction. Since $\objectid$ was chosen arbitrarily, the same reasoning applies to all objects on which $\tx_i$ and $\tx_j$ conflict.
\end{proof}

\begin{lemma} \label{lm:clear-queues}
    If a transaction $T$ is processed (\Cref{def:processed-transaction}), it is eventually removed from all queues $\pendingdb[\objectid]$ where $\objectid$ is referenced by the read or write set of $T$.
\end{lemma}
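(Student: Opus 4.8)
The plan is to argue the lemma directly from the algorithms, by tracing the only code path that can remove $T$ from a queue and showing that it must fire for every object $T$ references once $T$ is processed. The single primitive that deletes $T$ from a queue $\pendingdb[\objectid]$ is $\Call{AdvanceLock}{T, \objectid}$ (\Cref{alg:line:advance-lock} of \Cref{alg:core}), which removes $T$ from the head entry $\pendingdb[\objectid][0]$; this call appears in exactly two places: at \Cref{alg:line:advance-read-lock} of \Cref{alg:core} (for read-set objects, right after the $\ready$ is emitted) and at \Cref{alg:line:advance-lock-call} of \Cref{alg:process-result} (for every $\objectid \in \tx$, while handling the $\result$). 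So first I would fix an \execworker that holds $T$ in some queue and observe that, by \Cref{def:processed-transaction}, ``$T$ processed'' means $\Call{ProcessResult}{\cdot}$ ran for $T$ on that worker, since the state $\objectsdb$ is only updated there (\Cref{alg:line:update-stores} of \Cref{alg:process-result}).

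Next I would split on read- versus write-set membership. For $\objectid \in \readset{T}$: reaching $\Call{ProcessResult}{\cdot}$ for $T$ requires a $\result$ for $T$, which is only produced in \Cref{alg:process-ready} after the corresponding $\ready$ was sent at \Cref{alg:line:send-ready} of \Cref{alg:core}; and that send is immediately followed, in the same invocation of $\Call{TryTriggerExecution}{T}$, by $\Call{AdvanceLock}{T, \objectid}$ for every $\objectid \in \readset{T}$ (\Cref{alg:line:advance-read-lock}). Moreover the send is only reached once the $\Call{HasDependencies}{T}$ check (\Cref{alg:line:check-dependencies}) passes, which by \Cref{alg:line:check-next-in-queue} means $T$ sits at the head of all its queues at that moment; hence this $\Call{AdvanceLock}{}$ call actually removes $T$ rather than acting on a different head. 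Thus every read-set queue is cleared no later than when $T$ is processed.

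For $\objectid \in \writeset{T}$ I would argue that $T$ remains at the head of $\pendingdb[\objectid]$ from the instant the $\ready$ is sent until $\Call{ProcessResult}{\cdot}$ runs: the read-lock loop at \Cref{alg:line:advance-read-lock} only advances read-set objects, so nothing removes $T$ from a write queue in the interim, and any transaction ordered behind $T$ in that queue fails its own $\Call{HasDependencies}{}$ check and cannot overtake it. Therefore, when $\Call{ProcessResult}{\cdot}$ executes for $T$, it is still the head of each write queue, and the loop at \Cref{alg:line:advance-lock-call} calls $\Call{AdvanceLock}{T, \objectid}$ for every $\objectid \in \tx$, removing $T$ from all remaining (write-set) queues; empty head entries are then deleted as per \Cref{alg:line:update-pending-queue}. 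Combining the two cases yields removal from every queue $T$ occupies.

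The main obstacle I anticipate is the bookkeeping around the head-of-queue invariant: I must rule out any interleaving in which $T$ is not at position $0$ when the relevant $\Call{AdvanceLock}{}$ fires (which would make the removal a no-op on the wrong entry), and I must confirm the argument also covers the aborted case, where $T$ still traverses \Cref{alg:process-ready} (reaching \Cref{alg:line:wait-for-objects}) and yields a $\result$ with empty $O, I$, so that \Cref{alg:process-result}, and hence both $\Call{AdvanceLock}{}$ sites, still execute. I would close these gaps using \Cref{lm:queue-order} and \Cref{lm:access-order} to pin down that conflicting transactions preserve their relative queue positions and that $T$ only ever leaves a queue from the head, after having been served.
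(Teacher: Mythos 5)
Your proposal is correct and follows essentially the same route as the paper's own proof: both identify $\Call{AdvanceLock}{T,\objectid}$ as the sole removal primitive and trace its two call sites---the read-lock release immediately after the $\ready$ is sent (\Cref{alg:line:advance-read-lock} of \Cref{alg:core}) and the write-lock release in \Cref{alg:process-result} (\Cref{alg:line:advance-lock-call})---noting that processing $T$ forces both to fire. Your additional care about the head-of-queue invariant (so the removal is not a no-op on a different entry) and about aborted transactions still producing a $\result$ with empty $O, I$ addresses details the paper's terser proof leaves implicit, but does not change the argument's structure.
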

\begin{proof}
    We argue this lemma by construction of \Cref{alg:core} and \Cref{alg:process-result}. Transaction $T$ is  removed from all queues $\pendingdb[\objectid]$ upon calling $\Call{AdvanceLock}{T,\objectid}$. This call can occur in two places.
    (i) The first call occurs at \Cref{alg:line:advance-read-lock} of \Cref{alg:core} to effectively release read locks. This call occurs right after creating a $\ready$ referencing $T$ (\Cref{alg:line:send-ready}), a necessary step to trigger \Cref{alg:process-ready} and thus process $T$.
    (ii) The second call occurs at \Cref{alg:line:advance-lock-call} of \Cref{alg:process-result} to effectively release write locks. This call occurs right after updating $\objectsdb[\objectid]$ and thus terminating the processing of $T$.
\end{proof}

\begin{lemma} \label{lm:advance-exec-watermark}
    If the sequence of transactions $[\tx_1, \dots, \tx_n]$ is processed (\Cref{def:processed-transaction}), the watermark $\executedIdx$ (\Cref{alg:line:executed-idx} of \Cref{alg:core}) is advanced to $n$.
\end{lemma}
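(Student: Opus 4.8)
The plan is to reason directly about the bookkeeping maintained by $\Call{TryAdvanceExecWatermark}{\cdot}$ (\Cref{alg:core}), which is the only place where the watermark $\executedIdx$ is modified. First I would observe that, by \Cref{def:processed-transaction} and the construction of \Cref{alg:process-result}, whenever a transaction $\tx$ is processed its handler invokes $\Call{TryAdvanceExecWatermark}{\tx}$ at \Cref{alg:line:advance-watermark}, which inserts $\Call{Index}{\tx}$ into the set $\executed$ at \Cref{alg:line:update-watermark-buffer}. Since the sequence $[\tx_1,\dots,\tx_n]$ is processed and $\Call{Index}{\tx_i}=i$, it follows that every index in $\{1,\dots,n\}$ is eventually inserted into $\executed$, i.e. $\{1,\dots,n\}\subseteq\executed$.

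The heart of the argument is the following invariant, which I would prove by induction on the sequence of calls to $\Call{TryAdvanceExecWatermark}{\cdot}$: at every point $\executedIdx = \max\{k\ge 0 : \{1,\dots,k\}\subseteq\executed\}$, i.e. $\executedIdx$ equals the length of the longest contiguous prefix of indices currently present in $\executed$. The base case is immediate, since $\executed=\emptyset$ and $\executedIdx=0$. For the inductive step, a single call adds one index $m$ to $\executed$ and then runs the while loop at \Cref{alg:line:advance-executed-idx}, which increments $\executedIdx$ exactly while $(\executedIdx+1)\in\executed$: if $m$ does not fill the gap at $\executedIdx+1$ the loop does not fire and the longest prefix is unchanged, whereas if $m$ does fill it, the loop advances $\executedIdx$ through every consecutively-present index and halts precisely at the first missing one, which is by definition the new longest-prefix length. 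Monotonicity of $\executedIdx$ also falls out, since the loop only increments.

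Finally I would combine the two observations: once all of $\tx_1,\dots,\tx_n$ have been processed we have $\{1,\dots,n\}\subseteq\executed$, and because $[\tx_1,\dots,\tx_n]$ is the full sequence no index larger than $n$ is ever inserted, so $n+1\notin\executed$. Applying the invariant then yields $\executedIdx=n$, as required.

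I expect the main obstacle to be handling \emph{out-of-order} processing cleanly: transactions need not be processed in index order, so the watermark may stall on a missing low index even after many higher indices have already entered $\executed$. The prefix invariant is exactly what tames this, because it characterises $\executedIdx$ purely in terms of which indices are present rather than in terms of the order in which individual loop advances occur, making the concluding step a one-line consequence. A minor point still to nail down is that $\Call{Index}{\tx_i}=i$ for the given sequence and that the loop terminates exactly at $n$ (using $n+1\notin\executed$), so that the watermark neither undershoots nor overshoots its target.
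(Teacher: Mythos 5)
Your proof is correct and takes essentially the same route as the paper's: both argue that every processed transaction's invocation of \textsc{TryAdvanceExecWatermark} places its index in the buffer $\executed$, after which the while loop at \Cref{alg:line:advance-executed-idx} advances $\executedIdx$ to $n$. Your explicit longest-contiguous-prefix invariant merely formalizes, by induction on the calls, the loop behavior the paper asserts in a single line, making the out-of-order-processing case explicit.
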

\begin{proof}
    The processing of $T$ involves updating the object state $\objectsdb$ (\Cref{alg:line:update-stores} of \Cref{alg:process-result}). However, the watermark $\executedIdx$ is only updated upon calling $$\Call{TryAdvanceExecWatermark}{\tx_i}$$ at \Cref{alg:line:advance-watermark} of this same algorithm. Thus, by construction, the buffer $\executed$ contains every processed transaction $\tx_i$ (\Cref{alg:line:update-watermark-buffer} of \Cref{alg:core}), and once the sequence $[\tx_1, \dots, \tx_n]$ is processed, the watermark $\executedIdx$ is advanced to $$\executedIdx = \max\{\Call{Index}{\tx_1},\dots,\Call{Index}{\tx_n}\} = n$$ (\Cref{alg:line:advance-executed-idx} of \Cref{alg:core}).
\end{proof}

\begin{lemma} \label{lm:process-if-ready}
    A transaction $T$ is eventually processed (\Cref{def:processed-transaction}) if it has neither missing dependencies nor missing objects that could be created by earlier transactions. That is, $T$ is eventually processed if \Cref{alg:core} creates a $\ready$ referencing $T$.
\end{lemma}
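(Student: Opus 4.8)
The plan is to trace the life cycle of $T$ through the message-passing structure of \Cref{alg:core}, \Cref{alg:process-ready}, and \Cref{alg:process-result}, invoking the reliable-delivery guarantee of our network model (every message sent between correct processes is eventually delivered) at each hop. First I would observe that the two hypotheses---no missing dependencies and no missing objects creatable by earlier transactions---are precisely the negations of the two guards that cause $\Call{TryTriggerExecution}{T}$ to return early: the check $\Call{HasDependencies}{T}$ at \Cref{alg:line:check-dependencies} and the non-emptiness of $\Call{MissingObjects}{T}$ at \Cref{alg:line:check-missing-objects}. Hence on each \execworker that handles an object of $T$, control reaches the $\ready$ creation and send at \Cref{alg:line:send-ready} of \Cref{alg:core}, so a $\ready$ referencing $T$ (carrying that worker's slice of the object data) is sent to the designated handler $\Call{Handler}{T}$. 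This justifies the restatement in the lemma: the antecedent is equivalent to every relevant \execworker producing a $\ready$ for $T$.

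Next I would establish that the handler $\Call{Handler}{T}$ eventually exits the wait at \Cref{alg:line:wait-for-objects} of \Cref{alg:process-ready}. Since the objects of $\readset{T} \cup \writeset{T}$ are partitioned across the \execworkers (each object has a unique handler), and each handling worker sends the data for exactly its slice, reliable delivery guarantees that $\objectsreceived[T]$ eventually accumulates object data (or $\perp$) for the entire read and write set, so the equality $\len{\objectsreceived[T]} = \len{\readset{T}}+\len{\writeset{T}}$ is met. At that point $\Call{ProcessReady}{\cdot}$ proceeds: it either executes $T$ via $\exec{T,\cdot}$ at \Cref{alg:line:execute} (when all object data are present) or takes the abort branch (when $\Call{AbortExec}{T}$ holds at \Cref{alg:line:abort-condition}), and in both cases a $\result$ referencing $T$ is emitted to every \execworker at \Cref{alg:line:send-result-2}.

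Finally I would close the loop: by reliable delivery each $\result$ reaches its \execworker, triggering $\Call{ProcessResult}{\cdot}$, which advances the execution watermark at \Cref{alg:line:advance-watermark} and calls $\Call{UpdateStores}{\cdot}$ at \Cref{alg:line:update-stores}, thereby updating (and where relevant deleting) the entries of $\objectsdb$ dictated by $T$. This is exactly the condition of \Cref{def:processed-transaction}: $T$ has been executed or aborted and the object state has been updated accordingly, so $T$ is processed.

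I expect the main obstacle to be the coordination argument in the second paragraph, rather than the single-worker trace of the first. One must argue that \emph{every} \execworker handling an object of $T$---not merely the one from which we started---reaches its $\ready$, so that the handler's counting check at \Cref{alg:line:wait-for-objects} is eventually satisfied; this relies on reading the hypothesis as a global condition holding at all relevant shards and on the read/write-set partition matching the cardinality test exactly. A second delicate point is ruling out indefinite blocking inside $\Call{MissingObjects}{T}$: the ``could be created by earlier transactions'' clause must be tied to the watermark condition $\executedIdx < \Call{Index}{T}-1$ at \Cref{alg:line:missing-objects}, so that once all earlier transactions are processed an absent object is treated as genuinely missing (yielding $\perp$ and an abort) rather than keeping $T$ stalled forever.
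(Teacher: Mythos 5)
Your proposal is correct and follows essentially the same route as the paper's own proof: both trace $T$ through \Cref{alg:process-ready} (all relevant \execworkers' $\ready$ messages arrive, the counting check at \Cref{alg:line:wait-for-objects} passes, $T$ is executed or aborted, and a $\result$ is emitted) and then through \Cref{alg:process-result} (watermark advance plus $\Call{UpdateStores}{\cdot}$), concluding that \Cref{def:processed-transaction} is satisfied. Your version is simply more explicit than the paper's terse argument, spelling out the reliable-delivery assumption at each hop and the reading of the hypothesis as a condition holding at every shard handling an object of $T$.
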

\begin{proof}
    We argue this lemma by construction of \Cref{alg:process-ready} and \Cref{alg:process-result}. \Cref{alg:process-ready} receives a $\ready$ from all \execworkers and the check \Cref{alg:line:wait-for-objects} of \Cref{alg:process-ready} passes. Transaction $T$ is then either executed (if the check \Cref{alg:line:abort-condition} passes) or aborted (if the check \Cref{alg:line:abort-condition} fails), and \Cref{alg:process-ready} creates a $\result$ referencing $T$ (\Cref{alg:line:send-result}). \Cref{alg:process-result} then receives this $\result$ and accordingly updates its object $\objectsdb$ (after an infallible call to $\Call{TryAdvanceExecWatermark}{T}$ \Cref{alg:line:advance-executed-idx}).
\end{proof}

\begin{lemma}[Eventual Transaction Processing] \label{lm:eventual-processing}
    All correct validators receiving the sequence of transactions $[\tx_1, \dots, \tx_n]$ eventually process (\Cref{def:processed-transaction}) all transactions $\tx_1, \dots, \tx_n$.
\end{lemma}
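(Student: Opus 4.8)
The plan is to prove the statement by strong induction on the position $k$ of a transaction in the committed sequence $[\tx_1,\dots,\tx_n]$, taking as inductive hypothesis that every transaction $\tx_1,\dots,\tx_{k-1}$ is eventually processed (\Cref{def:processed-transaction}). The supporting lemmas already established give me exactly the hooks I need: \Cref{lm:clear-queues} lets me turn ``earlier transactions are processed'' into ``earlier transactions leave the queues'', \Cref{lm:advance-exec-watermark} lets me turn it into ``the watermark $\executedIdx$ advances'', and \Cref{lm:process-if-ready} reduces the goal for $\tx_k$ to establishing a single sufficient condition, namely that \Cref{alg:core} eventually emits a $\ready$ referencing $\tx_k$. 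So the whole argument boils down to showing that $\tx_k$ eventually satisfies both preconditions for emitting a $\ready$: it has no outstanding queue dependency, and no outstanding missing object.

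For the base case, $\tx_1$ has no predecessors, so once it is placed in its queues it is immediately at their heads, and the guard $\executedIdx < \Call{Index}{\tx_1}-1$ in the missing-object check is vacuously false; \Cref{lm:process-if-ready} then finishes it. For the inductive step I would split into the two sub-goals. (a)~\emph{Dependency freedom}: I first argue that $\tx_k$ is eventually placed into all of its pending queues, which follows from reliable delivery of the committed sequence together with \Cref{lm:process-propose-order} guaranteeing that \Cref{alg:process-propose} eventually processes the batch containing $\tx_k$. By \Cref{lm:queue-order} every transaction sitting ahead of $\tx_k$ in any queue has a strictly smaller index, and \Cref{lm:no-deadlock} rules out cyclic dependencies that could pull in higher-index transactions; hence the set of blockers is a finite subset of $\{\tx_1,\dots,\tx_{k-1}\}$, all of which are eventually processed (hypothesis) and therefore removed (\Cref{lm:clear-queues}), so $\tx_k$ eventually reaches the head of every queue it belongs to. (b)~\emph{Object availability}: applying \Cref{lm:advance-exec-watermark} to the processed prefix, $\executedIdx$ eventually reaches $\Call{Index}{\tx_k}-1$, at which point the guard $\executedIdx < \Call{Index}{\tx_k}-1$ in $\Call{MissingObjects}{\tx_k}$ fails and the function returns $\emptyset$; any genuinely absent object is then simply passed as $\none$, which still lets the transaction be processed (as an abort) rather than blocking it forever.

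Once both (a) and (b) hold, \Cref{alg:core} passes the dependency and missing-object checks and emits a $\ready$ for $\tx_k$, so \Cref{lm:process-if-ready} yields that $\tx_k$ is eventually processed, closing the induction and thereby establishing the claim for the whole sequence. I expect the main obstacle to be the last-mile liveness argument that $\Call{TryTriggerExecution}{\tx_k}$ is actually \emph{(re-)invoked} at or after the instant both preconditions become true. A transaction parked in $\missingdb$ or behind a queue head is only re-awoken by a later event, namely the missing-object re-trigger and the $\Call{AdvanceLock}{\cdot}$-driven re-trigger inside \Cref{alg:process-result}. The delicate part is to show these wake-up hooks always fire in the worst case, for instance a transaction already at every queue head that references an object which is never created, so that no transaction can become permanently quiescent once its blockers clear. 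I would handle this by pairing the removal of the last blocking predecessor with its re-trigger call, and arguing that the predecessor whose processing advances $\executedIdx$ past $\Call{Index}{\tx_k}-1$ is precisely the one that re-invokes $\Call{TryTriggerExecution}{\tx_k}$.
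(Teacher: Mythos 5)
Your proposal is correct and takes essentially the same route as the paper's own proof: induction over the committed sequence, using \Cref{lm:clear-queues} to clear queue dependencies, \Cref{lm:advance-exec-watermark} to advance the watermark so that $\Call{MissingObjects}{\cdot}$ returns $\emptyset$, and \Cref{lm:process-if-ready} to close each step, with the same base case for $\tx_1$ (where the paper appeals to \Cref{th:serializability} to reduce to a sequential schedule, you instead combine \Cref{lm:queue-order} and \Cref{lm:no-deadlock} to bound the blockers of $\tx_k$ by $\{\tx_1,\dots,\tx_{k-1}\}$, which serves the same purpose). If anything, you are more careful than the paper: your explicit treatment of the re-invocation (``wake-up'') issue --- that $\Call{TryTriggerExecution}{\tx_k}$ is actually called again once the last blocker is removed, via the re-trigger hooks in \Cref{alg:process-result} --- addresses a step the paper's proof leaves implicit.
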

\begin{proof}
    \Cref{lm:no-deadlock} ensures that the transaction scheduling process is deadlock-free (no circular dependencies) and \sysname thus triggers their execution (\Cref{alg:line:try-trigger-execution} of \Cref{alg:core}).
    We are then left to prove that these scheduled transactions are processed (\Cref{def:processed-transaction}). Since \Cref{th:serializability} ensures the \sysname schedule is equivalent to a sequential schedule, we prove liveness of the sequential schedule. We argue the lemma's statement by induction. Assuming the sequence of transactions $[\tx_1,\dots,\tx_{n-1}]$ for which this statement holds, we consider transaction $\tx_n$.
    Assuming $\tx_{n-1}$ is processed and a sequential schedule, all transactions $\tx_i$ with $i < n-1$ are also processed. \Cref{lm:clear-queues} thus ensures these transactions are removed from all queues $\pendingdb[\cdot]$. As a result, when triggering the execution of $\tx_n$ (\Cref{alg:line:try-trigger-execution} of \Cref{alg:core}), the check $\Call{HasDependencies}{\tx_n}$ (\Cref{alg:line:check-dependencies} of \Cref{alg:core}) returns \textbf{False} (since $\forall \objectid \in \readset{\tx_n}\cup\writeset{\tx_n}: \pendingdb[\objectid][0] = \tx_n$).
    Furthermore, since all transactions $\tx_i$ with $i \leq n-1$ are already processed, \Cref{lm:advance-exec-watermark} ensures that the watermark $j=n-1$ (\Cref{alg:line:executed-idx} of \Cref{alg:core}) and thus $\Call{MissingObjects}{\tx_n}$ returns $\emptyset$.
    Finally, \Cref{alg:core} creates a $\ready$ referring $\tx_n$ and thus \Cref{lm:process-if-ready} ensures $\tx_n$ is eventually processed.
    We argue the inductive base by observing that the first transaction $\tx_1$ has no dependency (by definition); thus both checks $\Call{HasDependencies}{\tx_1}$ and $\Call{MissingObjects}{\tx_1}$ pass (respectively at \Cref{alg:line:check-dependencies} and \Cref{alg:line:check-missing-objects} of \Cref{alg:core}); and \Cref{lm:process-if-ready} then ensures $\tx_1$ is processed.
\end{proof}

\para{Valid transaction execution}
We now argue point (ii), that valid transactions are not aborted. This argument relies on several preliminary lemmas leading \Cref{lm:execute-valid-transaction}.

\begin{lemma} \label{lm:remove-after-write}
    If a transaction $T$ references an object $\objectid$ in its write set, it is only removed from the queue $\pendingdb[\objectid]$ after it is processed (\Cref{def:processed-transaction}).
\end{lemma}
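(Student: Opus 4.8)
The plan is to argue by construction of the algorithms, in the same spirit as \Cref{lm:clear-queues} and \Cref{lm:remove-after-access}. The first step is to recall that the only mechanism that ever removes a transaction $T$ from a queue $\pendingdb[\objectid]$ is a call to $\Call{AdvanceLock}{T, \objectid}$, since this is the sole operation that pops $T$ out of an entry of a pending queue. It therefore suffices to enumerate every call site of $\Call{AdvanceLock}{\cdot, \cdot}$ and verify that, whenever such a call removes $T$ from $\pendingdb[\objectid]$ for an object $\objectid \in \writeset{T}$, the transaction $T$ has already been processed in the sense of \Cref{def:processed-transaction}.

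Next I would note that there are exactly two call sites. The first is at \Cref{alg:line:advance-read-lock} of \Cref{alg:core}: the key observation is that its enclosing loop ranges over $\objectid \in \readset{T}$ only. Since a transaction's read and write sets are disjoint (per the model of \Cref{sec:model}), this call site can never remove $T$ from $\pendingdb[\objectid]$ for an object in $\writeset{T}$, and so it can be ruled out entirely. This leaves the second call site, at \Cref{alg:line:advance-lock-call} of \Cref{alg:process-result}, whose enclosing loop ranges over all objects referenced by $T$ and is thus the only place where $T$ is removed from a write-set queue.

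The third step is to show that reaching \Cref{alg:line:advance-lock-call} implies $T$ is processed. This line sits inside the procedure of \Cref{alg:process-result}, which runs only upon receiving a $\result$ referencing $T$; such a message is produced exclusively at \Cref{alg:line:send-result} of \Cref{alg:process-ready}, after $T$ has been executed or aborted. Furthermore, within \Cref{alg:process-result} the object store is updated by the call at \Cref{alg:line:update-stores} strictly before the call to $\Call{AdvanceLock}{T, \objectid}$ at \Cref{alg:line:advance-lock-call}. Hence by the time $T$ leaves $\pendingdb[\objectid]$, the state $\objectsdb$ already reflects $T$'s execution (or abort), which is exactly the condition defining a processed transaction.

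I expect the main obstacle to be a matter of careful bookkeeping rather than any deep argument: one must confirm that the loop at \Cref{alg:line:advance-read-lock} iterates over $\readset{T}$ and not over all objects referenced by $T$, and that \Cref{alg:line:update-stores} genuinely precedes \Cref{alg:line:advance-lock-call} within \Cref{alg:process-result}. Both facts are immediate from the pseudocode, so once the two call sites are correctly isolated the conclusion follows directly.
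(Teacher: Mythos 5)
Your proposal is correct and follows essentially the same route as the paper's proof: identify $\Call{AdvanceLock}{T,\objectid}$ as the only removal mechanism, rule out the call site at \Cref{alg:line:advance-read-lock} of \Cref{alg:core} because it iterates only over $\readset{T}$ (disjoint from $\writeset{T}$), and observe that the remaining call site at \Cref{alg:line:advance-lock-call} of \Cref{alg:process-result} executes only after the store update at \Cref{alg:line:update-stores}, i.e., after $T$ is processed. Your version merely makes explicit two facts the paper leaves implicit (the disjointness of read and write sets, and the provenance of the $\result$ message), which is fine.
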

\begin{proof}
    We argue this lemma by construction: Transaction $T$ is removed from $\pendingdb[\objectid]$ only upon a call to $\Call{AdvanceLock}{T, \objectid}$. However, since $\objectid$ is referenced by the write set of $T$ (rather than its read set), this function is called over $\objectid$ only at \Cref{alg:line:advance-lock-call} of \Cref{alg:process-result}. This call is thus after \Cref{alg:process-result} updates $\objectsdb[\objectid]$ (at \Cref{alg:line:update-stores}) and thus after the transaction is processed.
\end{proof}

\begin{lemma} \label{lm:inputs-always-available}
    If a transaction $T$ is valid, the call to $\objectsdb[\objectid]$ (\Cref{alg:line:load-objects} of \Cref{alg:line:advance-executed-idx}) never returns $\perp$, for any $\objectid$ referenced by the read or write set of $T$.
\end{lemma}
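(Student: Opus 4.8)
The plan is to argue by contradiction, tracing the control flow that reaches the object load at \Cref{alg:line:load-objects} of \Cref{alg:core}. The only path that reaches that line inside $\Call{TryTriggerExecution}{T}$ first passes the guard that requires $\Call{MissingObjects}{T} = \emptyset$ (\Cref{alg:line:check-missing-objects}); otherwise the function inserts $T$ into \missingdb and returns without loading anything. So I would first record this invariant: whenever the load at \Cref{alg:line:load-objects} executes for $T$, the set computed by $\Call{MissingObjects}{T}$ on the immediately preceding line was empty. Note also that this load touches exactly the objects in $\Call{HandledObjects}{T}$, i.e. the members of $T$'s read or write set that this worker handles, so it suffices to argue the claim for those.

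Next I would unfold the definition of $\Call{MissingObjects}{T}$ (\Cref{alg:line:missing-objects}), whose membership test for a handled $\objectid$ is the conjunction $\objectsdb[\objectid] = \none \wedge \executedIdx < \Call{Index}{T}-1$. Suppose, for contradiction, that the load returns $\perp$ for some handled $\objectid$ referenced by $T$, i.e. $\objectsdb[\objectid] = \none$. Since $\objectid$ was nevertheless absent from the empty set $\Call{MissingObjects}{T}$, the first conjunct holds, so the second must fail, giving $\executedIdx \geq \Call{Index}{T}-1$; that is, the local execution watermark has already passed every transaction ordered strictly before $T$.

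I would then combine validity with the watermark semantics. By \Cref{def:valid-transaction}, $\objectid$ is created by some $T'$ with $\Call{Index}{T'} < \Call{Index}{T}$, hence $\Call{Index}{T'} \le \Call{Index}{T}-1 \le \executedIdx$. By the construction of $\Call{TryAdvanceExecWatermark}{\tx}$ (the reasoning dual to \Cref{lm:advance-exec-watermark}), $\executedIdx$ can only reach this value after $\Call{ProcessResult}{\result}$ has handled the \result of every index up to $\executedIdx$, and each such handling runs $\Call{UpdateStores}{\tx, O, I}$ on this very worker (the one handling $\objectid$). In particular the creation of $\objectid$ by $T'$ has been applied, so at that point $\objectsdb[\objectid]$ was set to a non-$\none$ value, which contradicts $\objectsdb[\objectid] = \none$ at $T$'s load.

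The step I expect to be the main obstacle is closing the gap between ``created before $T$'' and ``still present when $T$ loads it'': I must rule out that an intervening transaction $T''$ with $\Call{Index}{T'} < \Call{Index}{T''} < \Call{Index}{T}$ deleted $\objectid$ from \objectsdb via the deletion branch of $\Call{UpdateStores}{\tx, O, I}$. I would handle this by appealing to the object model, in which a deleted object id is terminal and can no longer be referenced; hence a transaction referencing $\objectid$ after its deletion cannot satisfy \Cref{def:valid-transaction}, or equivalently validity must be read as ``created and not subsequently deleted before index $\Call{Index}{T}$''. With such deletions excluded, the creation effect of $T'$ persists in \objectsdb up to $T$'s load, the contradiction stands, and the lemma follows.
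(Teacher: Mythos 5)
Your proof is correct, but it takes a genuinely different route from the paper's. The paper argues through the scheduling queues: from validity plus the assumption that the load returns $\none$, it infers that the creating transaction $T'$ cannot yet have been processed, then uses \Cref{lm:queue-order} and \Cref{lm:remove-after-write} to conclude that $T'$ still sits ahead of $T$ in $\pendingdb[\objectid]$, contradicting the $\Call{HasDependencies}{T}$ guard (\Cref{alg:line:check-dependencies} of \Cref{alg:core}) that $T$ must have passed before reaching the load. You instead argue through the \emph{other} guard: emptiness of $\Call{MissingObjects}{T}$ (\Cref{alg:line:check-missing-objects}) together with $\objectsdb[\objectid]=\none$ forces $\executedIdx \geq \Call{Index}{T}-1$ by the definition at \Cref{alg:line:missing-objects}, and the converse direction of \Cref{lm:advance-exec-watermark} --- which does follow from the construction, since $\executedIdx$ advances only through indices whose \result this worker has handled --- implies that $\Call{UpdateStores}{\cdot}$ has already been applied for the creator $T'$ on this very worker, so $\objectid$ should be present, a contradiction. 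Both arguments lean on the same unstated bridge, namely that a processed creator leaves the object in \objectsdb (the creator is not aborted and no intervening transaction deletes $\objectid$); the paper uses this silently in its very first inference (``the creator has not yet been processed''), whereas you surface the deletion half of it explicitly and close it by reading \Cref{def:valid-transaction} through the object model, which is a genuine gain in rigor. What the paper's queue-based route buys is independence from the watermark machinery: it reuses invariants already established for serializability, so the lemma would survive even if the $\Call{MissingObjects}{\cdot}$ guard were weakened or removed. What your route buys is locality and economy: it needs only the two lines directly guarding the load plus the watermark bookkeeping, never touches the queue lemmas, and makes visible the single-threaded-execution assumption (that $\Call{ProcessResult}{\cdot}$ completes $\Call{UpdateStores}{\cdot}$ before any later $\Call{TryTriggerExecution}{\cdot}$ runs) on which both proofs tacitly depend.
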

\begin{proof}
    Let's assume by contradiction that there exists a $\objectid$ referenced by the read or write set of a valid transaction $T$ where the call to $\objectsdb[\objectid]$ (\Cref{alg:line:load-objects} of \Cref{alg:line:advance-executed-idx}) returns $\perp$.
    Since $T$ is valid, it means that the object $\objectid$ is created by a conflicting transaction $T'$ with index $idx' < \Call{Index}{T}$ that has not yet been processed (\Cref{def:processed-transaction}). In which case, \Cref{lm:access-order} states that both $T$ and $T'$ are placed in the same queue $\pendingdb[\objectid]$ and \Cref{lm:remove-after-write} states that $T'$ is still present in the queue $\pendingdb[\objectid]$. This is however a direct contradiction of check $\Call{HasDependencies}{T}$ ensuring that $T$ does not access $\objectsdb[\objectid]$ until it is at the head of the queue $\pendingdb[\objectid]$ (\Cref{alg:line:load-objects} of \Cref{alg:core}).
\end{proof}

\begin{lemma} \label{lm:execute-valid-transaction}
    A valid transaction $T$ it is never aborted; that is the call $\Call{AbortExec}{T}$ (\Cref{alg:line:can-execute} of \Cref{alg:process-ready}) returns \textbf{False}.
\end{lemma}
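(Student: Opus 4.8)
The plan is to reduce the claim directly to \Cref{lm:inputs-always-available}. By construction of \Cref{alg:process-ready}, the call $\Call{AbortExec}{T}$ returns \textbf{True} if and only if some object $o \in \objectsreceived[T]$ equals $\perp$ (\Cref{alg:line:abort-condition}). Hence it suffices to show that, for a valid transaction $T$, the set $\objectsreceived[T]$ contains no $\perp$ entry at the moment $\Call{AbortExec}{T}$ is evaluated.

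First I would trace the provenance of the entries of $\objectsreceived[T]$. This set is populated in \Cref{alg:process-ready} by accumulating the object-data component $O$ of every $\ready$ message referencing $T$, one contributed by each \execworker handling at least one object of $T$. Each such $O$ is assembled in \Cref{alg:core} at \Cref{alg:line:load-objects} as $\{\objectsdb[\objectid] : \objectid \in \Call{HandledObjects}{T}\}$, the only place where $\perp$ can be injected (the annotation ``may contain $\perp$'' flags precisely this). Moreover, the guard at \Cref{alg:line:wait-for-objects} ensures that $\Call{AbortExec}{T}$ is reached only once $\objectsreceived[T]$ holds one entry per object in $\readset{T} \cup \writeset{T}$, so every object of $T$'s read and write set is accounted for, and each $\perp$ appearing in $\objectsreceived[T]$ must correspond to some per-\execworker lookup $\objectsdb[\objectid]$ that returned $\perp$.

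The key step is then to invoke \Cref{lm:inputs-always-available}: since $T$ is valid, for every $\objectid$ in its read or write set the lookup $\objectsdb[\objectid]$ at \Cref{alg:line:load-objects} never returns $\perp$. Because the sets $\Call{HandledObjects}{T}$ across the participating \execworkers are pairwise disjoint and jointly exhaust $\readset{T} \cup \writeset{T}$ (each object is handled by exactly one \execworker), this guarantee holds for every entry contributing to $\objectsreceived[T]$, regardless of which \execworker produced it. Consequently no entry of $\objectsreceived[T]$ equals $\perp$, the existential in \Cref{alg:line:abort-condition} is unsatisfiable, and $\Call{AbortExec}{T}$ returns \textbf{False}.

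The argument is essentially bookkeeping: the substantive work was already done in \Cref{lm:inputs-always-available}, which guarantees input availability for valid transactions on each shard. The only mild subtlety---and thus the main thing to be careful about---is making the cross-\execworker accounting precise. Since \Cref{lm:inputs-always-available} is a per-shard statement about local $\objectsdb$ lookups, I must argue that applying it uniformly to every \execworker handling part of $T$ covers all of $\objectsreceived[T]$, using the fact that the handled-object sets partition $T$'s read and write set.
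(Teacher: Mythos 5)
Your proposal is correct and follows essentially the same route as the paper's proof: both arguments reduce the claim to \Cref{lm:inputs-always-available} by observing that any $\perp$ entry triggering the abort condition at \Cref{alg:line:abort-condition} must originate from an $\objectsdb[\objectid]$ lookup at \Cref{alg:line:load-objects} of \Cref{alg:core}, which that lemma rules out for valid transactions. The only difference is presentational—you argue directly while the paper argues by contradiction, and you spell out the cross-\execworker accounting (disjointness and coverage of the $\Call{HandledObjects}{T}$ sets) that the paper leaves implicit.
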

\begin{proof}
    Let's assume by contradiction that $\Call{AbortExec}{T}$ returns \textbf{True} while $T$ is valid. This means that the check at \Cref{alg:line:abort-condition} of \Cref{alg:process-ready} found at least one missing object ($\perp$) referenced in the read or write set of $T$, and thus that \Cref{alg:process-ready} received at least one $\ready$ message referring $\perp$ instead of an object data. However, \Cref{lm:inputs-always-available} ensures that the call to $\objectsdb[\objectid]$ (\Cref{alg:line:load-objects} of \Cref{alg:core}) never returns $\perp$ if $\objectid$ is referenced by the read or write set of a valid transaction, hence a contradiction.
\end{proof}

\para{Liveness proof}
We finally combine \Cref{lm:eventual-processing} and \Cref{lm:execute-valid-transaction} to prove liveness.

\begin{theorem}[Liveness] \label{th:liveness}
    All correct validators receiving the sequence of transactions $[\tx_1, \dots, \tx_n]$ eventually execute all the valid transactions of the sequence.
\end{theorem}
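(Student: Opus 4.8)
The plan is to combine the two preceding lemmas, which between them carry the entire burden of the argument. \Cref{lm:eventual-processing} already guarantees that every transaction $\tx_1, \dots, \tx_n$ is eventually \emph{processed} in the sense of \Cref{def:processed-transaction}; that is, each is either executed or aborted, with the object store $\objectsdb$ updated accordingly. The only remaining gap is to rule out the abort branch for those transactions that are valid.

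Concretely, I would fix an arbitrary valid transaction $\tx_i$ of the sequence and apply \Cref{lm:eventual-processing} to conclude that it is eventually processed. I would then appeal to \Cref{lm:execute-valid-transaction}, which states precisely that the abort check never fires for a valid transaction, so the only way $\tx_i$ can be processed is by being executed. Combining the two, every valid $\tx_i$ is eventually executed, which is exactly the claim. Since $\tx_i$ was arbitrary, this covers all valid transactions of the sequence.

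Since the two lemmas do essentially all the work, the combination step is trivial, and the real difficulty has already been discharged upstream. The harder of the two ingredients is \Cref{lm:eventual-processing}: it rests on the deadlock-freedom argument of \Cref{lm:no-deadlock} (so no transaction waits forever in a dependency cycle) together with an induction over the sequential schedule showing that each transaction eventually reaches the head of every queue it occupies and finds its inputs present. The never-abort ingredient, \Cref{lm:execute-valid-transaction}, in turn relies on \Cref{lm:inputs-always-available}, namely that a valid transaction's referenced objects are guaranteed to be materialized by the time it sits at the head of its queues. Given both lemmas, the theorem follows immediately, so I expect the only \emph{conceptual} obstacle to lie in having correctly stated the two lemmas so that \textbf{processed} $\wedge$ \textbf{not aborted} cleanly yields \textbf{executed}.
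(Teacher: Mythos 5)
Your proposal is correct and follows exactly the paper's own argument: the paper's proof of \Cref{th:liveness} likewise combines \Cref{lm:eventual-processing} (every transaction is eventually processed) with \Cref{lm:execute-valid-transaction} (valid transactions are never aborted) to conclude that all valid transactions are eventually executed. Your additional remarks about the upstream structure (\Cref{lm:no-deadlock} and \Cref{lm:inputs-always-available} carrying the real burden) also match the paper's decomposition.
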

\begin{proof}
    \Cref{lm:eventual-processing} ensures that all correct validators eventually process (\Cref{def:processed-transaction}) all transactions $\tx_1, \dots, \tx_n$. \Cref{lm:execute-valid-transaction} ensures that valid transactions are never aborted and thus executed.
\end{proof}
\section{Crash Fault Tolerance} \label{sec:cft}

\subsection{Internal Replication}

For the replication protocol, \execworkers maintains the following network connections: (i) a constant set of \emph{\peers}, containing the identifier for every worker in its cluster. Workers in each cluster have the same \peers set; (ii) a dynamic set of \readto, containing the additional identifiers with whom the worker is temporarily serving reads to; and (iii) a dynamic set of \readfrom, containing the additional identifiers with whom the worker is receiving reads from. The protocol maintains that \readfrom and \readto relations are symmetric --- Worker $a$ is in worker $b$'s \readfrom set if and only if worker $b$ is in $a$'s \readto set. Finally, we assume the use of an eventually strong failure detector~\cite{chandra96failure}.

\subsection{Normal Operation}

\begin{figure}[t]
    \begin{center}
        \includegraphics[width=0.5\columnwidth]{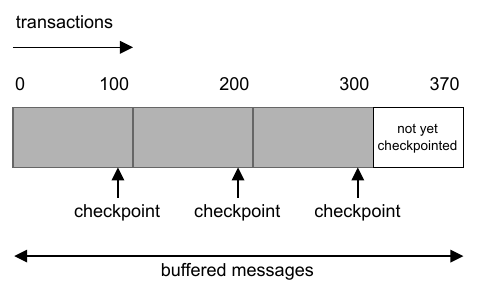}
        \caption{Example of a snapshot of the local state at an \execworker. It checkpoints its \objectsdb store after every 100 transactions and keeps a buffer of outgoing \ready.}
        \label{fig:ew-no-gc}
    \end{center}
\end{figure}

\para{Dealing with finite memory}
The number of checkpoints and buffered messages held by an \execworker cannot grow indefinitely. Hence, we introduce a garbage collection mechanism that deprecates old checkpoints.
When an \execworker completes a checkpoint, it broadcasts a message
$$\checkpointed \gets (shard, \txidx)$$
to every other \execworker in \emph{every} shard, indicating that an \execworker of shard $shard$ successfully persisted a checkpoint immediately after executing \txidx.

An \execworker deems a checkpoint at \txidx as \emph{stable} after receiving a quorum of $f_e+1$ \checkpointed\footnote{Quorum sizes can be varied to optimize between normal case disk-usage and recoverability during failure, similar to Flexible Paxos~\cite{howard2016flexible}.} from each shard.
When a worker learns that a checkpoint is stable, it deletes all checkpoints and buffered messages prior to that checkpoint. This illustrated in \Cref{fig:ew-garbage}.

\begin{figure}[t]
    \begin{center}
        \includegraphics[width=0.5\columnwidth]{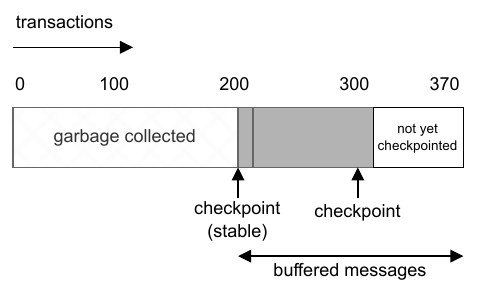}
        \caption{Example of a snapshot of the local state at an \execworker. It has received a quorum of \checkpointed messages from each shard for the transaction at checkpoint 1. Hence, checkpoint 1 is stable, and the worker safely deletes checkpoints and buffers before it.}
        \label{fig:ew-garbage}
    \end{center}
\end{figure}

\para{Bounding strategy}
To avoid exhausting resources, each \execworker also holds a bounded number $c$ of checkpoints at any time. This number dictates how far \execworkers are allowed to diverge in terms of their rate of execution; the fastest cluster can be ahead of the slowest cluster in its quorum by up to $c-1$ checkpoints.\footnote{Note that \execworkers within a cluster are always tightly coupled due to the quorum definition, and can never be apart by 1 or more checkpoints}

A worker pauses processing when creating a new checkpoint, which will exceed $c$. This may be a symptom of failures in the system, e.g., many slow or failed workers or network issues. Hence, pausing provides backpressure to fast replicas in order for stragglers to catch up.
\Cref{fig:ew-bounding} illustrates this mechanism in a system with three clusters and $c=2$. Each worker of each cluster holds a stable checkpoint at boundary 1. Clusters 1 and 2 are slow and have yet to reach checkpoint boundary 2. Cluster 3 is fast and hence may execute beyond checkpoint boundary 2, while maintaining a second (non-stable) checkpoint at boundary 2.
However, because workers are limited to storing two checkpoints, workers in cluster 3 are blocked from executing past boundary 3 before (i) their checkpoint at boundary 2 is established as stable and (ii) their checkpoint 1 is garbage-collected.

By default, we set $c=2$ as a good trade-off between performance and storage/memory costs. With a limit of two checkpoints, a fast cluster can execute past a second checkpoint without waiting for a quorum. As such, different clusters are allowed to progress at different speeds without blocking, as long as they stay within one checkpoint. The system then progresses, most of the time, at the speed of the fastest cluster.

\begin{figure}[t]
    \begin{center}
        \includegraphics[width=0.5\textwidth]{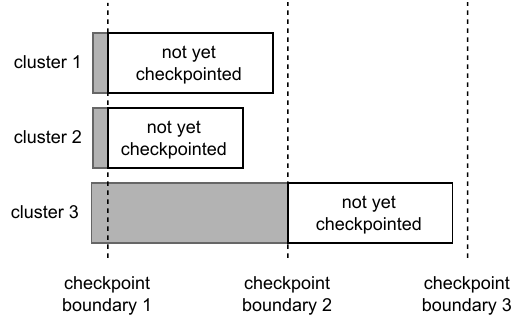}
        \caption{Consider a system with three clusters, and with each \execworker allowed to hold up to $c=2$ checkpoints. This snapshot of the clusters' progress shows that \execworkers in each cluster have a stable checkpoint at boundary 1. Cluster 3 is fast and hence may execute beyond checkpoint boundary 2 while maintaining a second (non-stable) checkpoint at boundary 2. However, it is not permitted to create a checkpoint at boundary 3 or execute past it before it learns that boundary 2 is stable. }
        \label{fig:ew-bounding}
    \end{center}
\end{figure}

\subsection{Failure Recovery} \label{sec:failure-recovery}

\para{Recovery through reconfiguration}
When an \execworker fails, other workers in its cluster or \readto set may not be able to execute transactions, as they may no longer be served the necessary reads. To restore transaction processing, these workers trigger the \emph{reconfiguration} illustrated by \Cref{fig:reconfig}. In order to detect these crashes, we assume the existence of a failure detector~\cite{chandra96failure} with strong completeness. Ideally, we would use an eventually perfect failure detector, but an eventually strong one suffices for liveness (but might cause worse load-balancing on some \execworkers).

The crux of recovery is as follows: When a worker detects a failure, it tries to find another worker to get the reads previously managed by the failed worker. In the normal case, this takes two round trips: one trip to find an appropriate \readfrom member and another to establish the relationship with that new member. Meanwhile, all other clusters except the one with the failed worker operate as normal. Hence, there is no loss of throughput when failures are within the tolerated threshold. 
\ifpublish
We defer the details of the recovery algorithm to Appendix~\ref{sec:detailed-recovery}.
\else
The full version of our paper presents the full details of the recovery algorithm.
\fi

\begin{figure}[t]
    \begin{center}
        \includegraphics[width=.4\textwidth]{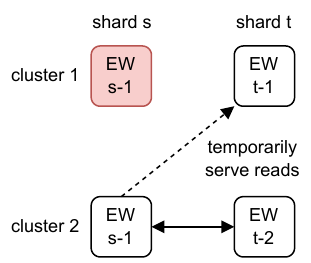}
        \caption{
            Failure recovery. Suppose \ew{s}{1} crashes. As a result, \ew{t}{1} cannot receive reads of shard $s$ from \ew{s}{1}. After \ew{t}{1} performs recovery, it establishes a new read relationship with another replica of shard $s$, in this case \ew{s}{2}, as illustrated by the dashed arrow. \ew{t}{1} is now in \ew{s}{2}'s \readto set, and correspondingly, \ew{s}{2} is in \ew{t}{1}'s \readfrom set. Otherwise, cluster $2$ operates as usual, as represented by the solid arrow.}
        \label{fig:reconfig}
    \end{center}
\end{figure}

\para{Recovery through checkpoints synchronization}
The recovery through reconfiguration may fail when the recovering worker finds itself slow after the first round trip. It then needs to perform a \emph{checkpoint synchronization} procedure before retrying recovery. This synchronization is necessary as there may no longer be clusters with sufficiently old \textit{buffers} for the recovering worker to continue execution through the normal recovery procedure.

The gist of the synchronization procedure is as follows:. (i) The worker instructs every peer to perform \emph{synchronization}; (ii) any node in the slow worker's \readto set is instructed to itself perform recovery; (iii) the worker then downloads the latest checkpoints from another replica and waits for every peer to sync to the same state; and (iv) the worker finds replacements for any missing members in its cluster by running the \emph{reconfiguration} again. 
\ifpublish
We defer the details of the synchronization algorithm to Appendix~\ref{sec:detailed-recovery}.
\else
The full version of our paper~\cite{fullversion} provides the details of the synchronization algorithm.
\fi

The synchronization protocol's pivotal aspect lies in its ability to activate synchronization and recovery across all clusters that transitively rely on a slow worker for reads. This occurs because the clusters, which were dependent on the slow worker for their reads, may exhibit lag as well. If the slow worker were to unilaterally fast-forward its state, these clusters could potentially lose liveness.

\para{Disaster recovery}
In case of a disaster that affects all \execworkers of a cluster and the threat model of \Cref{sec:threat-model} doesn't hold, the cluster can be recovered by booting a new cluster with the same \peers set. The new cluster will then be able to recover the state of the cluster from the other validators of the blockchain. This is possible because the system state is replicated across multiple validators, of which at least half are honest. The new cluster can then download the latest stable checkpoint from the other validators and use it to perform a recovery through checkpoints. This recovery is slow as it requires WAN communications, but it is only used in extreme circumstances, and its existence allows \sysname to be reasonably configured with low replication factors (e.g., $f_e=1$).

\section{Detailed Recovery Protocol}
\label{sec:detailed-recovery}
This section completes \Cref{sec:cft} by providing the algorithms allowing \execworkers to recover from crash-faults and proving the security of \sysname in this setting.

\subsection{Recovery Algorithms} \label{sec:recovery-algorithms}


\para{Recovery Protocol} Suppose \execworker $x$ crashes. Any non-faulty worker $e$ detecting this failure deletes $x$ from its \readfrom and \readto sets. If $x$ is $e$'s peer, or a member of its \readfrom set, $e$ may no longer be served reads from $x$'s shard. In this case, $e$ calls \Call{Recover}{$x$}, listed in \Cref{alg:recovery}.

\begin{algorithm}[t]
    \caption{Process \checkpointed}
    \algfontsize
    \label{alg:garbage-collection}
    \begin{algorithmic}[1]
        \Statex
        \State $\checkpoints \gets \{\}$  \Comment{Maps \txidx to checkpoint blob}
        \State $\sentmessages \gets \{\}$ \Comment{Maps \txidx to set of sent \ready}
        \State $\checkpointsnotification \gets 0$ \Comment{Counts the received \checkpointed}
        \Statex
        \Procedure{ProcessCheckpointed}{$\checkpointed$}
        \State $(shard, \txidx) \gets \checkpointed$
        \State $\checkpointsnotification[(shard, \txidx)] \gets \checkpointsnotification[(shard, \txidx)] + 1$
        \If{$\checkpointsnotification[(shard, \txidx)] \geq f_e+1$} \label{alg:line:delete-condition}
        \For{$i \gets 0; \; i < \txidx; \; i \gets i+1$}
        \State \delete $\checkpoints[\checkpointsIdx]$
        \State \delete $\sentmessages[\checkpointsIdx]$
        \State \delete $\checkpointsnotification[(*, \txidx)]$
        \EndFor
        \EndIf
        \EndProcedure
    \end{algorithmic}
\end{algorithm}

\begin{algorithm}[t]
    \caption{Recovery procedure}
    \label{alg:recovery}
    \algfontsize
    \begin{algorithmic}[1]
        \Statex // Global states
        \State $\readto, \readfrom, \peers$
        \State \textbf{suspected}  \Comment{set of suspected workers, updated by failure detector}
        \State \textbf{curr-txidx} \Comment{highest txn that is locally executed and persisted}
        \State \textbf{my-shard}  \Comment{identifier for the local shard}
        \Statex
        \Procedure{Recover}{$x$} \Comment{$x \in  \peers \cup \readfrom$} \label{alg:line:recover}
        \State $s\gets x$.shard
        \State $(w, \txidx^*) \gets$ \Call{GetStatus}{s}
        \If{$\txidx^* \le \textbf{curr-txidx}$}
        \label{proc:recoverifbranch}
        \State $\textit{success} \gets \Call{NewReader}{w, \txidx^*}$ \label{alg:line:new-reader-return-success}
        \If{\textit{success}}
        \State \readfrom $\gets \readfrom \cup \set{w}$
        \State \Return True \Comment{recovery is successful} \label{alg:line:recovery-success-1}
        \Else
        \State \Return False \Comment{recovery failed, caller should retry}
        \EndIf
        \Else
        \label{proc:recoverelsebranch}
        \For{$p\in\peers$}
        \State $\Call{Send}{p, \notifysync}$ \label{alg:line:notifysync}
        \EndFor
        \State $\Call{Synchronize}$ \label{alg:line:sync}
        \State \Return true \Comment{May run recovery for each crashed peer} \label{alg:line:recovery-success-2}
        \EndIf
        \EndProcedure
        \Statex
        \Procedure{GetStatus}{$s$} \Comment{$s$ is a shard identifier} \label{alg:line:get-status}
        \For{$w \in \text{shard } s$}
        \State \Call{Send}{$w$, \recover} \label{alg:line:send-recover}
        \EndFor
        \State $\textit{r} \gets $ receive \recoverok
        \State $\textit{replies} \gets \{r\}$
        \State $r_h \gets r$ \Comment{reply with highest txid}
        \While{$|\textit{replies}| < f+1$} \label{alg:line:get-status-wait}
        \State $\textit{r} \gets $ receive \recoverok
        \State $r_h \gets r$ if $r.\text{txid} > r_h.\text{txid}$ else $r_h$
        \State $\textit{replies} \gets \textit{replies}\cup\{r\}$
        \EndWhile

        \State \Return ($r_h.$src, $r_h.$txid)
        \EndProcedure
        \Statex
        \Procedure{NewReader}{$w, \txidx^*$} \label{alg:line:new-reader}
        \State $\newreader\gets(\txidx^*)$
        \State \Call{Send}{$w$, \newreader}
        \State $\textit{reply} \gets$ receive reply from $w$
        \If{\textit{reply} = \newreaderok}
        \State \Return true \Comment{reconfiguration success}
        \Else
        \State \Return false
        \EndIf
        \EndProcedure
    \end{algorithmic}
\end{algorithm}

In the algorithm, first the execution worker, denoted as $e$, initiates the process by establishing a view on the current execution
status of workers in shard $x$. This is achieved through a call to $(w, \txidx^*) \gets \Call{GetStatus}{x.\text{shard}}$, where $w$
represents the most up-to-date worker in a quorum of workers within $x.\text{shard}$,
having executed up to at least $\txidx^*$.

Subsequently, based on the obtained result, the execution worker $e$ takes one of two distinct actions. If $e$'s current execution
state is \textit{after} $\txidx^*$, it initiates the \Call{NewReader}{$w, \txidx^*$} operation, requesting $w$ to serve its reads that
were previously handled by $x$. Otherwise, if $e$'s current state is \textit{before} $\txidx^*$, it engages in the synchronization
procedure (\Cref{alg:synchronize}) before attempting recovery. This synchronization step becomes crucial, as there might no longer be
clusters with sufficiently old buffers for the recovering worker $e$ to proceed through the standard recovery process. In such
instances, $e$ employs the synchronization procedure to load the checkpointed state of another cluster, ensuring a seamless recovery
process in the distributed system.

\begin{algorithm}[t]
    \caption{Recovery procedure message handlers}
    \label{alg:recovery-handlers}
    \algfontsize
    \begin{algorithmic}[1]
        \Statex // Global states
        \State $\readto, \readfrom, \peers$
        \State \stabletxid \Comment{txid of locally-stored stable checkpoint}
        \State \textbf{buffer} \Comment{local store of sent \ready}
        \State \textbf{checkpoints} \Comment{snapshots of local state}
        \Statex
        \Procedure{ProcessRecover}{\recover}
        \label{proc:processrecover}
        \State $\recoverok \gets (\stabletxid)$
        \State\Call{Send}{\recover.src, \recoverok}
        \EndProcedure
        \Statex
        \Procedure{ProcessNewReader}{\newreader}
        \label{proc:processnewreader}
        \State $\textit{src}\gets$ \newreader.src
        \State $\txidx^*\gets$ \newreader.txid
        \If{$\txidx^* < \stabletxid$}
        \State\Call{Send}{\textit{src}, \recoverabort}
        \Else
        \State $\readto\gets\readto\cup\set{\textit{src}}$
        \State\Call{Send}{\textit{src}, \newreaderok}
        \For{$ r \in$ \textbf{buffer}}
        \If{\textit{r}.dst = src $\land $ \textit{r}.txid $\ge \stabletxid$ }
        \State \Call{Send}{\textit{src}, r} \Comment forward all buffered messages since \stabletxid \label{alg:line:catchup-buffered-messages}
        \EndIf
        \EndFor

        \EndIf
        \EndProcedure
        \Statex
        \Procedure{ProcessNotifySync}{\notifysync}
        \label{proc:processnotifysync}
        \State $\textit{src}\gets$\notifysync.src
        \If{\textit{src} $\in$\peers}
        \label{proc:processnotifysync1}
        \State perform  \Call{Syncrhonize}{}
        \EndIf
        \If{\textit{src} $\in$\readto}
        \label{proc:processnotifysync2}
        \State $\readto\gets\readto\setminus\set{\textit{src}}$
        \EndIf
        \If{\textit{src} $\in$\readfrom}
        \label{proc:processnotifysync3}
        \State $\readfrom\gets\readfrom\setminus\set{\textit{src}}$
        \EndIf
        \EndProcedure
        \Statex
        \Procedure{ProcessSync}{\sync}
        \label{proc:processsync}
        \State $\textit{src}\gets$ \sync.src
        \State $\txidx^*\gets$ \sync.txid
        \If{$\txidx^* < \stabletxid$}
        \State\Call{Send}{\textit{src}, \recoverabort}
        \Else
        \State \Call{Send}{\textit{src}, $\textbf{checkpoints}[\txidx^*]$}
        \EndIf
        \EndProcedure
    \end{algorithmic}
\end{algorithm}

\para{Synchronization Protocol}
\Call{Synchronize}{} (\Cref{alg:synchronize}) is called by source $e$ and brings $e$ and its peers up to date through loading the checkpointed state of another set of workers. This process on a high level works as follows.

Initially, $e$ communicates with its \readfrom and \readto nodes, issuing \notifysync messages to prompt the removal of $e$ from their respective communication sets. Additionally, $e$ notifies its \peers to cease normal operations and engage in synchronization through the \Call{Synchronize}{} procedure.

Afterwards, $e$ clears its own \readfrom and \readto sets and requests the current status of the worker $w$ in its shard by attempting to download $w$'s checkpoint. A synchronization message is sent to $w$, which responds based on whether the checkpoint at $\txidx^*$ has been deleted or not. If deleted, $e$ retries the synchronization protocol; otherwise, $w$ sends its state snapshot, and $e$ loads it using the $\Call{LoadCheckpoint}{w}$ procedure.

Finally, $e$ brodcasts a synchronization completion messages to all peers and awaits their responses. If an incoming \synccomplete has a \txidx greater than $\txidx^*$,  then $e$ retries the synchronization protocol to ensure uniformity across the entire cluster. The procedure concludes when $e$ receives \synccomplete containing \txidx from all non-suspected peers, allowing it to initiate recovery for any remaining suspected peers.

\begin{algorithm}[t]
    \caption{Synchronization procedure}
    \label{alg:synchronize}
    \algfontsize
    \begin{algorithmic}[1]
        \Procedure{Synchronize}{}
        \For{$w\in\readto\cup\readfrom$}
        \State \Call{Send}{$w$, \notifysync}
        \EndFor
        \State $\readfrom \gets \emptyset$
        \State $\readto \gets \emptyset$
        \While{true}
        \label{alg:synchronize_syncloop}
        \State $(w, \txidx^*) \gets$ \Call{GetStatus}{\textbf{my-shard}} \label{alg:line:synchronize_getstatus}
        \State $\sync\gets (\txidx^*)$
        \State \Call{Send}{$w$, \sync}
        \State $\textit{reply} \gets$ receive reply from $w$ \label{alg:recovery2syncreply}
        \If{\textit{reply} $\ne$ \recoverabort}
        \State $\Call{LoadCheckpoint}{w}$ \Comment{Load checkpoint from $w$}
        \For{$p\in\peers$}
        \State $\synccomplete\gets\txidx^*$
        \State $\Call{Send}{p, \synccomplete}$
        \EndFor
        \State $\textit{replies} \gets \{\}$ \Comment wait for \synccomplete responses
        \While{$\not\exists r. (r \in \textit{replies} \land r.\text{txid}>\txidx^*)$}
        \State $r\gets $ receive $\synccomplete$
        \If{$\forall p \in \peers\setminus\textbf{suspected}.\; \exists r. (r \in \textit{replies} \land r.\text{src} = p \land r.\text{txid} =\txidx^*)$} \label{alg:line:synchronize-syncloop-complete}
        \State \Return true
        \EndIf
        \EndWhile
        \EndIf
        \EndWhile
        \EndProcedure
    \end{algorithmic}
\end{algorithm}\textbf{}

\subsection{Proofs Modifications} \label{sec:recovery-proofs}
We show that \sysname satisfies the security properties defined in \Cref{sec:design} despite the crash-failure of $f_e$ out of $2f_e+1$ \execworkers in all shards.

\para{Assumptions}
The security of the recovery protocol relies on the following assumptions.

\begin{assumption}[Correct Majority] \label{as:correct-majority}
    At least $f_e+1$ out of $2f_e+1$ \execworkers of every shard are correct at all times.
\end{assumption}

\begin{assumption}[Eventual Synchrony] \label{as:partial-synchrony}
    The network between \execworkers is eventually synchronous~\cite{dwork1988consensus}.
\end{assumption}
It has been shown that in eventual synchrony, crash failures can eventually be perfectly detected~\cite{chandra1996}. Thus, \Cref{as:partial-synchrony} implies the following:

\begin{assumption}[Eventually Strong Failure Detector] \label{as:failure-detector}
    There exists an eventually perfect failure detector $\lozenge\mathcal{S}$ with the following properties:  (i) \emph{Strong completeness:} Every faulty process is eventually permanently suspected by every non-faulty process. 
    (ii) \emph{Eventual strong accuracy:} Eventually, correct processes are not suspected by any correct process.
\end{assumption}

\para{Serializability and determinsm proof}
We argue both the serializability and determinism of the protocol by showing that \execworkers process the same input transactions regardless of crash-faults. That is, no \execworker skips the processing of an input transaction or processes a transaction twice. Both serializability and determinsm then follow from the proofs of \Cref{sec:proofs}.

\begin{lemma}\label{lm:recovery-no-skip}
    No \execworker skips the processing of an input transaction.
\end{lemma}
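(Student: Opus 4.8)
The plan is to reduce the replicated setting back to the crash-free proofs of \Cref{sec:proofs} by maintaining a single invariant: at every point in time, the set of transactions whose effects are reflected in an \execworker's object store $\objectsdb$ (whether held in memory or loaded from a checkpoint) is exactly the contiguous prefix $\set{\tx_1, \dots, \tx_m}$ of the committed sequence, where $m = \textbf{curr-txidx}$, and the worker only ever attempts to process $\tx_{m+1}$ next. Because a contiguous prefix by definition has no interior gap, this invariant immediately yields the lemma: the worker can never advance past an index $k \le m$ without having accounted for $\tx_k$. The proof is then a case analysis over the three ways \textbf{curr-txidx} can change, showing each preserves the invariant.

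First I would handle normal operation by reusing the crash-free machinery. \Cref{lm:advance-exec-watermark} together with the watermark loop (\Cref{alg:line:advance-executed-idx} of \Cref{alg:core}) advances the watermark only to $\textbf{curr-txidx}+1$, and only once $\tx_{\textbf{curr-txidx}+1}$ has been processed in the sense of \Cref{def:processed-transaction}. The key structural consequence I would extract here is that \emph{checkpoints are only ever taken at contiguous boundaries}: a checkpoint persisted ``immediately after executing \txidx'' (the index carried by \checkpointed) captures the effects of exactly $\tx_1, \dots, \txidx$ and nothing beyond, since out-of-order results are parked in the executed-index buffer and do not advance the persisted watermark. This fact is what makes checkpoint loading safe.

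Next I would analyze the two recovery branches of \Cref{alg:recovery}. In the reconfiguration branch ($\txidx^* \le \textbf{curr-txidx}$) the watermark is never touched: the worker merely acquires a new \readfrom source $w$ via \Call{NewReader}{}, and $w$ replays its buffered \ready messages since \stabletxid so that execution resumes from $\tx_{\textbf{curr-txidx}+1}$; the accounted-for prefix is unchanged, so the invariant holds trivially. In the synchronization branch ($\txidx^* > \textbf{curr-txidx}$), the worker loads a checkpoint at $\txidx^*$ from an up-to-date worker $w$ in its shard. By the structural fact above this checkpoint reflects exactly $\tx_1, \dots, \tx_{\txidx^*}$, and by \Cref{th:determinism} the loaded state equals the state any correct worker reaches after processing that prefix. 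Hence the accounted-for prefix grows contiguously from $\set{\tx_1,\dots,\tx_{\textbf{curr-txidx}}}$ to $\set{\tx_1,\dots,\tx_{\txidx^*}}$ with no interior index omitted, and the worker resumes at $\tx_{\txidx^*+1}$. The cascading \notifysync and \Call{Synchronize}{} calls of \Cref{alg:synchronize} only drive peers and dependent \readto clusters to adopt this same contiguous prefix, so the invariant is preserved cluster-wide.

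The main obstacle I anticipate is precisely this synchronization branch: I must rule out that the target $\txidx^*$ returned by \Call{GetStatus}{} points to a checkpoint that does \emph{not} correspond to a genuinely-processed contiguous prefix, which would let the worker silently jump over unprocessed transactions. This needs two supporting arguments. First, $\txidx^*$ is a \stabletxid, i.e.\ a checkpoint confirmed by a quorum of $f_e+1$ replicas; under \Cref{as:correct-majority} every such quorum contains at least one correct worker that actually executed the whole prefix, so $\txidx^*$ is a real boundary rather than a spuriously-high value. Second, I must argue that the \synccomplete handshake drives all non-suspected peers to the common $\txidx^*$ and that, under the eventually-strong failure detector of \Cref{as:failure-detector}, the retry loop of \Cref{alg:synchronize} terminates, so a worker is never left stranded in a state that is neither its old prefix nor a valid checkpoint prefix. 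Everything else reduces to routine bookkeeping over the message handlers of \Cref{alg:recovery-handlers}.
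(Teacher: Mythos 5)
Your proposal reaches the right conclusion by a genuinely different and much heavier route than the paper. The paper's own proof is a two-step contradiction confined to the reconfiguration path: if a worker $w$ skipped $\tx_j$, it would have had to add to its \readfrom set a worker $w'$ whose state already reflects $\tx_j$, and then process \result messages from $w'$ for later transactions; this is impossible because the guard at \Cref{proc:recoverifbranch} of \Cref{alg:recovery} only admits $w'$ into \readfrom when $\txidx^* \le \textbf{curr-txidx}$, i.e., when $w$ has itself already processed everything up to the point from which $w'$ will serve reads. Your proof contains this same key observation (your reconfiguration case, together with the remark that $w'$ replays its buffer from \stabletxid onward, cf.\ \Cref{alg:line:catchup-buffered-messages} of \Cref{alg:recovery-handlers}), but embeds it in a global prefix invariant with a case analysis over normal operation, reconfiguration, and checkpoint synchronization. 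What your framework buys is explicit coverage of the synchronization branch and of checkpoint loading, which the paper's proof leaves implicit (adopting a checkpoint counts as accounting for the whole prefix, by determinism); what the paper's version buys is brevity, at the cost of not saying anything about that case.

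One part of your invariant is overstated, though, and its stated justification does not hold. You claim that at all times the store reflects \emph{exactly} the prefix $\tx_1,\dots,\tx_m$ with $m=\textbf{curr-txidx}$, that the worker ``only ever attempts to process $\tx_{m+1}$ next,'' and that a checkpoint at \txidx captures those transactions ``and nothing beyond,'' because out-of-order results ``are parked in the executed-index buffer.'' This conflates the watermark with the store: \Cref{alg:process-result} calls $\Call{UpdateStores}{\cdot}$ (\Cref{alg:line:update-stores}) unconditionally on every incoming \result, whether or not the watermark advances, so effects of non-conflicting transactions with indices above the watermark can legitimately sit in $\objectsdb$ before all lower-indexed transactions are processed --- indeed, out-of-order execution of non-conflicting transactions is the point of the system. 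The invariant that actually holds, and the only one your no-skip argument uses, is one-directional: every transaction with index at most $\textbf{curr-txidx}$ has been accounted for (no gap below the watermark). The ``nothing beyond'' half is both unsupported by the given pseudocode and unnecessary here; it belongs with the no-duplicate claim (\Cref{lm:recovery-no-duplicate}), which the paper proves separately. With your invariant weakened to that one direction, your proof is sound.
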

\begin{proof}
    Let's assume by contradiction that a worker $w$ with state $\objectsdb$ skips the processing of the input transaction $\tx_j$. This means that (i) $w$ included in its \readfrom set a worker $w'$ with state $\objectsdb'$, where $\objectsdb'$ is the state $\objectsdb$ after the processing of the list of transactions $\tx_i,\dots,\tx_j$; and (ii) that $w$ processes a \result from $w'$ referencing transaction $\tx_j$.
    This is however a direct contradiction of check \Cref{proc:recoverifbranch} of \Cref{alg:recovery} ensuring that $w$ only includes $w'$ in its \readfrom set after its latest processed transaction \textbf{curr-txidx} is at least $\tx_j$ (that is, $\tx_j \leq \textbf{curr-txidx}$), hence a contradiction.
\end{proof}

\begin{lemma} \label{lm:recovery-no-duplicate}
    No \execworker processes the same input transaction twice.
\end{lemma}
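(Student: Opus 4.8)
The plan is to show that the only new opportunity to process a transaction twice is introduced by the recovery path, and that this opportunity is neutralized by the monotone, durably persisted execution watermark \textbf{curr-txidx} (equivalently $\executedIdx$ of \Cref{alg:core}). First I would dispatch the failure-free case: a transaction is triggered only once it reaches the head of every queue it occupies, and \Cref{lm:clear-queues} guarantees it is then removed from those queues, so the head-of-queue trigger cannot fire for it a second time; combined with the assumption of reliable, order-preserving channels between \execworkers, each \ready and each \result is delivered exactly once, so \Cref{alg:process-ready} and \Cref{alg:process-result} run once per transaction. Hence any duplicate must originate in recovery.

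The single recovery action that re-sends already-delivered traffic is the buffer replay at \Cref{alg:line:catchup-buffered-messages} of \Cref{alg:recovery-handlers}: upon a successful \emph{NewReader} handshake, the new read provider $w'$ forwards to the recovering worker $e$ every buffered \ready whose index satisfies $\text{r.txid} \ge \stabletxid$. Because $e$ takes this branch only when the status check at \Cref{proc:recoverifbranch} of \Cref{alg:recovery} holds, i.e. $\txidx^* \le \textbf{curr-txidx}$ with $\txidx^* = \stabletxid$ of $w'$, the replayed indices overlap with transactions $e$ has already processed, precisely the window $[\stabletxid, \textbf{curr-txidx}]$. The key step is to argue that each replayed \ready for a $\tx$ with $\Call{Index}{\tx} \le \textbf{curr-txidx}$ is discarded by the watermark guard: since $e$ has already processed $\tx$, its index lies at or below the persisted watermark, so the message neither repopulates $\objectsreceived$ nor re-triggers \Cref{alg:line:execute} of \Cref{alg:process-ready}, and thus produces no second state update. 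Replayed messages with $\Call{Index}{\tx} > \textbf{curr-txidx}$ correspond, by \Cref{lm:recovery-no-skip}, to transactions $e$ has not yet processed, so their delivery is the first processing; and once $e$ has installed $w'$ in its \readfrom set, it never again receives those reads from the crashed provider, so no further copy can arrive.

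Finally I would treat the synchronization branch ($\txidx^* > \textbf{curr-txidx}$), where $e$ loads a remote checkpoint via \synchronize. Loading a checkpoint at index $\txidx^*$ installs the cumulative effect of all transactions up to $\txidx^*$ and advances \textbf{curr-txidx} to $\txidx^*$ \emph{without} re-executing any of them; by \Cref{th:determinism} this is exactly the state $e$ would otherwise have reached, and thereafter $e$ only processes transactions with index strictly above $\txidx^*$, so no transaction straddles the load boundary to be processed twice. The main obstacle is the overlap window $[\stabletxid, \textbf{curr-txidx}]$ in the reconfiguration branch: the delicate part is to establish rigorously that the watermark is monotone and survives crashes (so a recovered $e$ knows exactly which indices it has committed) and that the \ready and \result handlers genuinely short-circuit on indices at or below it. Everything else reduces to the at-most-once reasoning already used for \Cref{lm:recovery-no-skip}.
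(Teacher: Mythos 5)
Your identification of the buffer replay at \Cref{alg:line:catchup-buffered-messages} of \Cref{alg:recovery-handlers} as the source of potential duplicates is correct, and your treatment of the synchronization branch (a loaded checkpoint installs state without re-executing anything) matches the design. However, the crux of your reconfiguration-branch argument---that replayed messages for transactions with index at or below \textbf{curr-txidx} are ``discarded by the watermark guard,'' so they neither repopulate $\objectsreceived$ nor re-trigger \Cref{alg:line:execute} of \Cref{alg:process-ready}---relies on a mechanism that does not exist in the algorithms. Neither \Cref{alg:process-ready} nor \Cref{alg:process-result} compares an incoming message's transaction index against \executedIdx (or \textbf{curr-txidx}) and short-circuits; the only index tests in the recovery path are the abort check in \textsc{ProcessNewReader} ($\txidx^* < \stabletxid$) and the replay filter $r.\text{txid} \ge \stabletxid$, both performed on the \emph{provider's} side. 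You yourself flag this receiver-side guard as the ``delicate part'' still to be established, but it cannot be established from the algorithms as written, so your proof has a hole exactly at its central step.

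The paper closes the argument by a different route: a contradiction on the \emph{state of the new provider} rather than a filter at the receiver. For a worker $w$ to process $\tx_j$ twice, it would have to process a message referencing $\tx_j$ received from a \readfrom member $w'$ whose state is still prior to the processing of $\tx_j$. But $w$ only adds $w'$ to its \readfrom set after the \textsc{NewReader} call at \Cref{alg:line:new-reader-return-success} of \Cref{alg:recovery} succeeds, and the messages $w'$ replays at \Cref{alg:line:catchup-buffered-messages} are in its buffer precisely because $w'$ has itself already processed the corresponding transactions; hence $w'$ cannot be in the stale state that the duplicate scenario requires, a contradiction. In short, the paper's protection is the handshake ordering together with the provider's state, not a receiver-side watermark check. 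To salvage your approach you would need either to add such a guard to the handlers (a protocol change) or to prove as a lemma that a replayed \ready alone can never re-complete the cardinality check at \Cref{alg:line:wait-for-objects} of \Cref{alg:process-ready}; neither is what the paper does.
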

\begin{proof}
    Let's assume by contradiction that a worker $w$ with state $\objectsdb$ processes the same input transactions $\tx_j$ twice. This means that (i) $w$ included in its \readfrom set a worker $w'$ with state $\objectsdb'$, where $\objectsdb'$ is the state $\objectsdb$ prior to the processing of the list of transactions $\tx_i,\dots,\tx_j$; and (ii) that $w$ processes a \result from $w'$ referencing transaction $\tx_j$.
    This is however impossible as $w$ could only include $w'$ in its \readfrom set after calling \Cref{alg:line:new-reader-return-success} (\Cref{alg:recovery}), and thus after $w'$ updates its state to $\objectsdb$ by processing $\tx_i,\dots,\tx_j$ at \Cref{alg:line:catchup-buffered-messages} (\Cref{alg:recovery-handlers}). As a result, $w$ could not have processed a \result from $w'$ referencing $\tx_j$ while its state is different from $\objectsdb$.
\end{proof}

\begin{lemma} \label{lm:recovery-same-input}
    \execworkers process the same set of input transactions regardless of crash-faults.
\end{lemma}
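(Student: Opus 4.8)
The plan is to prove Lemma~\ref{lm:recovery-same-input} by combining the two preceding lemmas. \Cref{lm:recovery-no-skip} establishes that no \execworker ever omits processing an input transaction, and \Cref{lm:recovery-no-duplicate} establishes that no \execworker ever processes the same input transaction more than once. Together these two facts pin down the multiset of transactions each \execworker processes: it is exactly the set of input transactions it receives, each processed exactly once. Since all correct \execworkers receive the same committed sequence from the \primary (step \one of \Cref{fig:overview}), they all process the same set of input transactions. The crux is that crash-faults and the recovery machinery cannot perturb this set --- recovery only changes \emph{which} replica serves a given read, never \emph{what} sequence of results each worker ultimately applies.

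\textbf{First I would} fix an arbitrary correct \execworker $w$ (one that has not permanently crashed, guaranteed to exist in each shard by \Cref{as:correct-majority}) and consider the set of input transactions referencing objects handled by $w$'s shard, derived from the committed sequence. \textbf{Then I would} invoke \Cref{lm:recovery-no-skip} to argue that $w$ processes every transaction in this set: whenever $w$ installs a new \readfrom source $w'$ after a crash, the guard at \Cref{proc:recoverifbranch} (the check $\txidx^* \le \textbf{curr-txidx}$) together with the buffered-message replay at \Cref{alg:line:catchup-buffered-messages} ensures $w$ receives a \result for every transaction between its last processed index and the point at which $w'$ resumes serving. Hence no transaction falls through the gap created by the failover. \textbf{Next I would} invoke \Cref{lm:recovery-no-duplicate} to argue that $w$ does not double-count: the same guards guarantee that the replacement source $w'$ only forwards results strictly after $w$'s current watermark (it replays from $\stabletxid$ but $w$ has \textbf{curr-txidx} $\ge \txidx^*$), so no already-applied transaction is re-applied. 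This handles the \emph{reconfiguration} recovery path.

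\textbf{The remaining case} is the \emph{checkpoint synchronization} path (the else-branch at \Cref{proc:recoverelsebranch} invoking \Call{Synchronize}{}), where $w$ and its peers load a checkpoint wholesale rather than replaying a buffer. Here I would argue that \Call{LoadCheckpoint}{} installs a consistent on-disk snapshot that, by construction, reflects exactly the processing of the prefix $\tx_1,\dots,\tx_{\txidx^*}$ --- so jumping to this state is equivalent to having processed precisely that prefix once each, neither skipping nor duplicating. The \synccomplete handshake at \Cref{alg:line:synchronize-syncloop-complete} ensures all peers converge to the \emph{same} $\txidx^*$ before resuming, so the set of transactions covered by the checkpoint is uniform across the cluster, and subsequent execution continues from a common watermark. \textbf{The main obstacle} I anticipate is the interaction of the two recovery paths under \Cref{as:failure-detector}'s merely \emph{eventual} accuracy: a transient false suspicion could trigger an unnecessary reconfiguration or synchronization mid-stream, and one must confirm that even these spurious failovers preserve the no-skip/no-duplicate invariants rather than only the failovers caused by genuine crashes. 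I would close this by observing that Lemmas~\ref{lm:recovery-no-skip} and~\ref{lm:recovery-no-duplicate} are stated over \emph{any} inclusion of a worker into a \readfrom set and do not assume the triggering suspicion was accurate, so the argument goes through verbatim regardless of whether a failover was warranted. The lemma then follows immediately, and both serializability and determinism reduce to the crash-free proofs of \Cref{sec:proofs}.
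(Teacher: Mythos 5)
Your proposal takes essentially the same approach as the paper: the paper's proof of \Cref{lm:recovery-same-input} is simply the observation that \Cref{lm:recovery-no-skip} (no worker skips a transaction) and \Cref{lm:recovery-no-duplicate} (no worker processes a transaction twice) together imply all \execworkers process the same input set, which is exactly your core argument. The additional detail you supply --- unpacking the guard at \Cref{proc:recoverifbranch}, the buffer replay, the synchronization path, and the failure-detector accuracy --- re-derives material already contained in those two lemmas' proofs rather than charting a different route, so it is correct but not a departure from the paper.
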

\begin{proof}
    This lemma follows from the observation that, despite crash-faults, no \execworker skips any transaction (\Cref{lm:recovery-no-skip}) nor processes any transaction twice (\Cref{lm:recovery-no-duplicate}). As a result, \execworkers process the same set of input transactions regardless of crash-faults.
\end{proof}

\para{Liveness proof}
Suppose a worker $x$ crashes. By the strong completeness property of the failure detector, a correct worker $e$ eventually detects this failure, and performs the recovery procedure (\Cref{alg:line:recover} of \Cref{alg:recovery}) to find another correct \execworker of shard $s$ to replace $x$.
We thus argue the liveness property in \Cref{lm:recovery-success} by showing that the recovery procedure presented at \Cref{alg:line:recover} of \Cref{alg:recovery} eventually succeeds; that is, it eventually exits at either \Cref{alg:line:recovery-success-1} or \Cref{alg:line:recovery-success-2}. The protocol then resumes normal operation, and the liveness of the system follows from the liveness of the normal operation protocol (\Cref{sec:proofs}).
As intermediary steps, we show that the procedures $\Call{GetStatus}{\cdot}$ (\Cref{alg:line:get-status} of \Cref{alg:recovery}), $\Call{NewReader}{\cdot}$ (\Cref{alg:line:new-reader} of \Cref{alg:recovery}), and $\Call{Synchronize}{}$ (\Cref{alg:synchronize}) eventually terminate.

\begin{lemma} \label{lm:get-status-success}
    Any call by a correct worker to $\Call{GetStatus}{\cdot}$ (\Cref{alg:line:get-status} of \Cref{alg:recovery}) eventually terminates.
\end{lemma}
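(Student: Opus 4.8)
The plan is to show that the only potentially blocking point in $\Call{GetStatus}{\cdot}$ --- the \textbf{while} loop at \Cref{alg:line:get-status-wait} that waits for $f_e+1$ \recoverok replies --- eventually exits, after which the procedure returns immediately (the final \textbf{return} does no further waiting). I would therefore reduce the lemma to the single claim that a correct worker invoking $\Call{GetStatus}{s}$ eventually receives at least $f_e+1$ distinct \recoverok messages from workers of shard $s$.

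The argument would proceed in four steps, in order. First, the calling worker broadcasts a \recover message to every worker of shard $s$ (\Cref{alg:line:send-recover}). Second, by \Cref{as:correct-majority} at least $f_e+1$ of the $n_e = 2f_e+1$ workers of shard $s$ are correct; since the caller is correct, the reliable (and, by \Cref{as:partial-synchrony}, eventually synchronous) network guarantees that each of these \recover messages is eventually delivered to its correct recipient. Third, I would observe that the \recover handler $\Call{ProcessRecover}{\cdot}$ is unconditional and non-blocking: upon receipt, a correct worker immediately replies with $\recoverok = (\stabletxid)$, independently of how far its own execution has progressed. Fourth, applying network reliability once more, each of these (at least) $f_e+1$ \recoverok replies is eventually delivered back to the caller, so its reply set eventually reaches size $f_e+1$, the guard $|\textit{replies}| < f_e+1$ becomes false, and the procedure returns.

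The main point to get right --- though it is not a deep obstacle --- is that faulty or slow workers of shard $s$ can never stall the loop: the threshold $f_e+1$ is met by correct workers \emph{alone}, so the caller never blocks waiting on a reply from a crashed worker, and it never blocks on a correct worker whose reply is contingent on some local state, since $\Call{ProcessRecover}{\cdot}$ carries no such condition. I would also flag the apparent mismatch between the guard written as $f+1$ in the listing and the shard-internal quorum $f_e+1$ fixed by \Cref{as:correct-majority}; the proof uses $f_e+1$, consistent with the replication factor $n_e = 2f_e+1$, which is exactly what makes the waited-for quorum both attainable from correct workers and small enough that it never depends on a faulty one.
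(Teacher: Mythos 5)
Your proof is correct and takes essentially the same route as the paper's: broadcast \recover to all workers of shard $s$ (\Cref{alg:line:send-recover}), invoke \Cref{as:correct-majority} to obtain at least $f_e+1$ correct responders, and use eventual synchrony (\Cref{as:partial-synchrony}) to conclude the \recoverok quorum is eventually assembled so the guard at \Cref{alg:line:get-status-wait} releases. Your two added refinements---that \textsc{ProcessRecover} replies unconditionally (so no reply hinges on a responder's local progress, and faulty workers cannot stall the threshold) and that the listing's $f+1$ guard should read $f_e+1$---are both accurate points the paper leaves implicit.
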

\begin{proof}
    We argue this lemma by construction. Let's assume an \execworker $w$ calls $\Call{GetStatus}{s}$ on a shard $s$.
    It this sends a \recover message to all workers of shards $s$ (\Cref{alg:line:send-recover} of \Cref{alg:recovery}). By \Cref{as:partial-synchrony}, each of these workers eventually receive the messages, and correct ones reply with a \recoverok message (\Cref{proc:processrecover} of \Cref{alg:recovery-handlers}).
    By \Cref{as:correct-majority}, there are at least $f_e+1$ correct workers in shard $s$. Worker $w$ thus eventually receives at least $f_e+1$ \recoverok responses (\Cref{as:partial-synchrony}). Check \Cref{alg:line:get-status-wait} of \Cref{alg:recovery} then succeeds and ensures that $\Call{GetStatus}{s}$ returns.
\end{proof}

\begin{lemma} \label{lm:new-reader-success}
    A call $\Call{NewReader}{w, \cdot}$ (\Cref{alg:line:new-reader} of \Cref{alg:recovery}) to a correct worker $w$ eventually successfully terminates; that is, it returns \textbf{True}.
\end{lemma}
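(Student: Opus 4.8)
The plan is to split the claim into two independent parts — that the call \emph{terminates}, and that the value it returns is \textbf{True} — and to dispatch the termination part first, since it is routine. A call to $\Call{NewReader}{w, \txidx^*}$ sends a single \newreader message to $w$ and then blocks waiting for a reply. Because $w$ is correct, \Cref{as:partial-synchrony} guarantees the message is eventually delivered and handled by $\Call{ProcessNewReader}{\cdot}$ (\Cref{proc:processnewreader} of \Cref{alg:recovery-handlers}); that handler always emits exactly one reply, either \newreaderok or \recoverabort, and by \Cref{as:partial-synchrony} again this reply eventually reaches the caller. Hence the call returns after finitely many steps once the network has stabilized, exactly mirroring the termination argument already established for \Cref{lm:get-status-success}.

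The substance of the lemma is showing that the reply is \newreaderok and not \recoverabort, i.e.\ that the guard $\txidx^* < \stabletxid$ evaluated at $w$ in \Cref{proc:processnewreader} is \textbf{False}. The observation I would build on is that $\txidx^*$ is not arbitrary: it was produced by the immediately preceding $\Call{GetStatus}{\cdot}$ as the \stabletxid that $w$ \emph{itself} reported inside its \recoverok. Thus at the instant $w$ answered the \recover message we had $\txidx^* = \stabletxid$ at $w$, so the guard can only flip to true if $w$ advances its stable checkpoint strictly past $\txidx^*$ — equivalently, garbage-collects the checkpoint and the buffered \ready messages at $\txidx^*$ — in the narrow window before the \newreader arrives. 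I would therefore reduce the whole lemma to ruling out this garbage-collection race.

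Ruling out that race is where the real work lies, and I expect it to be the main obstacle. The deterministic argument I would attempt uses the invariant that $w$ discards the checkpoint at $\txidx^*$ only after a \emph{later} checkpoint becomes stable, which by the quorum rule demands a quorum of \checkpointed beyond $\txidx^*$ from every shard; coupled with the bounded-checkpoint mechanism, $w$ cannot fast-forward its stable index without the rest of the system advancing too, and since $w$ was already the \emph{most} up-to-date replica of its shard in the queried quorum, its stable index stays pinned at $\txidx^*$ long enough to serve the new reader. If that window cannot be closed purely deterministically, the fallback is to lean on the enclosing retry in $\Call{Recover}{\cdot}$: an \recoverabort merely causes the caller to re-invoke recovery, and because any spurious abort witnesses a \emph{strict} advance of the monotone quantity \stabletxid at $w$ — itself bounded by global execution progress, which converges under \Cref{as:partial-synchrony} and \Cref{as:correct-majority} — only finitely many aborts can occur before either a $\Call{NewReader}{\cdot}$ succeeds or the branch guard at \Cref{proc:recoverifbranch} of \Cref{alg:recovery} diverts the caller into the synchronization path, which is handled separately. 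I would present the invariant-based argument as primary and keep the retry bound as a robustness backstop.
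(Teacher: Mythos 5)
You correctly isolate the crux: since $\txidx^*$ is exactly the \stabletxid that $w$ itself reported in its \recoverok, an \recoverabort can only happen if $w$'s stable checkpoint advances strictly past $\txidx^*$ during the round trip, so the lemma reduces to excluding that garbage-collection race. But your primary, ``deterministic'' closure of the race does not hold. Nothing in the protocol pins $w$'s stable index: $w$ being the most up-to-date replica of its shard \emph{in the queried quorum} is entirely compatible with every shard continuing to execute, exchanging \checkpointed messages, and assembling the $f_e+1$ quorums that make a \emph{later} checkpoint stable --- all concurrently with, and independently of, the recovery round trip. The bounded-checkpoint mechanism ($c$ checkpoints per worker) only limits how many non-stable checkpoints a worker may hold; it does not prevent the stable watermark from advancing while the \newreader message is in flight. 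So the race cannot be ruled out by a purely logical invariant, and the ``pinned long enough'' step is where your main argument fails.

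Your fallback does not rescue the proof either, for two reasons. First, it proves a different statement: the retry lives \emph{outside} $\Call{NewReader}{\cdot}$ and indeed outside $\Call{Recover}{\cdot}$ (on failure, \Cref{alg:recovery} simply returns \textbf{False} to its caller), whereas the lemma --- and the way \Cref{lm:recovery-success} invokes it --- asserts that a \emph{single} call to a correct $w$ returns \textbf{True}. Second, the finiteness claim is unsupported: in a live system the commit sequence, and hence \stabletxid, keeps advancing indefinitely, so ``bounded by global execution progress, which converges'' is false, and monotonicity alone gives no bound on the number of aborts. What actually closes the race in the paper is a timing argument: by \Cref{as:partial-synchrony}, read as guaranteeing that after GST messages between \execworkers are delivered much quicker than checkpoint intervals, $w$ cannot have garbage-collected the checkpoint at $\txidx^*$ in the short window between sending its \recoverok and receiving the \newreader; hence the guard $\txidx^* < \stabletxid$ in \Cref{proc:processnewreader} of \Cref{alg:recovery-handlers} evaluates to \textbf{False} and $w$ replies \newreaderok. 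Your proof needs to invoke that timing assumption directly (or else restate the lemma in retry form and restructure \Cref{lm:recovery-success} accordingly); as written, the gap remains.
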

\begin{proof}
    Suppose a correct worker calls $\Call{NewReader}{w, \txidx^*}$ for a correct worker $w$. By construction, the values $(w, \txidx^*)$ are the result of the prior call to $\Call{GetStatus}{\cdot}$ (\Cref{alg:line:get-status} of \Cref{alg:recovery}). Then, given a period of synchrony where messages are delivered much quicker than checkpoint intervals (\Cref{as:partial-synchrony}), $\txidx^*$ is a valid checkpoint at $w$. As such $w$ responds to the caller's request with $\newreaderok$, and the caller successfully terminates.
\end{proof}

\begin{lemma} \label{lm:synchronize-success}
    Any call to $\Call{Synchronize}{}$ (\Cref{alg:synchronize}) eventually terminates.
\end{lemma}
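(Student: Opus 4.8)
The plan is to prove that the outer \textbf{while} loop of \Cref{alg:synchronize} (labeled at \Cref{alg:synchronize_syncloop}) terminates by establishing two things separately: that the body of each pass completes, and that only finitely many passes occur. For the body, I would first invoke \Cref{lm:get-status-success} so that the call $\Call{GetStatus}{\textbf{my-shard}}$ at \Cref{alg:line:synchronize_getstatus} returns some pair $(w,\txidx^*)$; by \Cref{as:correct-majority} the returned $w$ lies in a correct quorum of the local shard, and by \Cref{as:partial-synchrony} the reply to the subsequent \sync message is eventually delivered. If that reply is \recoverabort (the requested checkpoint at $\txidx^*$ was already garbage-collected), the conditional is skipped and the outer loop simply restarts, a case I fold into the retry argument below. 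Otherwise $\Call{LoadCheckpoint}{w}$ terminates because the checkpoint is a single finite blob delivered under eventual synchrony, after which $e$ broadcasts $\synccomplete(\txidx^*)$ to all of its peers.

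Next I would show the inner wait loop terminates on each pass. Once $e$ has broadcast $\synccomplete$, the strong completeness of the failure detector (\Cref{as:failure-detector}) ensures that every peer is eventually either permanently suspected or, if correct, delivers its response, while eventual accuracy ensures the suspected set stabilizes to exactly the crashed peers. Hence the inner loop must exit: either it receives a $\synccomplete$ carrying a strictly larger $\text{txid}>\txidx^*$ and breaks out, forcing a retry of the outer loop, or it eventually observes a matching $\synccomplete$ from every member of $\peers\setminus\textbf{suspected}$ at $\txidx^*$ and returns \textbf{true} at \Cref{alg:line:synchronize-syncloop-complete}.

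The hard part will be bounding the number of outer-loop retries. Here I would take the progress measure to be the value $\txidx^*$ computed on each pass and argue it is \emph{strictly increasing} across retries: a retry is triggered only by discovering a strictly larger checkpoint index (either a $\synccomplete$ with a higher txid, or a \recoverabort signalling garbage collection past $\txidx^*$), and the subsequent $\Call{GetStatus}{\cdot}$, which returns the highest \stabletxid{} reported by a quorum, must then return a value at least this large. Because checkpoints are taken at transaction-index boundaries and the committed sequence $[\tx_1,\dots,\tx_n]$ is finite, these indices range over a finite set, so a strictly increasing integer measure can advance only finitely often before the loop returns. The delicate point is establishing that at the final value all non-suspected peers actually converge to a \emph{common} maximal $\txidx^*$: the peers lie in different shards and are each running $\Call{Synchronize}{}$ and waiting on one another, so I would resolve the mutual dependence by noting that the $\notifysync$ messages sent at the start of the procedure halt normal operation across the dependency closure (preventing the checkpoint ceiling from drifting unboundedly, bounded further by the $c$-checkpoint limit keeping shards within $c-1$ checkpoints of each other), so the shared finite ceiling forces every correct peer to the same fixed point. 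Eventual accuracy of \Cref{as:failure-detector} then guarantees that $\peers\setminus\textbf{suspected}$ ultimately coincides with the correct peers, at which point the guard at \Cref{alg:line:synchronize-syncloop-complete} holds and $\Call{Synchronize}{}$ returns.
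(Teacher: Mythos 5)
Your proof is correct and follows the same skeleton as the paper's (termination of $\Call{GetStatus}{\textbf{my-shard}}$ via \Cref{lm:get-status-success}, termination of the inner wait via strong completeness of the failure detector from \Cref{as:failure-detector}), but it diverges on the crucial step of bounding the outer-loop retries. The paper handles both the \recoverabort case and peer convergence with a quantitative timing argument from \Cref{as:partial-synchrony}: after GST, messages are delivered much faster than checkpoint intervals, so a \sync request reaches its target before the checkpoint at $\txidx^*$ is garbage-collected, and ``after sufficient retries'' all peers synchronize to the same $\txidx^*$. You instead construct a well-founded progress measure --- $\txidx^*$ strictly increases across retries (both a \recoverabort and a $\synccomplete$ with larger txid witness a strictly larger stable checkpoint, which quorum intersection propagates into the next $\Call{GetStatus}{\cdot}$ result) and ranges over a finite set because the committed sequence $[\tx_1,\dots,\tx_n]$ is finite. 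Within the paper's formal model this is legitimate and arguably more rigorous than the paper's ``sufficient retries,'' since liveness (\Cref{def:liveness}) is indeed stated for a finite sequence; what it buys is an elementary, timing-free termination argument, while the paper's timing argument buys coverage of the streaming regime where commits never stop and the checkpoint frontier keeps advancing. One caveat on your auxiliary ceiling claim: the \notifysync halt only propagates through the dependency closure, and since checkpoint stability requires only $f_e+1$ confirmations per shard, replicas of the syncing worker's shard that sit in healthy clusters keep executing and advancing the stable frontier --- so under sustained load the ceiling does drift, and it is really the finiteness of the sequence (in your argument) or the post-GST timing relation (in the paper's) that must carry the weight there, not the $c$-checkpoint bound by itself.
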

\begin{proof}
    We argue this lemma by construction of \Cref{alg:synchronize}. Let $w$ be a correct worker calling \Call{Synchronize}{}. By \Cref{lm:get-status-success}, the call to $\Call{GetStatus}{\textbf{my-shard}}$ (\Cref{alg:line:synchronize_getstatus} of \Cref{alg:synchronize}) eventually returns.
    Then, $w$ eventually receives a non-\recoverabort reply ($reply \neq \recoverabort$) at \Cref{alg:recovery2syncreply} given sufficiently many executions of the loop (\Cref{alg:synchronize_syncloop}) and a period of network synchrony where messages are delivered much quicker than checkpoint intervals on other clusters (\Cref{as:partial-synchrony}).
    Worker $w$ then loads a remote snapshot, and waits for a set of \synccomplete messages from $\peers\setminus\textit{suspected}$. If $w$ receives a message with a larger $\txidx$, $w$ retries the synchronization loop at line~\ref{alg:synchronize_syncloop}. By the strong completeness property of the failure detector (\Cref{as:failure-detector}), $w$ eventually suspect all failed peers, and hence receive all responses from the $\peers\setminus\textit{suspected}$ set.
    Moreover, once messages are delivered quicker than the checkpoint intervals within clusters (\Cref{as:partial-synchrony}), all peers undergoing \Call{Synchronize}{} will synchronize to the same $\txidx^*$ after sufficient retries.
\end{proof}

\begin{lemma} \label{lm:recovery-success}
    A call to $\Call{Recover}{\cdot}$ (\Cref{alg:line:recover} of \Cref{alg:line:recover}) eventually successfully terminates. That is, it eventually exists at either \Cref{alg:line:recovery-success-1} or \Cref{alg:line:recovery-success-2}
\end{lemma}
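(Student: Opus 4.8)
The plan is to reduce the claim to the three termination lemmas already established for the sub-procedures that $\Call{Recover}{\cdot}$ invokes --- namely \Cref{lm:get-status-success} for $\Call{GetStatus}{\cdot}$, \Cref{lm:new-reader-success} for $\Call{NewReader}{\cdot}$, and \Cref{lm:synchronize-success} for $\Call{Synchronize}{}$ --- and then to perform a case analysis on which of the two branches of \Cref{alg:recovery} the call takes. First I would invoke \Cref{lm:get-status-success} to conclude that the initial call to $\Call{GetStatus}{s}$ terminates and returns a pair $(w, \txidx^*)$. Control then splits on the test $\txidx^* \le \textbf{curr-txidx}$, and I would show that each branch eventually reaches one of the two permitted success exits, \Cref{alg:line:recovery-success-1} or \Cref{alg:line:recovery-success-2}.

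The else-branch (\Cref{proc:recoverelsebranch}), taken when the recovering worker is strictly behind the quorum, is the easy case: the worker sends \notifysync to all its \peers and then calls $\Call{Synchronize}{}$. By \Cref{lm:synchronize-success} this call terminates, so control falls through to \Cref{alg:line:recovery-success-2} and the procedure returns. No retry reasoning is needed here, so this branch is discharged directly.

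The if-branch (\Cref{proc:recoverifbranch}) is where the work lies. Here the worker calls $\Call{NewReader}{w, \txidx^*}$, and \Cref{lm:new-reader-success} guarantees a return value of \textbf{True} --- hence an exit at \Cref{alg:line:recovery-success-1} --- but only provided the chosen $w$ is correct. The obstacle is that the $w$ returned by $\Call{GetStatus}{\cdot}$ is merely the source of the highest-$\txidx$ reply among an arbitrary quorum of $f_e+1$ \recoverok responses, and nothing a priori prevents $w$ from being a replica that crashes after replying; in that case $\Call{NewReader}{\cdot}$ returns \textbf{False}, $\Call{Recover}{\cdot}$ returns \textbf{False}, and the caller must retry. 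To close the loop I would argue that the retries are finite: by \Cref{as:correct-majority} every shard retains at least $f_e+1$ correct replicas, and by the eventual strong accuracy of the failure detector (\Cref{as:failure-detector}) together with eventual synchrony (\Cref{as:partial-synchrony}), after GST the crashed replicas stop replying and the $f_e+1$ responses gathered by $\Call{GetStatus}{\cdot}$ all originate from correct workers, so the selected $w$ is correct and $\Call{NewReader}{\cdot}$ succeeds on the retry. I would also note that across retries the branch test may flip: if $\textbf{curr-txidx}$ stays fixed while the live quorum keeps advancing $\txidx^*$, the test eventually routes the call into the else-branch, whose termination is already guaranteed above. Either way the procedure exits at \Cref{alg:line:recovery-success-1} or \Cref{alg:line:recovery-success-2}, which is the claim. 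The main difficulty I anticipate is making this retry-termination argument precise --- in particular, justifying that the failure detector's eventual accuracy and the network synchrony jointly ensure both that a correct, sufficiently up-to-date $w$ is eventually selected and that its advertised checkpoint $\txidx^*$ has not been garbage-collected below its \stabletxid by the time the handler $\Call{NewReader}{\cdot}$ processes it.
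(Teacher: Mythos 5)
Your proposal follows the same decomposition as the paper's proof: termination of $\Call{GetStatus}{\cdot}$ via \Cref{lm:get-status-success}, then a case split on the two branches of \Cref{alg:recovery}, discharging the else-branch (\Cref{proc:recoverelsebranch}) with \Cref{lm:synchronize-success} and the if-branch (\Cref{proc:recoverifbranch}) with \Cref{lm:new-reader-success}. The difference is in the if-branch: the paper's proof simply invokes \Cref{lm:new-reader-success} and concludes, implicitly taking the worker $w$ returned by $\Call{GetStatus}{\cdot}$ to be correct---exactly the assumption you decline to make. Your concern is legitimate: \Cref{lm:new-reader-success} is stated only for a correct $w$, and nothing prevents $w$ from crashing after sending its \recoverok, in which case $\Call{NewReader}{w, \txidx^*}$ returns \textbf{False} and $\Call{Recover}{\cdot}$ itself returns \textbf{False}, exiting at neither success line; read literally, the lemma then holds only for the retried recovery process, and your retry-termination argument is the missing piece. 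Two refinements to your sketch. First, the finiteness-of-retries argument is better grounded in crash permanence plus \Cref{as:correct-majority} than in the failure detector's eventual strong accuracy: crashed workers never reply at all (GST or not), so since crashes are permanent and finitely many, eventually every \recoverok originates from a worker that never crashes, and some retry selects a correct $w$; accuracy of suspicion plays no role in that step. Second, the garbage-collection worry you flag at the end (that $\txidx^*$ might fall below $w$'s \stabletxid before the \newreader message is processed) is exactly what the paper buries inside the proof of \Cref{lm:new-reader-success}, via the assumption that after GST messages are delivered much faster than checkpoint intervals. In short, your proof is the paper's proof plus a repair of a genuine gap in it, which is a strictly more careful treatment.
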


\begin{proof}
    Consider an \execworker executing the procedure $\Call{Recover}{x}$ (\Cref{alg:line:recover} of \Cref{alg:recovery}) with $x \in  \peers \cup \readfrom$.
    The \execworker first calls $\Call{GetStatus}{x.shard}$ (\Cref{alg:line:get-status}) which is guaranteed to terminate (\Cref{lm:get-status-success}). We then have two cases: (i) the call enters the if-branch (\Cref{alg:recovery} \Cref{proc:recoverifbranch}), and (ii) the call enters the else-branch (\Cref{alg:recovery} \Cref{proc:recoverelsebranch}). We prove that the recovery procedure eventually successfully terminates in both cases.
    In the first case (i), the \execworker calls $\Call{NewReader}{w, \txidx^*}$ (\Cref{alg:line:new-reader}) which is guaranteed to eventually successfully terminate by \Cref{lm:new-reader-success}. The \execworker then adds $w$ to its \readfrom set and successfully terminates.
    In the second case (ii), \Cref{alg:line:notifysync} of \Cref{alg:recovery} ensures that every correct peer eventually performs the synchronization procedure.
    \Cref{lm:synchronize-success} then guarantees that the call to $\Call{Synchronize}{}$ (\Cref{alg:line:sync}) eventually terminates. The \execworker then successfully terminates.
\end{proof}
\section{Detailed Dynamic Objects Protocol} \label{sec:child-details}
This section completes \Cref{sec:dynamic} by providing the modifications to the algorithms of \Cref{app:algorithms} and proving the security of \sysname while supporting dynamic reads and writes.

\subsection{Algorithms Modifications} \label{sec:child-algorithms}
We specify the modifications to the algorithms of \Cref{app:algorithms} to support dynamic read and writes.

The main difference between \Cref{alg:core-child} and \Cref{alg:core} of \Cref{app:algorithms} is the removal of \Cref{alg:line:trigger-execution-freeing-read-locks}. Instead of immediately clearing the read locks after accessing the read set's objects, \Cref{alg:process-result} removes all read and write locks later.

The main change between \Cref{alg:process-ready-child} and \Cref{alg:process-ready} of Appendix~\ref{app:algorithms} is the rescheduling of $\tx$ upon discovering a dynamic object. The algorithm first calls $\Call{UppdateRWSet}{\tx\objectid'}$ \Cref{alg:line:update-rw-set-with-child} to update the read or write set of $\tx$ with the newly discovered object $\objectid'$ and then calls $\Call{RescheduleTx}{\tx,\objectid'}$ at \Cref{alg:line:reschedule-tx} to notify all concerned workers that the transaction needs to be re-scheduled for execution.

Finally, \Cref{alg:process-augmented-tx} updates the queue $\pendingdb[\objectid']$ to trigger re-execution of $\tx$ once $\objectid'$ is available.

\begin{algorithm}[t]
    \caption{Core functions (dynamic objects)}
    \label{alg:core-child}
    \algfontsize
    \begin{algorithmic}[1]
        \Function{TryTriggerExecution}{$\tx$} \label{alg:line:try-trigger-execution-child}
        \State // Check if all dependencies are already executed
        \If{$\Call{HasDependencies}{\tx}$} \Return \EndIf \label{alg:line:check-dependencies-child}
        \State
        \State // Check if all objects are present
        \State $M \gets \Call{MissingObjects}{\tx}$ \label{alg:line:check-missing-objects-child}
        \If{$M \neq \emptyset$}
        \For{$\objectid \in M$} $\missingdb[\objectid] \gets \missingdb[\objectid] \cup \tx$ \EndFor \label{alg:line:track-missing-child}
        \State \Return
        \EndIf
        \State
        \State // Send object data to a deterministically-selected \execworker
        \State $worker \gets \Call{Handler}{\tx}$ \Comment{Worker handling the most objects of $\tx$}
        \State $O \gets \{ \objectsdb[\objectid] \text{ s.t. } \objectid \in \Call{HandledObjects}{\tx} \}$ \Comment{May contain $\perp$} \label{alg:line:load-objects-child}
        \State $\ready \gets(\tx, O)$
        \State $\Call{Send}{worker, \ready}$ \label{alg:line:send-ready-child}
        \EndFunction
    \end{algorithmic}
\end{algorithm}

\begin{algorithm}[t]
    \caption{Process $\ready$ (dynamic objects)}
    \label{alg:process-ready-child}
    \algfontsize
    \begin{algorithmic}[1]
        \State $\objectsreceived \gets \{\}$ \Comment{Maps \tx to the object data it refefrernces (or $\perp$ if unavailable)}
        \Statex
        \Statex // Called by the \execworkers upon receiving a $\ready$.
        \Procedure{ProcessReady}{$\ready$}
        \State $(\tx, O) \gets \ready$
        \State $\objectsreceived[\tx] \gets \objectsreceived[\tx] \cup O$
        \If{$\len{\objectsreceived[\tx]} \neq \len{\readset{\tx}}+\len{\writeset{\tx}}$} \Return \EndIf \label{alg:line:wait-for-objects-child}
        \State
        \State $\result \gets (\tx, \emptyset, \emptyset)$ \label{alg:line:empty-result-child}
        \If{$!\Call{AbortExec}{\tx}$} \label{alg:line:can-execute-child}
        \State $r \gets \exec{\tx, \txdb[\tx]}$  \label{alg:line:execute-child}
        \If{$r = (\perp, \objectid')$}
        \State $\Call{UpdateRWSet}{\tx, \objectid'}$ \Comment{Update $\readset{\tx}$ or $\writeset{\tx}$ with $\objectid'$} \label{alg:line:update-rw-set-with-child}
        \State $\Call{RescheduleTx}{\tx, \objectid'}$ \Comment{Reschedule $\tx$ with discovered $\objectid'$} \label{alg:line:reschedule-tx}
        \State \Return
        \EndIf
        \State $(O, I) \gets r$ \Comment{$O$ to mutate and $I$ to delete}
        \For{$w \in n_e$}
        \State $O_w \gets \{ o \in O \text{ s.t. } \Call{Handler}{o}=w \}$
        \State $I_w \gets \{ \objectid \in I \text{ s.t. } \Call{Handler}{\objectid}=w \}$
        \State $\result \gets (\tx, O_w, I_w)$ \label{alg:line:send-result-child}
        \EndFor
        \EndIf
        \State $\Call{Send}{w, \result}$ \label{alg:line:send-result-2-child}
        \EndProcedure

        \Statex
        \Statex // Reschedule execution with discovered object.
        \Function{RescheduleTx}{$\tx, \objectid'$}
        \State $\augmentedtx \gets (\tx, \objectid')$
        \For{$w \in n_e$}
        \If{$\exists \objectid \in \tx \text{ s.t. } \Call{Handler}{\objectid}=w $}
        \State $\Call{Send}{w, \augmentedtx}$
        \EndIf
        \EndFor
        \EndFunction
    \end{algorithmic}
\end{algorithm}

\begin{algorithm}[t]
    \caption{Process $\augmentedtx$ (dynamic objects)}
    \label{alg:process-augmented-tx}
    \algfontsize
    \begin{algorithmic}[1]
        \Procedure{ProcessAugmentedTx}{$\augmentedtx$}
        \State $(\tx, \objectid') \gets augmentedtx$
        \If{$\objectid' \in \writeset{\tx}$}
        \State $\pendingdb[\objectid'] \gets \pendingdb[\objectid'] \cup (W, [\tx])$ \label{alg:line:schedule-write-child}
        \Else \Comment{$\objectid' \in \readset{\tx}$}
        \State $(op, T') \gets \pendingdb[\objectid'][-1]$
        \If{$op = W$}
        $\pendingdb[\objectid'] \gets \pendingdb[\objectid] \cup (R, [\tx])$  \label{alg:line:schedule-read-child}
        \Else
        $\; \pendingdb[\objectid'][-1] \gets (R, T' \cup \tx)$ \label{alg:line:schedule-parallel-read-child}
        \EndIf
        \EndIf
        \State
        \State // Try to execute the transaction
        \State $\Call{TryTriggerExecution}{\tx}$ \Comment{Defined in \Cref{alg:core}} \label{alg:line:trigger-execution-after-scheduling-child}
        \EndProcedure
    \end{algorithmic}
\end{algorithm}

\subsection{Proofs Modifications} \label{sec:child-proofs}
We specify the modifications to the proofs of \Cref{sec:proofs} to prove the serializability, determinism, and liveness (\Cref{sec:design}) of the dynamic reads and writes algorithm. The main modifications arise from the fact that \Cref{as:explicit-read-write-set} (\Cref{sec:proofs}) is not guaranteed in the dynamic reads and writes algorithm. We instead rely on \Cref{as:root-in-tx} below.

\begin{assumption}[Transaction References Root] \label{as:root-in-tx}
    If transaction $\tx$ dynamically accesses an object $\objectid'$, it explicitly references its root object $\objectid$.
\end{assumption}

The Sui MoveVM~\cite{sui-move} (used in our implementation) satisfies this assumption. As a result, this part of our design is specific to the Sui MoveVM and cannot directly generalize to other deterministic execution engines unless they implement it as well.

\para{Serializability}
We replace \Cref{lm:remove-after-access} (\Cref{sec:proofs}) with \Cref{lm:child:remove-after-process} below. The rest of the proof remains unchanged.
Intuitively, \sysname prevents the processing of conflicting transactions until all dynamic objects are discovered. This limits concurrency further than the base algorithms presented in \Cref{app:algorithms} but \Cref{sec:appendix-split-queues} shows how to alleviate this issue by indexing the queues $\pendingdb[\cdot]$ with versioned objects, that is, tuples of $(\objectid, version)$, rather than only object ids.

\begin{lemma}[Unlock after Processing] \label{lm:child:remove-after-process}
    If a transaction $\tx$ is removed from the pending queue of an object $\objectid$ then $\tx$ has already been processed (\Cref{def:processed-transaction} of \Cref{sec:proofs}).
\end{lemma}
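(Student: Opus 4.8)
The plan is to prove the lemma in the same constructive style as the original \Cref{lm:remove-after-access}, tracking the single structural change introduced by the dynamic-objects algorithms: the elimination of the early read-lock release. First I would observe that entries are only ever taken out of a pending queue through a call to $\Call{AdvanceLock}{\tx, \objectid}$ (\Cref{alg:line:advance-lock} of \Cref{alg:core}), so the only way for $\tx$ to leave $\pendingdb[\objectid]$ is via such a call. I would then enumerate the call sites of $\Call{AdvanceLock}{}$: in the base algorithms there are two, namely the read-lock release at \Cref{alg:line:advance-read-lock} of \Cref{alg:core} and the post-result release at \Cref{alg:line:advance-lock-call} of \Cref{alg:process-result}, whereas \Cref{alg:core-child} deliberately drops the former (this is exactly the removal of \Cref{alg:line:trigger-execution-freeing-read-locks} of \Cref{alg:core}). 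Hence, in the dynamic-objects variant the unique remaining call site is \Cref{alg:line:advance-lock-call}, inside $\Call{ProcessResult}{}$.

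The second step is to show that reaching \Cref{alg:line:advance-lock-call} implies that $\tx$ has been processed in the sense of \Cref{def:processed-transaction}. Here I would use two facts: $\Call{ProcessResult}{}$ runs only upon receiving a $\result$ referencing $\tx$, and such a message is emitted only after $\tx$ has either been executed (\Cref{alg:line:execute-child} of \Cref{alg:process-ready-child}) or aborted; and, within $\Call{ProcessResult}{}$, the update of $\objectsdb$ via $\Call{UpdateStores}{}$ at \Cref{alg:line:update-stores} strictly precedes the $\Call{AdvanceLock}{}$ call at \Cref{alg:line:advance-lock-call}. Composing these, by the time $\tx$ is removed from $\pendingdb[\objectid]$ it has already been executed-or-aborted and $\objectsdb$ has been updated accordingly, which is precisely what it means for $\tx$ to be processed.

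The one subtlety I would take care to address — and which I expect to be the main obstacle — is the rescheduling path of \Cref{alg:process-ready-child}: when execution discovers a new child object, the procedure returns at \Cref{alg:line:reschedule-tx} \emph{without} emitting a $\result$, so no $\Call{ProcessResult}{}$ (and thus no $\Call{AdvanceLock}{}$) is ever triggered for that partial attempt; $\tx$ is merely re-added to the new object's queue by $\Call{ProcessAugmentedTx}{}$ (\Cref{alg:process-augmented-tx}). I must argue that this branch never performs a queue removal, so it cannot falsify the lemma, and that removals occur only on a terminal outcome (a genuine successful execution with the full read/write set discovered, or an abort). Once this is pinned down, the claim follows. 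Finally I would note that this lemma is strictly stronger than \Cref{lm:remove-after-access} — ``removed only after processing'' implies ``removed only after access'' — so it slots directly into the downstream serializability argument as a drop-in replacement.
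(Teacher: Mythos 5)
Your proposal is correct and follows essentially the same route as the paper's own proof: the paper likewise observes that removal from $\pendingdb[\objectid]$ can only happen via $\Call{AdvanceLock}{\tx, \objectid}$, that in the dynamic-objects variant this call has a unique site at \Cref{alg:line:advance-lock-call} of \Cref{alg:process-result}, and that this site lies after the $\Call{UpdateStores}{}$ call that completes processing. Your additional care about the rescheduling branch of \Cref{alg:process-ready-child} (which returns without emitting a $\result$ and hence never removes $\tx$ from any queue) is a point the paper's terser proof leaves implicit, but it does not change the argument.
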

\begin{proof}
    We argue this lemma by construction of \Cref{alg:process-result}. By definition (\Cref{def:processed-transaction}), the processing of $\tx$ terminates at \Cref{alg:process-result} \Cref{alg:line:update-stores}. However, the only way $\tx$ can be removed from $\pendingdb[\objectid]$ is by a call to $\Call{AdvanceLock}{\tx, \objectid}$. This call only occurs at one place, at \Cref{alg:line:advance-lock-call} of that same algorithm, thus after finishing the processing of $tx$.
\end{proof}

The following corollary is a direct consequence of \Cref{lm:child:remove-after-process} and facilitates the proofs presented in the rest of the section.

\begin{corollary}[Simulateous Removal] \label{lm:child:remove-simultaneous}
    A transaction $\tx$ is removed from all queues at \Cref{alg:line:advance-lock-call} of \Cref{alg:process-result}.
\end{corollary}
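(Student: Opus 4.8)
The plan is to obtain the corollary as an immediate strengthening of \Cref{lm:child:remove-after-process}, which already pins down \emph{where} removals happen; what remains is to argue that this single site clears $\tx$ from \emph{every} queue it occupies in one shot. First I would invoke \Cref{lm:child:remove-after-process} to establish that the only operation that ever deletes $\tx$ from a pending queue $\pendingdb[\objectid]$ is a call to $\Call{AdvanceLock}{\tx,\objectid}$, and that in the dynamic-objects algorithms this call occurs exclusively at \Cref{alg:line:advance-lock-call} of \Cref{alg:process-result}. Crucially, in the dynamic setting the early read-lock release of \Cref{alg:line:trigger-execution-freeing-read-locks} has been removed from \Cref{alg:core-child}, so there is no competing deletion site along the $\ready$ path; all removals are funnelled through \Cref{alg:process-result}.

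Next I would observe that \Cref{alg:line:advance-lock-call} sits inside the loop over all $\objectid \in \tx$ in $\Call{ProcessResult}{\cdot}$, so one execution of that loop issues $\Call{AdvanceLock}{\tx,\objectid}$ for every object in $\tx$'s current read/write set, deleting $\tx$ from each corresponding queue. It then remains to check that this loop runs exactly once for $\tx$. In the dynamic algorithm every reschedule returns before emitting a $\result$ (the early return after $\Call{RescheduleTx}{\tx,\objectid'}$ at \Cref{alg:line:reschedule-tx}), so a $\result$ for $\tx$ is produced only by its terminal execution or abort; $\Call{ProcessResult}{\cdot}$ is therefore triggered for $\tx$ by a single $\result$, and all of $\tx$'s removals coincide at \Cref{alg:line:advance-lock-call} within that one invocation. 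The argument is uniform whether $\tx$ commits or aborts, since both outcomes emit a $\result$ and drive the same single pass of the loop.

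The one point requiring care---and the main obstacle---is to confirm that the index set of the loop, $\{\objectid : \objectid \in \tx\}$, coincides exactly with the set of queues in which $\tx$ has been enqueued, \emph{including} any queues created for dynamically discovered child objects. I would close this gap with an augmentation invariant: whenever $\tx$ is inserted into a new queue $\pendingdb[\objectid']$ for a child object via $\Call{ProcessAugmentedTx}{\cdot}$ (\Cref{alg:process-augmented-tx}), that insertion is preceded by a call to $\Call{UpdateRWSet}{\tx,\objectid'}$ (\Cref{alg:line:update-rw-set-with-child}), which first adds $\objectid'$ to $\tx$'s read or write set. Thus membership in a queue and membership in $\tx$'s (augmented) read/write set are kept in lock-step, and since rescheduling only \emph{adds} $\tx$ to queues and never removes it (by \Cref{lm:child:remove-after-process}, removal requires prior processing), by the time $\tx$ finally yields a $\result$ its read/write set enumerates precisely the queues it occupies. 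The loop at \Cref{alg:line:advance-lock-call} therefore clears $\tx$ from all of them simultaneously, establishing the corollary.
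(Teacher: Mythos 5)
Your proposal is correct and takes essentially the same route as the paper: both reduce the corollary to \Cref{lm:child:remove-after-process}, observing that the only operation ever removing $\tx$ from a queue is $\Call{AdvanceLock}{\tx, \objectid}$ and that, with \Cref{alg:line:trigger-execution-freeing-read-locks} removed in the dynamic setting, this call occurs at the single site \Cref{alg:line:advance-lock-call} of \Cref{alg:process-result}. Your additional verifications---that a $\result$ for $\tx$ is emitted only once since every reschedule returns before producing one, and that $\Call{UpdateRWSet}{\tx, \objectid'}$ keeps the loop's index set $\{\objectid : \objectid \in \tx\}$ in lock-step with the queues $\tx$ actually occupies---merely make explicit details the paper's one-line proof leaves implicit.
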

\begin{proof}
    We observe that the proof of \Cref{lm:child:remove-after-process} states that the only way to remove $\tx$ from a queue is by calling $\Call{AdvanceLock}{T, \objectid}$ and that this call occurs only at one place, at \Cref{alg:line:advance-lock-call} of \Cref{alg:process-result}.
\end{proof}

\para{Determinism}
The call to $exec(\tx, O)$ at \Cref{alg:line:execute} of \Cref{alg:process-ready} only completes when all objects dynamically accessed by $\tx$ are provided by the set $O$ or are specified as $\perp$. Since objects are uniquely identified by id, we need to show that all honest validators discover the same set of dynamically accessed objects.

\begin{lemma}[Consistent Dynamic Execution] \label{lm:child-exec-same-inputs}
    If a correct validator successfully calls $exec(\tx,O)$ with an dynamically accessed object $\object' \in O \text{ s.t. } \object' \neq \perp$ then no correct validators calls $exec(\tx,O)$ with $\object' = \perp$.
\end{lemma}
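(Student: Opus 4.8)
The plan is to argue by contradiction, piggybacking on the inductive determinism hypothesis used to patch \Cref{th:determinism} for dynamic objects: namely that the prefix $[\tx_1,\dots,\tx_{\Call{Index}{\tx}-1}]$ leaves every correct validator in an identical object store $\objectsdb$. Suppose a correct validator $V_1$ computes $exec(\tx,O)$ with $\object' \in O$ and $\object' \neq \perp$, while a correct validator $V_2$ computes $exec(\tx,O)$ with $\object' = \perp$ for the same dynamically accessed child $\objectid'$. I will derive a contradiction by showing that the presence or absence of $\objectid'$ at the moment $\tx$ reads it is a deterministic function of the (by hypothesis identical) post-prefix state, hence cannot differ between $V_1$ and $V_2$.

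First I would invoke \Cref{as:root-in-tx}: the root $\objectid$ of $\objectid'$ is explicitly referenced by $\tx$, so $\objectid$ lies in $\tx$'s static set and $\tx$ is placed in $\pendingdb[\objectid]$ in sequence order by \Cref{lm:queue-order}. The structural invariant I will then exploit is that, within a parent-child hierarchy, no transaction can create, mutate, or delete the child $\objectid'$ without referencing its root $\objectid$; consequently every predecessor of $\tx$ (in the committed order) that can alter the existence of $\objectid'$ also sits in $\pendingdb[\objectid]$ ahead of $\tx$. By \Cref{lm:child:remove-after-process}, the replacement of \Cref{lm:remove-after-access}, such a predecessor is removed from $\pendingdb[\objectid]$ only once it has been fully processed, so by the time $\tx$ reaches the head of $\pendingdb[\objectid]$ and is dispatched for execution, all of them have been processed and their effects on $\objectid'$ applied.

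Next I would treat the dynamic discovery itself. Because individual execution is deterministic (\Cref{as:deterministic-execution}) and the two post-prefix stores coincide, $V_1$ and $V_2$ reach the same partial-execution point and derive the same child identifier $\objectid'$; the rescheduling of \Cref{alg:process-ready-child} (\Cref{alg:line:reschedule-tx}) together with \Cref{alg:process-augmented-tx} then inserts the augmented $\tx$ into $\pendingdb[\objectid']$ identically at both validators, and $\tx$ re-attempts execution only once it is at the head of that queue (the same argument applies at each successive rescheduling round). At that head position the $\perp$ value for $\objectid'$ can only be emitted when $\objectsdb[\objectid'] = \none$ and the execution watermark has already surpassed $\Call{Index}{\tx}-1$, since otherwise the guard in $\Call{MissingObjects}{\tx}$ would force $\tx$ to wait. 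Hence $\object' = \perp$ at $V_2$ forces $\objectid'$ to be absent from $V_2$'s post-prefix store, while $\object' \neq \perp$ at $V_1$ forces it to be present in $V_1$'s post-prefix store, contradicting the induction hypothesis that these stores are identical.

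The main obstacle is the third step: pinning the dynamically discovered access of $\objectid'$ to a single, schedule-consistent read point across validators. Unlike the static setting covered by \Cref{as:explicit-read-write-set}, $\objectid'$ is unknown when the queues are first built, so I must show that the reschedule-then-wait mechanism places $\tx$ at a position in $\pendingdb[\objectid']$ that is both identical at every correct validator and strictly after every $\objectid'$-affecting predecessor. This is exactly where \Cref{as:root-in-tx} carries the argument: it guarantees that any predecessor touching $\objectid'$ is serialized through the already-correct root queue $\pendingdb[\objectid]$, so the child queue inherits a deterministic ordering rather than introducing a fresh, potentially divergent one.
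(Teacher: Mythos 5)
Your proof is correct, and it shares the paper's backbone but closes the contradiction at a different level. Both arguments rest on the same two pillars: \Cref{as:root-in-tx} forces every transaction that can touch the child $\objectid'$ to be serialized through the root queue $\pendingdb[\objectid]$ (via \Cref{lm:queue-order} and \Cref{lm:same-queues}), and \Cref{lm:child:remove-after-process} guarantees such predecessors stay in that queue until fully processed, so all of them are processed before $\tx$ can be dispatched. From there the paper argues at the \emph{scheduler} level: it names the specific transaction $\tx_i$ that created $\object'$ at validator $A$, observes that $B$ seeing $\perp$ means $B$ dispatched $\tx$ while $\tx_i$ was still unprocessed in $\pendingdb[\objectid]$, and contradicts the head-of-queue check $\Call{HasDependencies}{\tx}$ together with \Cref{lm:access-then-execute}. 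You argue at the \emph{state} level: since all child-affecting predecessors are processed at dispatch time at both validators, the presence or absence of $\objectid'$ is a deterministic function of the common prefix state, so the two validators cannot disagree. What each buys: the paper's contradiction is purely local, needs no named induction hypothesis, and so lets the lemma read as unconditional; your version instead makes explicit a dependence the paper leaves implicit---its inference that $B$ ``called $exec$ before processing $\tx_i$'' from $B$ seeing $\perp$ already presupposes that processing $\tx_i$ at $B$ would create $\object'$, i.e., prefix determinism. The price is that you prove the lemma only conditionally on the determinism induction hypothesis of \Cref{th:determinism}; since \Cref{lm:child-exec-same-inputs} is invoked solely inside that induction, this suffices for the paper's purposes, but you should state the conditional form explicitly to avoid the appearance of circularity.
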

\begin{proof}
    Let's assume by contradiction that a correct validator $A$ calls $exec(\tx_j,O)$ (\Cref{alg:line:execute-child} of \Cref{alg:process-ready-child}) with $\object' \in O \text{ s.t. } \object' \neq \perp$ while another correct validator $B$ calls $exec(\tx_j,O)$ with $\object' = \perp$.
    This means that validator $A$ called $exec(\tx_j,O)$ after processing a previous transaction $\tx_i$ that created $\object'$, and that validator $B$ called $exec(\tx_j,O)$ before processing $\tx_i$. By \Cref{as:root-in-tx} both transactions $\tx$ and $\tx'$ conflict on the root of $\object'$, named $\object$, and \Cref{lm:queue-order} (\Cref{sec:proofs}) ensures that they are both placed in the same queue $\pendingdb[\objectid]$ (with $\objectid=\Call{Id}{\object}$). \Cref{lm:same-queues} (\Cref{sec:proofs}) ensures that both validator hold $\tx_j$ and $\tx_i$ in the same order in $\pendingdb[\objectid]$, and since validator $A$ processed $\tx_i$ before $\tx_j$, it means that both validators placed $\tx_i$ in the queue $\pendingdb[\objectid]$ before $\tx_j$.
    However \Cref{lm:child:remove-after-process} ensures that $\tx_i$ is not removed from $\pendingdb[\objectid]$ until processed and thus that validator $B$ executed $\tx_j$ despite $\tx_i$ is still in $\pendingdb[\objectid]$. However check $\Call{HasDependencies}{\tx_j}$ (\Cref{alg:line:check-dependencies-child} of \Cref{alg:core-child}) prevents  $\tx_j$ from accessing object $\object$ (since it is not at the head of the queue $\pendingdb[\objectid]$). This is a contradiction of \Cref{lm:access-then-execute} (\Cref{sec:proofs}) stating that $\tx_j$ cannot be executed before accessing $\object$.
    Since $\object'$ was chosen arbitrarily, the same reasoning applies to all objects dynamically accessed by $\tx_j$.
\end{proof}

\Cref{lm:child-exec-same-inputs} replaces the reliance on \Cref{as:explicit-read-write-set} in the proof of \Cref{th:determinism} (\Cref{sec:proofs}).

\para{Liveness}
\Cref{lm:eventual-processing} of \Cref{sec:proofs} assumes that all calls to $exec(\tx, \cdot)$ are infallible. However, supporting dynamic objects requires us to modify \Cref{alg:process-ready} as indicated in \Cref{alg:process-ready-child} and make $exec(\tx, O)$ fallible. The final \Cref{lm:child:unlock-after-process} in this paragraph proves that this change does not compromise liveness, since all dynamically accessed objects are eventually discovered, and thus all calls to $exec(\tx, \cdot)$ eventually succeed.

\begin{lemma}[Mirrored Dynamic Object Schedule] \label{lm:same-root-order}
    If a transaction $\tx_j$ is placed in a queue $\pendingdb[\objectid']$ of a dynamically accessed object $\objectid'$ after a transaction $\tx_i$, then $\tx_j$ is also placed in the queue $\pendingdb[\objectid]$ of the root object $\objectid$ after $\tx_i$.
\end{lemma}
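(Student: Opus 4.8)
The plan is to argue by contradiction, exploiting the fact that a transaction can be inserted into a child queue only while it occupies the head of its root queue. First I would set up the two queues: by \Cref{as:root-in-tx}, since both $\tx_i$ and $\tx_j$ dynamically access the child $\objectid'$, they each explicitly reference its root $\objectid$, so both are placed in $\pendingdb[\objectid]$ during the initial scheduling, and both are later inserted into $\pendingdb[\objectid']$ only through \Call{ProcessAugmentedTx}{} (\Cref{alg:process-augmented-tx}). The crucial observation is that this insertion is triggered by \Call{RescheduleTx}{} inside \Call{ProcessReady}{} (\Cref{alg:process-ready-child}), which runs only after a \ready referencing the transaction has been created (\Cref{alg:line:send-ready-child} of \Cref{alg:core-child}); by the \Call{HasDependencies}{} check (\Cref{alg:line:check-dependencies-child} of \Cref{alg:core-child}) this requires the transaction to be at position $0$ of every queue it occupies, in particular of $\pendingdb[\objectid]$. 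Combined with \Cref{lm:child:remove-after-process}, which guarantees that a transaction leaves a queue only after it is fully processed, a transaction therefore stays at the head of $\pendingdb[\objectid]$ from its first execution attempt until it is processed.

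Next I would run the contradiction. Suppose $\tx_j$ is placed after $\tx_i$ in $\pendingdb[\objectid']$ (the hypothesis) but, contrary to the claim, strictly before $\tx_i$ in $\pendingdb[\objectid]$. Since $\tx_j$ precedes $\tx_i$ in the root queue and, by \Cref{lm:child:remove-after-process}, leaves $\pendingdb[\objectid]$ only once processed, $\tx_i$ can reach the head of $\pendingdb[\objectid]$---and thus be inserted into $\pendingdb[\objectid']$---only after $\tx_j$ has been fully processed. But $\tx_j$ can be processed only after reaching the head of every queue it occupies, including $\pendingdb[\objectid']$, so $\tx_j$ must already have been inserted into $\pendingdb[\objectid']$ before that moment, i.e. before $\tx_i$ was inserted there. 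As queue order follows insertion order (entries are appended in \Cref{alg:process-augmented-tx}), $\tx_j$ then precedes $\tx_i$ in $\pendingdb[\objectid']$, contradicting the hypothesis. This closes the argument whenever $\tx_i$ and $\tx_j$ occupy distinct positions in $\pendingdb[\objectid]$.

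The main obstacle I anticipate is the parallel-read corner case, where $\tx_i$ and $\tx_j$ could share a single $(R,\cdot)$ entry at the head of $\pendingdb[\objectid]$ and execute concurrently, so that neither strictly precedes the other and the contradiction step does not bite. I would dispose of this by noting that a strict ``after'' in $\pendingdb[\objectid']$ (rather than the two being merged into one parallel-read entry by \Cref{alg:process-augmented-tx}) forces a write on $\objectid'$---either by one of $\tx_i,\tx_j$ or by an intervening transaction. By the child-access permission rule of \Cref{sec:dynamic} (writing $\objectid'$ requires write permission on, hence a reference to, the root $\objectid$), that writer conflicts on $\objectid$ and must sit in a write entry of $\pendingdb[\objectid]$, which splits any shared parallel-read entry and places $\tx_i$ and $\tx_j$ at distinct positions; \Cref{lm:queue-order} and \Cref{lm:access-order} then fix their relative order and reduce this case to the distinct-position argument above. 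I expect that making this read/write bookkeeping precise---possibly by recording the write-set strengthening of \Cref{as:root-in-tx} explicitly---is where the real care is required, rather than in the core contradiction.
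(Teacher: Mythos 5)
Your core argument (first two paragraphs) is essentially the paper's own proof: both rest on the same two facts, namely that a transaction can be inserted into a child queue $\pendingdb[\objectid']$ only while it sits at the head of the root queue $\pendingdb[\objectid]$ (because the \ready that triggers discovery is gated by \Call{HasDependencies}{$\cdot$}, cf.\ \Cref{lm:access-then-execute}), and that a transaction leaves a queue only once it is processed (\Cref{lm:child:remove-after-process}). You merely run the deduction in the opposite direction --- from the negated conclusion ($\tx_j$ before $\tx_i$ in the root queue) to a contradiction with the child-queue hypothesis, whereas the paper goes from $\tx_i$'s presence in the child queue to $\tx_i$ preceding $\tx_j$ in the root queue --- which is a cosmetic difference.

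Your third paragraph, however, goes beyond the paper. The paper's proof implicitly treats ``placed after'' as a strict order and never considers the case where $\tx_i$ and $\tx_j$ share a single parallel-read entry $(\Read,[\cdot])$ at the head of the root queue; in that case both pass \Call{HasDependencies}{$\cdot$} simultaneously, execute concurrently, and can be inserted into the child queue in either order, so the main contradiction indeed ``does not bite.'' You are right that this is the only soft spot, and your fix --- a strict separation in $\pendingdb[\objectid']$ forces a writer of $\objectid'$, and writing a child requires write permission on (hence a write entry for) the root, which splits the shared read entry --- is the natural one. Be aware, though, that this permission rule is only stated informally in \Cref{sec:dynamic} and is not captured by \Cref{as:root-in-tx}, so this step of your argument rests on a semantic property of the Sui object model that the paper's formal assumptions do not pin down; the paper's own proof simply ignores the case. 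Making that rule an explicit assumption, as you suggest, would actually strengthen the paper's proof as well.
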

\begin{proof}
    Let's assume by contradiction that $\tx_j$ is placed in the queue $\pendingdb[\objectid']$ of a dynamically accessed object $\objectid'$ after a transaction $\tx_i$ but before $\tx_i$ is placed in the queue $\pendingdb[\objectid]$ of the root object $\objectid$.
    By construction, $\tx_i$ can only discover $\objectid'$ upon execution (\Cref{alg:line:execute-child} of \Cref{alg:process-ready-child}). However, \Cref{lm:access-then-execute} ensures that $\tx_i$ cannot be executed before accessing $\objectid$. This means $\tx_i$ access $\objectid$ despite $\tx_j$ is already in the queue $\pendingdb[\objectid']$.
    This is however a contradiction of check $\Call{HasDependencies}{\tx_i}$ (\Cref{alg:line:check-dependencies-child} of \Cref{alg:core-child}) ensuring that $\tx_i$ is at head of $\pendingdb[\objectid]$ and thus placed in that queue before $\tx_j$.
\end{proof}

\begin{lemma}[Dynamic Access at Head of Queue] \label{lm:child:head-of-queue}
    When discovering a dynamically accessed object $\objectid'$ by executing transaction $\tx_j$ and adding $\tx_j$ to queue of $\pendingdb[\objectid']$, $\tx_j$ is at the head of the queue $\pendingdb[\objectid']$.
\end{lemma}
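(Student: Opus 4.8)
The plan is to argue by contradiction. Suppose that at the instant $\tx_j$ discovers the dynamically accessed object $\objectid'$ (line \Cref{alg:line:execute-child} of \Cref{alg:process-ready-child}) and is inserted into $\pendingdb[\objectid']$, it is \emph{not} at the head; then some transaction $\tx_k$ already occupies an earlier position in that queue. I would derive a contradiction by transporting this assumed ordering from the child queue $\pendingdb[\objectid']$ to the root queue $\pendingdb[\objectid]$, where $\objectid$ is the root object of $\objectid'$.

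First I would pin down the root and the execution state of $\tx_j$. By \Cref{as:root-in-tx}, since $\tx_j$ dynamically accesses $\objectid'$ it explicitly references the root $\objectid$, so $\tx_j$ is present in $\pendingdb[\objectid]$. Because the discovery of $\objectid'$ occurs \emph{during} the execution of $\tx_j$, transaction $\tx_j$ is being executed at this moment; by \Cref{lm:access-then-execute} it can only be executed after accessing $\objectsdb[\objectid]$, which requires the dependency check (\Cref{alg:line:check-dependencies-child}) to have passed. Hence $\tx_j$ is currently at the head of $\pendingdb[\objectid]$.

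Next I would relate the two queues. Since $\tx_k$ precedes $\tx_j$ in $\pendingdb[\objectid']$, \Cref{lm:same-root-order} yields that $\tx_k$ also precedes $\tx_j$ in the root queue $\pendingdb[\objectid]$. It remains to show $\tx_k$ has not already departed from the root queue: as $\tx_k$ is still present in $\pendingdb[\objectid']$ (it is, by assumption, the transaction sitting ahead of $\tx_j$), \Cref{lm:child:remove-simultaneous} — a transaction leaves all of its queues at the same line \Cref{alg:line:advance-lock-call} — guarantees $\tx_k$ is likewise still present in $\pendingdb[\objectid]$. Then $\tx_j$ has a strict predecessor $\tx_k$ in $\pendingdb[\objectid]$, contradicting that $\tx_j$ is at its head; this closes the argument.

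The main obstacle is not any single cited lemma but the need to chain two distinct invariants correctly: presence of $\tx_k$ in the child queue must certify presence in the root queue (which needs the \emph{simultaneous} removal statement \Cref{lm:child:remove-simultaneous}, not merely \Cref{lm:child:remove-after-process}), while the relative ordering of $\tx_k$ and $\tx_j$ must be shown to be inherited by the root queue (via \Cref{lm:same-root-order}). Only when both facts hold at the same instant do they jointly contradict $\tx_j$ being at the head of $\pendingdb[\objectid]$, so the care lies in verifying that all of these hold simultaneously at the discovery moment rather than at unrelated points in the schedule.
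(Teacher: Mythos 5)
Your proof is correct and takes essentially the same route as the paper's: a contradiction argument that transports the assumed predecessor from the child queue $\pendingdb[\objectid']$ to the root queue $\pendingdb[\objectid]$ using \Cref{as:root-in-tx} and \Cref{lm:same-root-order}, certifies the predecessor's continued presence in the root queue via \Cref{lm:child:remove-simultaneous}, and then contradicts the $\Call{HasDependencies}{\cdot}$ check at \Cref{alg:line:check-dependencies-child}. The only cosmetic difference is that you make $\tx_j$'s head-of-root-queue status explicit through \Cref{lm:access-then-execute}, whereas the paper reaches the same contradiction by observing directly that $\tx_j$ was executed while the predecessor was still in $\pendingdb[\objectid]$.
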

\begin{proof}
    Let's assume by contradiction that there exists a transaction $\tx_i$ is at the head of $\pendingdb[\objectid']$ while adding $\tx_j$ to the queue $\pendingdb[\objectid']$.
    By \Cref{as:root-in-tx}, both transactions $\tx_i$ and $\tx_j$ conflict on the root of $\objectid'$, named $\objectid$, and \Cref{lm:queue-order} (\Cref{sec:proofs}) ensures that they are both placed in the same queue $\pendingdb[\objectid]$. Given that $\tx_i$ is at the head of $\pendingdb[\objectid']$ and that \Cref{lm:child:remove-simultaneous} states that transactions are removed from all queues at the same call, $\tx_i$ is also present in the queue $\pendingdb[\objectid]$. Furthermore, since $\tx_i$ is placed in $\pendingdb[\objectid']$ before $\tx_j$, \Cref{lm:same-root-order} ensures that $\tx_i$ is also placed in the queue $\pendingdb[\objectid]$ before $\tx_j$.
    Since the discovery of the dynamic object $\objectid'$ can only occur upon executing a transaction accessing it (at \Cref{alg:line:execute-child} of \Cref{alg:process-ready-child}) and $\tx_i$ is placed in $\pendingdb[\objectid]$ before $\tx_j$, it means that \sysname executed $\tx_j$ while $\tx_i$ is still in the queue $\pendingdb[\objectid]$. This is a direct contradiction of check $\Call{HasDependencies}$ (\Cref{alg:line:check-dependencies-child} of \Cref{alg:core-child}).
\end{proof}

\begin{lemma}[Unlock after Processing] \label{lm:child:unlock-after-process}
    All objects dynamically accessed by $\tx$ are eventually discovered. That is, eventually $exec(\tx, \cdot) \neq (\perp, \cdot)$.
\end{lemma}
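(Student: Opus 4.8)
The plan is to show that the ``execute, discover a new child object, reschedule, re-execute'' loop terminates, so that after finitely many re-executions every dynamically accessed object has been folded into $\tx$'s read/write set and $\exec{\tx, O}$ returns a genuine result rather than $(\perp, \cdot)$. The argument rests on three observations: the set of discoverable child objects is finite, each iteration of the loop discovers a genuinely new one, and each rescheduling reliably leads to a fresh execution attempt.

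First I would fix the finite set $D$ of child objects that $\tx$ can dynamically access. By \Cref{as:root-in-tx}, every such object lies in a parent--child hierarchy whose root is explicitly referenced by $\tx$; since a transaction references finitely many objects and each hierarchy is finite, $D$ is finite. Second, I would argue that each call to $\exec{\tx, O}$ returning $(\perp, \objectid')$ reveals an object $\objectid' \in D$ not previously present in $\readset{\tx} \cup \writeset{\tx}$. This uses \Cref{as:deterministic-execution}: execution is deterministic, so once $\Call{UpdateRWSet}{\tx, \objectid'}$ inserts $\objectid'$ into the read/write set (\Cref{alg:line:update-rw-set-with-child} of \Cref{alg:process-ready-child}), the subsequent re-execution supplies $\objectid'$ in $O$ and proceeds past the corresponding access point, so $\objectid'$ is never rediscovered. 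Hence at most $|D|$ executions can return $(\perp, \cdot)$.

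The crux is the third step: showing that every reschedule is eventually followed by a new execution attempt, so the loop advances rather than stalling. When $\exec{\tx, O}$ discovers $\objectid'$, $\tx$ was at the head of every queue it already occupied --- this is precisely the precondition enforced by $\Call{HasDependencies}{\tx}$ (\Cref{alg:line:check-dependencies-child} of \Cref{alg:core-child}) that allowed the $\ready$ to be issued in the first place. Since $\tx$ returned $(\perp, \cdot)$ rather than producing a result, it has not been processed, so by \Cref{lm:child:remove-simultaneous} it has not been removed from any of those queues and remains at their heads. The rescheduling (\Cref{alg:process-augmented-tx}) then inserts $\tx$ into $\pendingdb[\objectid']$, and \Cref{lm:child:head-of-queue} guarantees that $\tx$ lands at the \emph{head} of this new queue. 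Thus $\tx$ is simultaneously at the head of \emph{all} its queues, the $\Call{HasDependencies}{\tx}$ check at \Cref{alg:line:trigger-execution-after-scheduling-child} again returns \textbf{False}, and $\tx$ re-attempts execution; reliable message delivery (network model of \Cref{sec:model}) ensures the broadcast $\augmentedtx$ reaches each worker handling an object of $\tx$, each re-issues its $\ready$, and the designated executor re-runs $\exec{\tx, \cdot}$. Combining the three steps, after at most $|D|$ re-executions no new object can be returned, so eventually $\exec{\tx, \cdot} \neq (\perp, \cdot)$.

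I expect the third step to be the main obstacle, because it is where one must verify that rescheduling introduces \emph{no} new blocking dependency and that execution reliably resumes across multiple workers. The delicate point is that adding $\tx$ to a fresh queue could in principle place it behind another transaction; \Cref{lm:child:head-of-queue} is exactly the ingredient that rules this out, and it must be coupled with \Cref{lm:child:remove-simultaneous} to conclude that $\tx$ simultaneously heads every queue. The remaining coordination --- that the newly discovered object's data is loadable and its $\ready$ delivered --- follows from the same mechanisms invoked in the base liveness proof (\Cref{lm:process-if-ready}), so no separate machinery is needed.
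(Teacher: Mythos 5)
Your proof is correct, and its crux---showing that a failed execution leaves $\tx$ at the head of every queue it already occupies (you invoke \Cref{lm:child:remove-simultaneous}) and that \Cref{lm:child:head-of-queue} places $\tx$ at the head of the newly created queue $\pendingdb[\objectid']$, so execution is re-attempted---is exactly the paper's argument, which combines \Cref{lm:child:remove-after-process} with \Cref{lm:child:head-of-queue} in the same way. What you add is an explicit termination argument: the set of discoverable child objects is finite, and by determinism (\Cref{as:deterministic-execution}) each call to $exec(\tx,\cdot)$ returning $(\perp,\objectid')$ reveals a genuinely new object that is never rediscovered once folded into the read/write set, so at most finitely many iterations can fail. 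The paper leaves precisely this progress step implicit, jumping from ``$\tx$ is ready for re-execution'' to ``all dynamically accessed objects are eventually discovered,'' so your version is more complete on the very point the lemma asserts; everything else is the same route.
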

\begin{proof}
    \Cref{lm:child:remove-after-process} ensures that when $exec(\tx, \cdot)$ (\Cref{alg:line:execute-child} of \Cref{alg:process-ready-child}) returns $(\perp, \objectid')$, $\tx$ remains at the head of the queue $\pendingdb[\objectid]$ (for any $\objectid \in \tx$). By construction, this only happens when $\tx$ discovers a dynamic access to object $\objectid'$; $\tx$ is then added to the queue $\pendingdb[\objectid']$ (\Cref{alg:process-augmented-tx}).
    \Cref{lm:child:head-of-queue} ensures that $\tx$ is at the head of the queue $\pendingdb[\objectid']$ and thus ready for execution by referencing the newly discovered object $\objectid'$. As a result, all dynamically accessed objects are eventually discovered, and thus $exec(\tx, \cdot) \neq (\perp, \cdot)$.
\end{proof}

\subsection{Versioned Queues Scheduling}
\label{sec:appendix-split-queues}
This section shows the necessary changes to the  algorithms of \Cref{sec:child-algorithms} and data structures of \Cref{app:algorithms} to move from per-object queues to per-object-version queues.
A prerequisite for this is versioned storage of the object data itself,
that is, $\objectsdb$ should be a map $\objectsdb[\objectid, \objectversion] \rightarrow o$ instead of $\objectsdb[\objectid] \rightarrow o$,
which keeps old object versions for as long as they are referenced.
Given this, a transaction only writing (not reading) an object does not have to wait on any transaction reading the previous version.
An example of this new queueing system can be seen in \Cref{fig:split_queues_example}.
Also, the resulting dependencies between transactions can be seen in \Cref{fig:split_queues_example_happens_before}.
Without per-version queues all five transactions would have to be executed sequentially instead.

\begin{figure}[t]
    \centering
   \includegraphics[width=0.6\columnwidth]{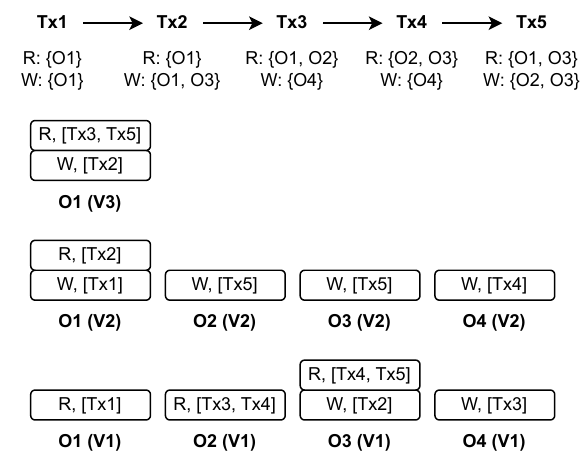}
    \caption{Example of per-object-version queues, the same transactions as in \Cref{fig:pending}.}
    \label{fig:split_queues_example}
\end{figure}

\begin{figure}[t]
    \centering\includegraphics[width=0.45\columnwidth]{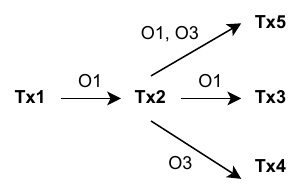}
    \caption{Example of the happens-before/waiting-on relationship resulting from the per-object-version queues of \Cref{fig:split_queues_example}.
        Edge labels indicate which object is responsible for the dependency.}
    \label{fig:split_queues_example_happens_before}
\end{figure}

One observation with these per-version queues is that each queue now only contains a single writing transaction (at the very beginning of the queue) and possibly many reading transactions following it.
This makes it straightforward to keep track of dependencies between transactions directly,
without explicitly creating the queues.
We use two maps for this:
(a) $\textsc{CurrentWriter}[\objectid] \rightarrow \txidx$ keeps track of which transaction is writing the most recent version of any object,
and (b) $\textsc{WaitingOn}[\txidx] \rightarrow [\txidx]$ keeps for each transaction a list of transactions currently writing object versions it depends on.
Additionally, a reverse mapping $\textsc{WaitedOnBy}$ can be used to enable fast deletion from $\textsc{WaitingOn}$.
Once $\textsc{WaitingOn}$ is empty for a transaction, it is ready for execution.
When enqueueing a transaction, we can check $\textsc{CurrentWriter}$ for all objects it reads to see which other transactions it needs to wait on.
This process is shown in detail in \Cref{alg:modified-process-propose} and \Cref{alg:modified-core}.

\begin{algorithm}[t!]
    \caption{Process $\propose$ (Split-Queues)}
    \label{alg:modified-process-propose}
    \algfontsize
    \begin{algorithmic}[1]
        \State $\loadedIdx \gets 0$ \Comment{All batch indices below this watermark are received}
        \State $\loaded \gets [\;]$ \Comment{Received batch indices}
        \Statex
        \Statex // Called by \execworkers upon receiving a $\propose$.
        \Procedure{ProcessPropose}{$\propose$}
        \State // Ensure we received one message per \seqworker
        \State $(\batchidx, \batchid, T) \gets \propose$
        \State $\loaded[\batchidx] \gets \loaded[\batchidx] \cup (\batchid, T)$
        \While{$\len{\loaded[\loadedIdx]} = n_s$}
        \State $(\_, T) \gets \loaded[\loadedIdx]$
        \State $\loadedIdx \gets \loadedIdx+1$
        \State
        \State // Add the objects to their pending queues
        \For{$\tx \in T$}
        \For{$\objectid \in \Call{HandledObjects}{\tx}$} \Comment{Defined in \Cref{alg:core}}
        \State $\tx' \gets \textsc{CurrentWriter}[\objectid]$
        \If{$\objectid \in \writeset{\tx}$}
        \State $\textsc{CurrentWriter}[\objectid] \gets \tx$
        \EndIf
        \If{$\objectid \in \readset{\tx}$}
        \State $\textsc{WaitingOn}[\tx'] \gets \textsc{WaitingOn}[\tx'] \cup \{\tx\}$
        \State $\textsc{WaitedOnBy}[\tx] \gets \textsc{WaitedOnBy}[\tx] \cup \{\tx'\}$
        \EndIf
        \State
        \State // Try to execute the transaction
        \State $\Call{TryTriggerExecution}{\tx}$ \Comment{Defined in \Cref{alg:core}}
        \EndFor
        \EndFor
        \EndWhile
        \EndProcedure
    \end{algorithmic}
\end{algorithm}

\begin{algorithm}[t!]
    \caption{Core functions (Split-Queues, only modified shown)}
    \label{alg:modified-core}
    \algfontsize
    \begin{algorithmic}[1]
        \State $\executedIdx \gets 0$ \Comment{All $\tx$ indices below this watermark are executed}
        \State $\executed \gets \emptyset$ \Comment{Executed transaction indices}

        \Statex
        \Function{HasDependencies}{$\tx$}
        \State \Return $\textsc{WaitingOn}[\tx] \neq \emptyset$
        \EndFunction

        \Statex
        \Function{AdvanceLock}{$\tx, \objectid$}
        \State // Cleanup the pending queue
        \State $T \gets \emptyset$
        \For{$\objectid \in \writeset{\tx}$}
        \If{$\textsc{CurrentWriter}[\objectid] = \tx$} \Comment{$\tx$ is still the most recent write}
        \State $\textsc{CurrentWriter}[\objectid] \gets \bot$
        \EndIf
        \EndFor
        \For{$\tx' \in \textsc{WaitedOnBy}[\tx]$}
        \State $\textsc{WaitingOn}[\tx'] \gets \textsc{WaitingOn}[\tx'] \setminus \{\tx\}$
        \State $\textsc{WaitedOnBy}[\tx] \gets \textsc{WaitedOnBy}[\tx] \setminus \{\tx'\}$
        \If{$\textsc{WaitingOn}[\tx'] = \emptyset$}
        \State $T \gets T \cup \{\tx'\}$
        \EndIf
        \EndFor
        \State \Return $T$
        \EndFunction
    \end{algorithmic}
\end{algorithm}
\fi

\end{document}